\definecolor{mygreen}{RGB}{59, 168, 59}
\tikzset{
simpleNodes/.style={
  circle,
  inner sep=0pt,
  text width=5mm,
  text height=3mm,
  align=center,
  draw=black,
  fill=black!5
  },
  simpleNodesBigger/.style={
  circle,
  inner sep=0pt,
  text width=6mm,
  align=center,
  draw=black,
  fill=black!5
  },
  simpleBlankNodes/.style={
  circle,
  inner sep=0pt,
  text width=5mm,
  align=center,
  draw=white,
  fill=white!5
  },
  tinyDot/.style={
circle,
  inner sep=0pt,
  text width=1mm,
  text height=1mm,
  align=center,
  draw=black,
  fill=black!100
  },
thinLine/.style={black, thin},
standardLine/.style={black, semithick},
standardLineDashed/.style={black, dashed, semithick},
dashedLine/.style={black, dashed, thin},
thickLine/.style={black, ultra thick},
zigzag1/.style={decorate, decoration = {coil, aspect=0, segment length=2.5mm, amplitude = 1mm}},
thinLineArrow/.style={->, black, thin}

\usepackage{cancel}
}
\definecolor{mygray}{gray}{0.6}
\newtheorem{theorem}{Theorem}[section]
\newtheorem{definition}[theorem]{Definition}
\newtheorem{lemma}[theorem]{Lemma}
\newtheorem{corollary}[theorem]{Corollary}
\newtheorem{prop}[theorem]{Proposition}
\newdimen\origiwstr
\newdimen\origiwshr
\newcommand{\fixedspaceword}[1]{%
  \origiwstr=\fontdimen3\font
  \origiwshr=\fontdimen4\font
  \fontdimen3\font=\z@
  \fontdimen4\font=\z@
  #1%
  \fontdimen3\font=\origiwstr
  \fontdimen4\font=\origiwshr
}
\newcommand{\graphxStd}{1.5}
\newcolumntype{L}[1]{>{\raggedright\let\newline\\\arraybackslash\hspace{0pt}}m{#1}}
\newcolumntype{C}[1]{>{\centering\let\newline\\\arraybackslash\hspace{0pt}}m{#1}}
\newcolumntype{R}[1]{>{\raggedleft\let\newline\\\arraybackslash\hspace{0pt}}m{#1}}
\newacronym{rmsr}{\sc rmsr}{The problem of finding a rank-maximal stable matching in \acrshort{sr}}
\newacronym{gensr}{\sc gensr}{The problem of finding a generous stable matching in \acrshort{sr}}
\newacronym{sr}{\sc sr}{Stable Roommates problem}
\newacronym{ip}{\sc IP}{Integer Programming}
\newacronym{smti}{\sc SMTI}{Stable Marriage problem with Ties and Incomplete lists}
\newacronym{bctcs}{\sc bcb}{British Colloquium for Theoretical Computer Science}
\newacronym{cp}{\sc cp}{Constraint Programming}
\newacronym{sm}{\sc sm}{Stable Marriage problem}
\newacronym{smi}{\sc smi}{Stable Marriage problem with Incomplete lists}
\newacronym{hr}{\sc hr}{Hospitals/Residents problem}
\newacronym{sat}{\sc sat}{Satisfiability Problems}
\newacronym{cbc}{CBC}{COIN-OR Branch and Cut}
\newacronym{nrmp}{NRMP}{National Resident Matching Program}
\newacronym{ha}{\sc ha}{House Allocation problem}
\newacronym{hat}{\sc hat}{House Allocation problem with Ties}
\newacronym{cha}{\sc cha}{Capacitated House Allocation problem}
\newacronym{chat}{\sc chat}{Capacitated House Allocation problem with Ties}
\title{Two-sided profile-based optimality in the stable marriage problem}
\author[1]{Frances Cooper\footnote{Supported by an Engineering and Physical Sciences Research Council Doctoral Training Account}}
\author[2]{David Manlove\footnote{Supported by Engineering and Physical Sciences Research Council grant EP/P028306/01}}
\affil[1]{University of Glasgow, Scotland, \href{mailto:f.cooper.1@research.gla.ac.uk}{f.cooper.1@research.gla.ac.uk}}
\affil[2]{University of Glasgow, Scotland, \href{mailto:david.manlove@glasgow.ac.uk}{david.manlove@glasgow.ac.uk}}
\date{}                     
\begin{document}


\maketitle










\pagenumbering{arabic}


\abstract{We study the problem of finding ``fair'' stable matchings in the Stable Marriage problem with Incomplete lists ({\sc smi}).  In particular, we seek stable matchings that are optimal with respect to \emph{profile}, which is a vector that indicates the number of agents who have their first-, second-, third-choice partner, etc.  In a \emph{rank maximal stable matching}, the maximum number of agents have their first-choice partner, and subject to this, the maximum number of agents have their second-choice partner, etc., whilst in a \emph{generous stable matching} $M$, the minimum number of agents have their $d$th-choice partner, and subject to this, the minimum number of agents have their $(d-1)$th-choice partner, etc., where $d$ is the maximum rank of an agent's partner in $M$.  Irving et al. \cite{ILG87} presented an $O(nm^2\log n)$ algorithm for finding a rank-maximal stable matching, which can be adapted easily to the generous stable matching case, where $n$ is the number of men / women and $m$ is the number of acceptable man-woman pairs.  An $O(n^{0.5}m^{1.5})$ algorithm for the rank-maximal stable matching problem was later given by Feder \cite{Fed89}.  However these approaches involve the use of weights that are in general exponential in $n$, potentially leading to overflow or inaccuracies upon implementation.  In this paper we present an $O(nm^2\log n)$ algorithm for finding a rank-maximal stable matching using a \emph{vector-based} approach that involves weights that are polynomially-bounded in $n$. We show how this approach has a far reduced memory requirement (an estimated $100$-fold improvement for instances with $100, 000$ men or women) when compared to Irving et al.'s algorithm above.  Additionally, we show how to adapt our algorithm for the generous case to run in $O(\min\{m, nd\}^2 d \log n)$ time, where $d$ is the degree of a minimum regret stable matching.  We conduct an empirical evaluation to compare rank-maximal and generous stable matchings over a range of measures.  In particular, we observe that a generous stable matching is typically considerably closer than a rank-maximal stable matching in terms of the egaliatarian and sex-equality optimality criteria. In addition to investigating profile-based optimality in \acrshort{smi}, we also examine the complexity of the problem of finding profile-based optimal stable matchings in the \acrlong{sr} (\acrshort{sr}).}

\vspace{-2mm}
\section{Introduction}
\label{sm_rm_intro}

\subsection{Background}
The \acrlong{sm} (\acrshort{sm}) was first introduced in \citeauthor{GS62}'s seminal paper ``College Admission and the Stability of Marriage". In an instance of \acrshort{sm} we have two sets of agents,  men and women (of equal number, henceforth $n$), such that each man ranks every woman in strict preference order, and vice versa. An extension to \acrshort{sm}, known as the \acrlong{smi} (\acrshort{smi}) allows each man (woman) to rank a \emph{subset} of women (men). Let $m$ denote the total length of all preference lists.

Generalisations of \acrshort{smi} in which one or both sets of agents may be multiply assigned have been extensively applied in the real-world. The \emph{\acrlong{nrmp}} (\acrshort{nrmp}) is a long standing matching scheme in the US (beginning in 1952) which assigns graduating medical students to hospitals \citep{PR95}. Other examples include the assignment of children to schools in Boston \citep{APRS05} and the allocation of high-school students to university places in China \citep{Zha11}.

\citet{GS62} described linear time algorithms to find a stable matching in an instance of \acrshort{smi}. These classical algorithms find either a man-optimal (or woman-optimal) stable matching in which every man (woman) is assigned to their best partner in any stable matching and every woman (man) is assigned to their worst partner in any stable matching. Favouring one set of agents over the other is often undesirable and so we look at the notion of a ``fair" matching in which the happiness of both sets of agents is taken into account.

There may be many stable matchings in any given instance of \acrshort{smi}, and there are several different criteria that may be used to describe an optimal or ``fair" stable matching. The \emph{rank} of an agent $a$ in a stable matching $M$ is the position $a$'s partner on $a$'s preference list, while the \emph{degree} $d$ of $M$ is the highest rank of any agent in $M$. We might wish to limit the number of agents with large rank. A \emph{minimum regret} stable matching is a stable matching that minimises $d$ and can be found in $O(m)$ time \cite{Gus87}. Another type of optimality criteria, uses an arbitrary weight function to find a \emph{minimum (maximum) weight} stable matching, which is a stable matching that has minimum (maximum) weight among the set of all stable matchings. A special case of this is known as the \emph{egalitarian} stable matching which minimises the sum of ranks of all agents. \citet{ILG87} gave an algorithm to find an egalitarian stable matching in $O(m^2)$ time and discussed how to generalise their method to the minimum (and maximum) weight stable marriage problem. Let $K$ be the weight of an optimal stable matching. \citet{Fed89} later improved on the time complexities detailed above showing that any minimum weight stable matching (including an egalitarian stable matching), where $K \leq m$, may be found in $O(m^{1.5})$ time using weighted \acrshort{sat}, where $m$ is the number of possible man-woman pairs. In the general case where $m < K$ and $K = O(m^{c})$, and $c$ is a constant value, this rises to $O(nm \log m)$ time.  A \emph{sex-equal} stable matching seeks to minimise the difference in the sum of ranks between men and women. Finding a sex-equal stable matching was shown to be NP-hard \cite{Kat93}. A \emph{median} stable matching, defined formally in Section \ref{sec_formaldefs}, describes a stable matching in which each agent gains their median partner (if the partners of an agent for all stable matchings were lined up in order of preference) \cite{TS98}. Computing the set of median stable matchings is \#$\P$-hard \cite{Che08}.

Other notions of fairness involve the \emph{profile} of a matching which is a vector representing the number of agents assigned to their first, second, third choices etc., in the matching. A \emph{rank-maximal} stable matching $M$ is a stable matching whose profile is lexicographically maximum, ie. $M$ maximises the number of agents assigned to their first choice and, subject to that their second choice, and so on. Meanwhile, a \emph{generous} stable matching $M$ is a stable matching whose reverse profile is lexicographically minimum, ie. $M$ minimises the number of agents with rank $d$, and subject to that, rank $d-1$, and so on. Profile-based optimality such as rank-maximality or the generous criteria provide guarantees that do not exist with other optimality criteria giving a distinct advantage to these approaches in certain scenarios. 

\citet{ILG87} describe the use of weights that are exponential in $n$ (henceforth \emph{exponential weights}) in order to find a rank-maximal stable matching using a maximum weight approach. This requires an additional factor of $O(n)$ time complexity to take into account calculations over exponential weights, giving an overall time complexity of $O(nm^2 \log n)$ \footnote{\citet{ILG87} actually state a time complexity of $O(nm^2 \log n \log n)$, however, we believe that this time complexity bound is somewhat pessimistic and that a bound of $O(nm^2 \log n)$ applies to this approach.}. 
\citeauthor{ILG87}'s approach (described in more detail in Section \ref{sec_ran_max_exp_weights}) requires a Max Flow algorithm to be used. \citet{ILG87} stated that the strongly polynomial $O(m^2 \log n)$ Sleator-Tarjan algorithm \citep{ST83} was the best option (at the time of writing). The Sleator-Tarjan algorithm \citep{ST83} is an adapted version of \citeauthor{Din70}'s algorithm \citep{Din70} and finds a maximum flow in a network in $O(|V||E| \log |V|)$ time. Since $|V| \leq m$, $|E| \leq m$ and $O(\log m)$ is equivalent to $O(\log n)$ \citep{GI89}, this translates to $O(m^2 \log n)$ for the maximum weight stable matching problem and an overall time complexity of $O(nm^2 \log n)$ for the rank-maximal stable matching problem. However in \citeyear{Orl13} \citet{Orl13} described an improved strongly polynomial Max Flow algorithm with an $O(|V||E|)$ (translating to $O(m^2)$) time complexity, giving a total overall time complexity for finding a rank-maximal stable matching of $O(nm^2)$. \citeauthor{Fed92}'s weighted \acrshort{sat} approach \cite{Fed92} has an overall $O(n^{0.5}m^{1.5})$ time complexity for finding a rank-maximal stable matching.
Neither \citet{ILG87} nor \citet{Fed92} considered generous stable matchings, however, a generous stable matching may be found in a similar way to a rank-maximal stable matching with the use of exponential weights.

\subsection{Motivation} 
\label{sm_rm_motivation_sec}
For the rank-maximal stable matching problem, \citet{ILG87} suggest a weight of $n^{n-i}$ for each agent assigned to their $i$th choice and a similar approach can be taken to find a generous stable matching as we demonstrate later in this paper. In both the rank-maximal and generous cases, the use of exponential weights introduces the possibility of overflow and accuracy errors upon implementation. 
This may occur as a consequence of limitations of data types: in Java for example, the \emph{int} and \emph{long} primitive types restrict the number of integers that can be represented, and the \emph{double} primitive type may introduce inaccuracies when the number of significant figures is greater than $15$. Using a weight of $n^{n-i}$ for each agent assigned to their $i$th choice as above, it may be that we need to distribute $n$ capacities of size $n^{n-1}$ across the network \cite{ILG87}. As a theoretical example the \emph{long} data type has a maximum possible value of $2^{63}-1 < 10^{19}$ \cite{Ora19}. Since $16^{15}< 10^{19} < 17^{16}$, when we are dealing with flows or capacities of order $n^{n-1}$, the largest $n$ possible without risking errors is $16$. However, alternative data structures such as Java's \emph{BigInteger} do allow an arbitrary limit on integer size \cite{Ora18}, by storing each number as an array of \emph{int}s to the base $2^{31} - 1$ (the maximum \emph{int} value), 
meaning we are more likely to be dependent on the size of computer memory than any data type limits.

When looking for a rank-maximal or generous stable matching, we describe an alternative approach to finding a maximum flow through a network that does not require exponential weights. This approach is based on using polynomially-bounded weight vectors (henceforth \emph{vector-based weights}) for edge capacities rather than exponential weights. On the surface, performing operations over vector-based weights rather than over equivalent exponential weights, would appear not to improve the time or space complexity of the algorithm, since an exponential weight may naturally be stored as an equivalent array of integers in memory, as in the \emph{BigInteger} case above. However, vector-based weights allow us to explore vector compression that is unavailable in the exponential case. One form of lossless vector compression saves the index and value of each non-zero vector element. This type of vector compression is used in our experiments in Section \ref{sm_rm_sec_exps} to show that for randomly-generated instances of size $n=1000$, we are able to store a network with vector-based weights using approximately $10$ times less space than one stored with the equivalent exponential weights. Indeed extrapolating to $n=100,000$ we achieve an approximate  factor of $100$ improvement using vector-based weights, with the space required to store a network using exponential weights nearing $1$GB. We also show that for a specific instance of size $n=100,000$, the space required to store exponential weights of a network was over $10$GB, whereas the vector-based weights were over $100,000$ times less costly at $0.64$MB.

Combining these space requirement calculations with the fact that the time complexity of \citeauthor{ILG87}'s \cite{ILG87} $O(nm^2\log n)$ algorithm to find a rank-maximal stable matching is dominated substantially by the maximum flow algorithm (no other part taking more than $O(m)$ time), it is arguably important to ensure that the network is as small as possible and fits comfortably in RAM.

\subsection{Contribution}
\label{smi_rm_sec_contribution}

In this paper we present an $O(nm^2 \log n)$ algorithm to find a rank-maximal stable matching in an instance of \acrshort{smi} using a vector-based weight approach rather than using exponential weights. We also show that a similar process can be used to find a generous stable matching in $O(\min\{m, nd\}^2 d \log n)$ time, where $d$ is the degree of a minimum regret stable matching. Finally, we show that the problems of finding rank-maximal and generous stable matchings in \acrshort{sr} are $\NP$-hard. In addition to theoretical contributions we also run experiments using randomly-generated \acrshort{sm} instances. In these experiments we compare rank-maximal and generous stable matchings over a range of measures (cost, sex-equal score, degree, number of agents obtaining their first choice and number of agents who obtain a partner in the lower a\% of their preference list). An example of this final measure is as follows. If $n = 200$ and $a = 50$ then we record the number of agents who obtain a partner between their $101$st and $200$th choice inclusive. We additionally compare these profile-based optimal stable matchings with median stable matchings. The median criterion is somewhat unique in that its definition is not based on cost, degree or profile. We were interested in determining whether, in practice, a median stable matching more closely approximates a rank-maximal or a generous stable matching. In these experiments, we find that a generous stable matching typically outperforms both a rank-maximal and a median stable matching when considering cost and sex-equal score measures, and that a median stable matching more closely approximates a generous stable matching in practice.

\subsection{Structure of the paper}
Section \ref{sec_formaldefs} gives formal definitions of \acrshort{smi} and \acrshort{sr}, and defines various types of optimal stable matchings. Section \ref{sec_ran_max_exp_weights} gives a description of \citeauthor{ILG87}'s \citep{ILG87} method for finding a rank-maximal stable matching using exponential weights. Sections \ref{sm_rm_combinatoricApproach} and \ref{sec_gen} describe the new approach to find a rank-maximal stable matching and a generous stable matching respectively, without the use of exponential weights. Complexity results for rank-maximal and generous stable matchings in \acrshort{sr} are presented in Section \ref{sec_rm_sr}. Our experimental evaluation is presented in Section \ref{sm_rm_sec_exps}, whilst future work is discussed in Section \ref{sec:sm_rm_fw}.

\subsection{Related work}  
Work undertaken by \citet{CMS08} describes the happiness of an agent $a$ in a stable matching $M$, defined $s(a,M)$, as a map from all agents over a given matching to $\mathbb{R}$. The map $s(a,M)$ is said to have the \emph{independence property} if it is only reliant upon information contained in $M(a)$. The \acrlong{hr} (\acrshort{hr}) is a more general case of \acrshort{smi} in which women may be assigned more than one man.  \citet{CMS08} provide an algorithm for the family of variants of \acrshort{hr} incorporating happiness functions that exhibit the independence property, to calculate egalitarian and minimum regret stable matchings. For the case that we are given an instance of \acrshort{smi}, this algorithm has a time complexity of $O(n^2f(c) + n^4)$ where $f(c)$ is the time it takes to calculate the weight of a matching. It is worth noting that the $n^4$ term of this time complexity is due to \citeauthor{ILG87}'s \citep{ILG87} method of finding a minimum weight stable matching. This method also requires the use of exponential weights which would be problematic for the reasons outlined above.

The \acrlong{ha} (\acrshort{ha}) is an extension of \acrshort{smi} in which women do not have preferences over men. The \acrlong{chat} (\acrshort{chat}) is an extension of \acrshort{ha} in which women may be assigned more than one man and men may be indifferent between one or more women on their preference list. With one-sided preferences the notion of stability does not exist. Rank-maximality however, may be described in an analogous way to \acrshort{smi}, and there is an $O(\min(n+d,d \sqrt(n))m)$ algorithm to find the rank-maximal matching in an instance of \acrshort{chat} \citep{Sng08}, where $m$ is the total length of men's preference lists and $d$ is the degree of the matching. We may also seek to find a generous maximum matching in which the most number of men are assigned as possible and then subject to that we use a generous criteria analogous to the \acrshort{smi} case. There is an $O(dn^2m)$ algorithm to find the generous maximum matching in \acrshort{chat} \citep{Sng08}.

\section{Preliminary definitions and results}
\label{sec_formaldefs}

\subsection{Formal definition of SMI}
The \acrlong{smi} (\acrshort{smi}) comprises a set of men $U$ and a set of women $W$. Each man ranks a subset of women in preference order and vice versa. A man $m_i$, finds a woman $w_j$ \emph{acceptable} if $w_j$ appears on $m_i$'s preference list and vice versa. A \emph{matching} $M$ in this context is an assignment of men to women such that no man or woman is assigned to more than one person, and if $(m_i,w_j) \in M$, then $m_i$ finds $w_j$ acceptable and $w_j$ finds $m_i$ acceptable. An example \acrshort{smi} instance $I_0$ with $8$ men and women is taken from \citeauthor{GI89}'s book \citep[p. 69]{GI89} and is given as Figure \ref{sm_rm:ex_preflists}. Let $M(m_i)$ denote $m_i$'s assigned partner in $M$, and similarly, let $M(w_j)$ denote $w_j$'s assigned partner in $M$. A matching $M$ is \emph{stable} if there is no man-woman pair $(m_i,w_j)$ who would rather be assigned to each other than to their assigned partners in $M$ (if any). Denote by $\mathcal{M}_S$, the set of all stable matchings. By the \emph{``Rural Hospitals" Theorem} \citep{GS85}, the same set of men and women are assigned in all stable matchings of $\mathcal{M}_S$. For the remainder of this paper, we assume \acrshort{smi} instances have been pre-processed to remove men and women unassigned in any stable matching. Thus we can assume that the number of men and women is equal and we denote this number by $n$.

\begin{figure}[]
\centering
   \begin{subfigure}[t]{0.4\textwidth}
  Men's preferences:\\
$m_1$: $w_5$ $w_7$ $w_1$ $w_2$ $w_6$ $w_8$ $w_4$ $w_3$\\
$m_2$: $w_2$ $w_3$ $w_7$ $w_5$ $w_4$ $w_1$ $w_8$ $w_6$\\
$m_3$: $w_8$ $w_5$ $w_1$ $w_4$ $w_6$ $w_2$ $w_3$ $w_7$\\
$m_4$: $w_3$ $w_2$ $w_7$ $w_4$ $w_1$ $w_6$ $w_8$ $w_5$\\
$m_5$: $w_7$ $w_2$ $w_5$ $w_1$ $w_3$ $w_6$ $w_8$ $w_4$\\
$m_6$: $w_1$ $w_6$ $w_7$ $w_5$ $w_8$ $w_4$ $w_2$ $w_3$\\
$m_7$: $w_2$ $w_5$ $w_7$ $w_6$ $w_3$ $w_4$ $w_8$ $w_1$\\
$m_8$: $w_3$ $w_8$ $w_4$ $w_5$ $w_7$ $w_2$ $w_6$ $w_1$\\
  \end{subfigure}
  \hspace{1cm}
   \begin{subfigure}[t]{0.4\textwidth}
  Women's preferences:\\
$w_1$: $m_5$ $m_3$ $m_7$ $m_6$ $m_1$ $m_2$ $m_8$ $m_4$\\
$w_2$: $m_8$ $m_6$ $m_3$ $m_5$ $m_7$ $m_2$ $m_1$ $m_4$\\
$w_3$: $m_1$ $m_5$ $m_6$ $m_2$ $m_4$ $m_8$ $m_7$ $m_3$\\
$w_4$: $m_8$ $m_7$ $m_3$ $m_2$ $m_4$ $m_1$ $m_5$ $m_6$\\
$w_5$: $m_6$ $m_4$ $m_7$ $m_3$ $m_8$ $m_1$ $m_2$ $m_5$\\
$w_6$: $m_2$ $m_8$ $m_5$ $m_3$ $m_4$ $m_6$ $m_7$ $m_1$\\
$w_7$: $m_7$ $m_5$ $m_2$ $m_1$ $m_8$ $m_6$ $m_4$ $m_3$\\
$w_8$: $m_7$ $m_4$ $m_1$ $m_5$ $m_2$ $m_3$ $m_6$ $m_8$\\
  \end{subfigure}
  \caption{\acrshort{smi} instance $I_0$ \citep[p. 69]{GI89}}
  \label{sm_rm:ex_preflists}
    \end{figure}

It is well known that a stable matching in \acrshort{smi} can be found in $O(m)$ time via the \emph{Gale-Shapley} algorithm \citep{GS62}, where $m$ is the total length of all agents preference lists. This algorithm requires either men or women to be the \emph{proposers} and those of the opposite gender are \emph{receivers}. However, this procedure naturally produces a \emph{proposer-optimal} stable matching where members of the proposer group will be assigned to their best possible partner in any stable matching. Unfortunately, this also ensures a \emph{receiver-pessimal} stable matching in which members of the receiver group will be assigned their worst assignees in any stable matching.

It is natural therefore to want to find some notion of optimality which provides a sense of equality between men and women in a stable matching. This problem has been researched widely and and a summary of the literature is now given.

\subsection{Optimality in SMI}
\label{sec_optimality}
Let $\text{rank}(m_i, w_j)$ be the rank of woman $w_j$ on man $m_i$'s list with an analogous definition for the rank of man on a woman's list. Let $M$ be a stable matching in an instance $I$ of \acrshort{smi}. The rank of man $m_i$, with respect to $M$, is given by $\text{rank}(m_i, M(m_i))$. Similarly, the rank of woman $w_j$, with respect to $M$, is given by $\text{rank}(w_j, M(w_j))$. Then, the \emph{man-cost} $c_U(M)$ is defined as the sum of ranks of the set of all men, that is, $c_U(M) = \sum_{m_i \in U}\text{rank}(m_i, M(m_i))$. Similarly, the \emph{woman-cost} $c_W(M)$ is given by $c_W(M) = \sum_{w_j \in W}\text{rank}(w_j, M(w_j))$. We may then define the \emph{cost} of $M$, which is given by $c(M) = c_U(M) + c_W(M)$. The \emph{man-degree} $d_U(M)$ is the maximum rank of a man with respect to $M$, that is, $d_U(M) = \max \{\text{rank}((m_i, M(m_i)): m_i \in U\}$. Analogously, the \emph{woman-degree} $d_W(M)$ is given by $d_W(M) = \max \{\text{rank}((w_j, M(w_j)): w_j \in W\}$. Finally, the \emph{degree} $d(M)$ of stable matching $M$ is given by $d(M)=\max\{d_U(M), d_W(M)\}$.

Let $I$ be an instance of \acrshort{smi} with set of stable matchings $\mathcal{M}_S$. We consider both cost- and degree-based optimality.

First, we look at cost-based optimality. An \emph{egalitarian stable matching} is a stable matching $M$ such that $c(M)$ is minimised taken over all stable matchings in $\mathcal{M}_S$. Let $w(M)$ define some arbitrary weight function of stable matching $M$. A matching $M$ is \emph{minimum (maximum) weight} if $w(M)$ is minimum (maximum) taken over all stable matchings in $\mathcal{M}_S$. Thus the minimum weight function $w(M)$ is a generalisation of the cost function $c(M)$. \citet{ILG87} showed that an egalitarian stable matching can be found in $O(m^2)$ time and a minimum weight stable matching in $O(m^2 \log n)$ time.  Recall that $K$ is the weight of an optimal stable matching. \citet{Fed92} improved on the methods above, giving an $O(m^{1.5})$ algorithm for finding a minimum weight stable matching when $K\leq m$. In the general case when $m < K$ and $K = O(m^{c})$ the time complexity for this algorithm rises to $O(nm \log m)$, where $c$ is a constant value. A stable matching $M$ is \emph{balanced} if $\max\{c_U(M),c_W(M)\}$ is minimised, taken over all stable matchings in $\mathcal{M}_S$. The problem of finding a balanced stable matching was shown to be $\NP$-hard \cite{Fed90}. A \emph{sex-equal} stable matching is a stable matching $M$ such that the difference $|c_U(M)-c_W(M)|$ is minimum, taken over all stable matchings in $\mathcal{M}_S$. \citet{Kat93} showed that the problem of finding a sex-equal stable matching is also NP-hard. 

Next we consider degree-based optimality. A \emph{minimum regret} stable matching $M$ is a stable matching such that $d(M)$ is minimised over all stable matchings in $\mathcal{M}_S$, and can be found in $O(m)$ time \cite{Gus87}. A stable matching $M$ is \emph{regret-equal} if $|d_U(M)-d_W(M)|$ is minimised over all stable matchings. A regret-equal stable matching may be found in $O(d_0nm)$ time where $d_0 = |d_U(M_0) - d_W(M_0)|$, and $M_0$ is the man-optimal stable matching \cite{CM_20}. Finally, a stable matching $M$ is \emph{min-regret sum} if $\max\{d_U(M), d_W(M)\}$ is minimised over all stable matchings in $\mathcal{M}_S$. It is possible to find a min-regret sum stable matching in $O(d_sm)$ time where $d_s = d_U(M_0) - d_U(M_z)$ and $M_z$ is the woman-optimal stable matching \cite{CM_20}.

A \emph{median} stable matching may be described in the following way. Let $l_i$ denote the multiset of all women who are assigned to man $m_i$ in the stable matchings in $\mathcal{M}_S$ (in general $l_i$ is a multiset as $m_i$ may have the same partner in more than one stable matching). Assume that $l_i$ is sorted according to $m_i$'s preference order (there may be repeated values) and let $l_{i,j}$ represent the $j$th element of this list. For each $j$ ($1\leq j\leq |\mathcal{M}_S|$), let $M_j$ denote the set of pairs obtained by assigning $m_i$ to $l_{i,j}$. \citeauthor{TS98} \citep{TS98} showed the surprising result that $M_j$ is a stable matching for every $j$ such that $1 \leq j \leq |\mathcal{M}_S|$. If $|\mathcal{M}_S|$ is odd then the unique \emph{median} stable matching is found when  $j = \left\lceil \frac{|\mathcal{M}_S|}{2} \right\rceil$. However, if $|\mathcal{M}_S|$ is even, then the \emph{set of median} stable matchings are the stable matchings such that each man (woman) does no better (worse) than his (her) partner when $j = \frac{|\mathcal{M}_S|}{2}$ and no worse (better) than his (her) partner when $j = \frac{|\mathcal{M}_S|}{2} + 1$. For the purposes of this paper, in particular the  experimentation section, we define the \emph{median} stable matching as the stable matching found when $j = \left\lceil \frac{|\mathcal{M}_S|}{2} \right\rceil$. Computing the set of median stable matchings is \#\P-hard \cite{Che08}.

Define a \emph{rank-maximal} stable matching $M$ in \acrshort{smi} to be a stable matching in which the largest number of agents gain their first choice, then subject to that, their second choice and so on. More formally we define a \emph{profile} as a finite vector of integers (positive or negative) and the \emph{profile} of a stable matching as follows. Given a stable matching $M$, let the profile of $M$ be given by the vector $p(M) = \langle  p_1, p_2, ..., p_n \rangle$ where
$p_k = |\{(m_i,w_j)\in M : \text{rank}(m_i,w_j) = k\}| + |\{(m_i,w_j)\in M : \text{rank}(w_j, m_i) = k\}|$ for some $k : (1 \leq k \leq n)$. Thus we define a stable matching $M$ in an instance $I$ of \acrshort{smi} to be \emph{rank-maximal} if $p(M)$ is lexicographically maximum, taken over all stable matchings in $I$. We define the \emph{reverse profile} $p_r(M)$ to be the vector $p_r(M)=\langle p_k, p_{k-1}, ..., p_1 \rangle$. A stable matching $M$ in an instance $I$ of \acrshort{smi} is \emph{generous} if  $p_r(M)$ is lexicographically minimum, taken over all stable matchings in $I$.

\subsection{Graphical structures}

\citet{ILG87} define a \emph{rotation} $\rho = \{(m_1, w_1), (m_2, w_2), ..., (m_k, w_k)\}$ as a list of man-woman pairs in a stable matching $M$, such that when their assignments are permuted (each man $m_i$ moving from $w_i$ to $w_{i+1}$, where $i$ is incremented modulo $k$), we obtain another stable matching. Applying this permutation is known as \emph{eliminating} a rotation. A rotation $\rho$ is \emph{exposed} in $M$ if all of the pairs in $\rho$ are in $M$. A list of rotations of instance $I_0$ is given in Figure \ref{sm_rm:ex_rotationsList}.

\begin{figure}[]
\centering
   \mbox{\parbox{6cm}{
   $\rho_0$: $(m_1, w_5)$ $(m_3, w_8)$\\
$\rho_1$: $(m_1, w_8)$ $(m_2, w_3)$ $(m_4, w_6)$\\
$\rho_2$: $(m_3, w_5)$ $(m_6, w_1)$\\
$\rho_3$: $(m_7, w_2)$ $(m_5, w_7)$\\
$\rho_4$: $(m_3, w_1)$ $(m_5, w_2)$\\
    }}
    \vspace{-0.4cm}
  \caption{Rotations for instance $I_0$.}
  \label{sm_rm:ex_rotationsList}
    \end{figure}

In order to describe \emph{profiles} of rotations we must first describe arithmetic over profiles. {\color{black}Addition over profiles may be defined in the following way. Let $p=\langle p_1, p_2, ..., p_n \rangle$ and $p' = \langle p'_1, p'_2, ..., p'_n \rangle$ be profiles of length $n$. Then the addition of $p'$ to $p$ is taken pointwise over elements from $1 ... n$. That is, $p + p'=\langle p_1 + p'_1, p_2 + p'_2, ..., p_n + p'_n \rangle$. We define $p = p'$ if $p_i = p'_i$ for $1 \leq i \leq n$. Now suppose $p \neq p'$. Let $k$ be the first point at which these profiles differ, that is, suppose $p_k \neq p'_k$ and $p_i = p'_i$ for $1 \leq i < k$. Then we define $p \prec p'$ if $p_k < p'_k$. We say $p \preceq p'$ if either $p_k < p'_k$ or $p = p'$. Finally, we define a profile $p$ as \emph{maximum} (\emph{minimum}) among a set of profiles $P$ if for any other profile $p' \in P$, $p \succeq p'$ ($p \preceq p'$). It is trivial to show that an addition or comparison of two profiles would take $O(n)$ time in the worst case (since the length of any profile is bounded by $n$). Let $p'' = \langle p_1, p_2, ..., p_i, 0, ..., 0 \rangle$ be a profile, where $i \leq n$. Then for ease of description we may shorten this profile to $p'' = \langle p_1, p_2, ..., p_i \rangle$.  
}

{\color{black}
Suppose we have a rotation $\rho$ that, when eliminated, takes us from stable matching $M$ to stable matching $M'$, where $M$ and $M'$ have profiles $p(M)=\langle p_1, p_2, ..., p_n \rangle$ and $p(M')=\langle p_1', p_2', ..., p_n' \rangle$ respectively. Then the \emph{profile} of $\rho$ is defined as the net change in profile between $M$ and $M'$, that is, $p(\rho)=\langle p_1' - p_1, p_2' - p_2, ..., p_n' - p_n \rangle$. Hence, $p(M')=p(M) + p(\rho)$. It is easy to see that a particular rotation will give the same net change in profile regardless of which stable matching it is eliminated from. For a set of rotations $R=\{\rho_1, \rho_2, ..., \rho_r\}$, we define the \emph{profile} of $R$ as $p(R)=p(\rho_1) + p(\rho_2) + ... + p(\rho_r)$.}

A \emph{rotation poset} may be constructed as a directed graph which indicates the order in which rotations may be eliminated. Informally, if one rotation $\rho$ precedes another, $\tau$, in the rotation poset then $\tau$ is not exposed until $\rho$ has been eliminated. A \emph{closed subset} of the rotation poset may be defined as a set of rotations $P = \{\rho_1, \rho_2,..., \rho_r\}$ such that for every $\rho_i$ in $P$, all of $\rho_i$'s predecessors are also in $P$. It has been shown that there is a $1$-$1$ correspondence between the closed subsets of the rotation poset and the set of all stable matchings \citep[Theorem 3.1]{ILG87}.  The rotation poset for $I_0$, denoted $R_p(I_0)$, is shown in Figure \ref{sm_rm:rotationPoset}.

\begin{figure}[]
\centering
   \begin{subfigure}[t]{0.4\textwidth}
    \centering
    \begin{tikzpicture}
		\node [simpleNodes](r0) at (1*\graphxStd,4) {$\rho_0$};
		\node [simpleNodes](r1) at (0,2) {$\rho_1$};
		\node [simpleNodes](r2) at (2*\graphxStd,2.75) {$\rho_2$};
		\node [simpleNodes](r3) at (2*\graphxStd,1.25) {$\rho_3$};
		\node [simpleNodes](r4) at (1*\graphxStd,0) {$\rho_4$};

		\draw[->, >=latex, standardLine] (r0.225) -- (r1.45);
		\draw[->, >=latex, standardLine] (r0.315) -- (r2.135);
		\draw[->, >=latex, standardLine] (r1.315) -- (r4.135);
		\draw[->, >=latex, standardLine] (r2.south) -- (r3.north);
		\draw[->, >=latex, standardLine] (r3.225) -- (r4.45);
		\end{tikzpicture}
  \caption{Rotation poset $R_p(I_0)$.}
  \label{sm_rm:rotationPoset}
  \end{subfigure}
   \begin{subfigure}[t]{0.4\textwidth}
   \centering
    \begin{tikzpicture}
    
		\node [simpleNodes](r0) at (1*\graphxStd,4) {$\rho_0$};
		\node [simpleNodes](r1) at (0,2) {$\rho_1$};
		\node [simpleNodes](r2) at (2*\graphxStd,2.75) {$\rho_2$};
		\node [simpleNodes](r3) at (2*\graphxStd,1.25) {$\rho_3$};
		\node [simpleNodes](r4) at (1*\graphxStd,0) {$\rho_4$};

		\draw[->, >=latex, standardLine] (r0.225) -- (r1.45);
		\draw[->, >=latex, standardLine] (r0.315) -- (r2.135);
		\draw[->, >=latex, standardLine] (r1.315) -- (r4.135);
		\draw[->, >=latex, standardLine] (r2.south) -- (r3.north);
		\draw[->, >=latex, standardLine] (r3.225) -- (r4.45);
		\draw[->, >=latex, standardLine] (r2.225) -- (r4.north);
		
		\node at (3.3,2) {$2$};
		\node at (2,1.8) {$1$};
		\node at (2.5,0.5) {$1$};
		\node at (2.4,3.6) {$1$};
		\node at (0.5,0.9) {$2$};
		\node at (0.4,3.3) {$1,2$};
		
		\end{tikzpicture}
  \caption{Rotation digraph $R_d(I_0)$.}
  \label{sm_rm:rotationDigraph}
  \end{subfigure}
    \caption{Rotation poset and digraph of $I_0$.}
  \label{sm_rm:rotationPosetAndDigraph}
    \end{figure}

A description of the creation of a \emph{rotation digraph} now follows. First, retain each rotation from the rotation poset as a node. There are two types of predecessor relationships to consider.
\begin{enumerate}
	\item Suppose pair $(m_i,w_j) \in \rho$. We have a directed edge in our digraph from $\rho'$ to $\rho$ if $\rho'$ is the unique rotation that moves $m_i$ to $w_j$. In this case we say that $\rho'$ is a \emph{type $1$ predecessor} of $\rho$.
	\item Let $\rho$ be the rotation that moves $m_i$ below $w_j$ and $\rho' \neq \rho$ be the rotation that moves $w_j$ above $m_i$. Then we add a directed edge from $\rho'$ to $\rho$ and say $\rho'$ is a \emph{type $2$ predecessor} of $\rho$. 
\end{enumerate}
  The rotation digraph for instance $I_0$, denoted $R_d(I_0)$, is shown in Figure \ref{sm_rm:rotationDigraph}.

Using the rotation digraph structure, \citeauthor{GI89} \citep{GI89} were able to enumerate all stable matchings in $O(m + n|\mathcal{M}|)$, where $\mathcal{M}$ is the set of all stable matchings. All stable matchings of instance $I_0$ are listed in Figure \ref{sm_rm:ex_stableMatchingsList}.

\begin{figure}[]
\centering
$M_0 = \{(m_1,w_5), (m_2,w_3), (m_3,w_8), (m_4,w_6), (m_5,w_7), (m_6,w_1), (m_7,w_2), (m_8,w_4)\}$\\
$M_1 = \{(m_1,w_8), (m_2,w_3), (m_3,w_5), (m_4,w_6), (m_5,w_7), (m_6,w_1), (m_7,w_2), (m_8,w_4)\}$\\
$M_2 = \{(m_1,w_3), (m_2,w_6), (m_3,w_5), (m_4,w_8), (m_5,w_7), (m_6,w_1), (m_7,w_2), (m_8,w_4)\}$\\
$M_3 = \{(m_1,w_8), (m_2,w_3), (m_3,w_1), (m_4,w_6), (m_5,w_7), (m_6,w_5), (m_7,w_2), (m_8,w_4)\}$\\
$M_4 = \{(m_1,w_3), (m_2,w_6), (m_3,w_1), (m_4,w_8), (m_5,w_7), (m_6,w_5), (m_7,w_2), (m_8,w_4)\}$\\
$M_5 = \{(m_1,w_8), (m_2,w_3), (m_3,w_1), (m_4,w_6), (m_5,w_2), (m_6,w_5), (m_7,w_7), (m_8,w_4)\}$\\
$M_6 = \{(m_1,w_3), (m_2,w_6), (m_3,w_1), (m_4,w_8), (m_5,w_2), (m_6,w_5), (m_7,w_7), (m_8,w_4)\}$\\
$M_7 = \{(m_1,w_3), (m_2,w_6), (m_3,w_2), (m_4,w_8), (m_5,w_1), (m_6,w_5), (m_7,w_7), (m_8,w_4)\}$\\
  \caption{Stable matchings for instance $I_0$.}
  \label{sm_rm:ex_stableMatchingsList}
    \end{figure}
    
\subsection{Formal definition of SR}
\label{sec_formal_sr}

The \emph{\acrlong{sr}} (\acrshort{sr}) is a non-bipartite generalisation of \acrshort{sm}. An instance of \acrshort{sr} consists of a single set of $n$ agents (roommates), $A=\{a_1, a_2, ..., a_n\}$, each of whom ranks other members of the set in strict order of preference. A matching in this context is an assignment of pairs of agents such that each agent is assigned exactly once. Let $m$ be the total length of all preference lists. A \emph{matching} $M$ in \acrshort{sr} is then an assignment of acceptable pairs of agents such that each agent is assigned at most once. If $a_i$ is assigned in a matching $M$, we let $M(a_i)$ denote $a_i$'s assigned partner.  

The notion of stability also exists in this setting. As in the \acrshort{smi} case, a matching $M$ is \emph{stable} if there is no pair of agents who would rather be assigned to each other than to their assigned partners in $M$ (if any).
Using a counterexample, \citet{GS62} showed that a stable matching in \acrshort{sr} need not exist in all instances. \citet{Irv85} gave an $O(n^2)$ algorithm to find a stable matching in \acrshort{sr} or report that no such matching exists.

Let $M$ be a stable matching in \acrshort{sr}. For any two agents $a_i$ and $a_j$, we denote by $\text{rank}(a_i,a_j)$ the position of $a_j$ in $a_i$'s preference list and define the \emph{rank} of $a_i$ with respect to matching $M$ as $\text{rank}(a_i,M(a_i))$. The \emph{degree} of $M$ is the largest rank of all agents in $M$. A stable matching $M$ is \emph{minimum regret} if the degree of $M$ is minimum over all stable matchings. \citet{GI89} describe an algorithm to find a minimum regret stable matching, in a solvable instance $I$ of \acrshort{sr}, in $O(n^2)$ time. As in the \acrshort{smi} case, the \emph{profile} of $M$ is given by the vector $p(M) = \langle  p_1, p_2, ..., p_n \rangle$ where
$p_k = |\{a_i\in A : \text{rank}(a_i,M(a_i))=k\}|$ for each $k$ $(1 \leq k \leq n)$. A stable matching $M$ is then \emph{rank-maximal} if $p(M)$ is lexicographically maximum, taken over all stable matchings. Finally, a stable matching $M$ is \emph{generous} if the \emph{reverse profile} $p_r(M) = \langle  p_n, p_{n-1}, ..., p_1 \rangle$ is lexicographically minimum, taken over all stable matchings.

\section{Finding a rank-maximal stable matching using exponential weights} 

\label{sec_ran_max_exp_weights}

In this section we will describe how \citeauthor{ILG87}'s \citep{ILG87} maximum weight stable matching algorithm works and how it can be used to find a rank-maximal stable matching using exponential weights.

{\color{red}

}

\subsection{Exponential weight network} 
\label{smi_rm_exp_weight_network_sec}

\citeauthor{ILG87}'s \citep{ILG87} method for finding a maximum weight stable matching involves finding a maximum weight closed subset of the rotation poset.
In order to find a maximum weight closed subset of the rotation poset, a network is built and a maximum flow is found over this network.

The rotation digraph is converted to a network $R_n(I)$ as follows. First we add two extra vertices; a source vertex $s$ and a sink vertex $t$. An edge of capacity $\infty$ replaces each original edge in the digraph. 
Since we are finding a rank-maximal stable matching, capacities on other edges of $R_n(I)$ are calculated by converting each profile of a rotation to a single exponential weight. We decide on a weight function of $(2n+1)^{n-i}$ for each person assigned to their $i$th choice. From this point onwards we refer to the use of this weight function as the \emph{high-weight} scenario, and denote it as $w$.

\begin{definition}{}
\label{def_sm_rm_highweightfunction}
Given a profile $p = \langle p_1, p_2, ..., p_a \rangle$ such that $|p_1| + |p_2| + ... + |p_a| \leq 2n$ and $1 \leq a \leq n$, define the high-weight function $w$ as,

$$w(p) = p_1(2n+1)^{n-1} + p_2(2n+1)^{n-2} + ... + p_a(2n+1)^{n-a}.$$

\end{definition}



Lemma \ref{thm_sm_rm_highweightrankmax} shows that when the above function $w$ is used, a matching of maximum weight will be a rank-maximal matching.

\begin{prop}
\label{thm_sm_rm_iand1contributionToW}
Let $p = \langle p_1, p_2, ..., p_n \rangle$ and $p' = \langle p'_1, p'_2, ..., p'_n \rangle$ be profiles such that $|p_1| + |p_2| + ... + |p_n| \leq 2n$ and $|p'_1| + |p'_2| + ... + |p'_n| \leq 2n$. Let $w_i(p) = p_i(2n+1)^{n-i}$ denote the $i$th term of $w(p)$ and let $w_i^+(p)= \sum_{j=i}^{n}p_j(2n+1)^{n-j}$ denote the sum of $w(p')$ terms for all $j$ such that $i \leq j \leq n$. If $p_i>p'_i$, then $w_i(p) > w_i^+(p')$. Additionally, if $i$ is the first point at which $p$ and $p'$ differ, then $w(p) > w(p')$.
\end{prop}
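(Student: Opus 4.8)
The plan is to prove the two assertions in order, with the first supplying the key estimate for the second. Throughout I would lean on the single structural fact that makes the base $2n+1$ the correct choice: for any profile $q$ meeting the stated hypothesis we have $\sum_j |q_j| \le 2n < 2n+1$, so the entire ``tail'' of $w(q)$ past any fixed index is strictly dominated by one unit in the next higher place. This is a discrete radix/carry argument in which $2n+1$ is the base and each $|p_j|$ is a digit bounded by the total mass $2n$; the inequality $2n < 2n+1$ is precisely what the base is calibrated to guarantee.

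For the first assertion I would begin from integrality: $p_i > p'_i$ forces $p_i \ge p'_i + 1$, whence $w_i(p) = p_i(2n+1)^{n-i} \ge (p'_i + 1)(2n+1)^{n-i}$. It then suffices to show that the remaining terms of $w_i^+(p')$ cannot bridge the gap of $(2n+1)^{n-i}$, i.e. that $\sum_{j=i+1}^{n} p'_j (2n+1)^{n-j} < (2n+1)^{n-i}$. I would bound this tail in absolute value by factoring out the largest power $(2n+1)^{n-i-1}$ and invoking $\sum_{j=i+1}^{n} |p'_j| \le \sum_{j=1}^{n} |p'_j| \le 2n$, which yields the bound $2n(2n+1)^{n-i-1} < (2n+1)^{n-i}$. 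Combining gives $w_i^+(p') < (p'_i + 1)(2n+1)^{n-i} \le w_i(p)$, as required.

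For the second assertion I would first cancel the agreeing prefix: since $i$ is the first index of disagreement, $p_j = p'_j$ for $j < i$, so the high-order terms vanish and $w(p) - w(p') = w_i^+(p) - w_i^+(p')$. The dominant term $w_i(p)$ already beats all of $w_i^+(p')$ by the first assertion, so the only remaining point is to control the tail of $p$ past index $i$: I would establish $w_i^+(p) \ge w_i(p)$ and chain it with $w_i(p) > w_i^+(p')$ to conclude $w(p) > w(p')$.

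The main obstacle, and the step I would treat most carefully, is exactly the inequality $w_i^+(p) \ge w_i(p)$, that is, the nonnegativity of the lower-order terms $\sum_{j>i} p_j (2n+1)^{n-j}$ of $p$. This holds because the profiles being compared are profiles of \emph{matchings}, whose entries count agents and are therefore nonnegative. I would make this nonnegativity explicit at precisely this step, since it is what distinguishes the second assertion from the first: the first needs only $\sum_j |p'_j| \le 2n$ and survives even for signed profiles, whereas the second genuinely relies on $p$ having a nonnegative tail — for arbitrary signed profiles it can fail, as a negative tail of $p$ weighed against a positive tail of $p'$ can reverse the sign of $w(p) - w(p')$ even when $p_i > p'_i$.
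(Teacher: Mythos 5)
Your proof of the first assertion is correct and is essentially the paper's own argument: integrality gives $p_i \ge p'_i + 1$, and the tail of $w_i^+(p')$ is bounded via $\sum_{j>i}|p'_j| \le 2n$ by $2n(2n+1)^{n-i-1} < (2n+1)^{n-i}$, which is exactly the paper's chain of inequalities, just arranged slightly differently.

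On the second assertion you have done more than the paper, and your caution is justified. The paper's proof of that part is the single sentence that the claim ``follows''; it never addresses the step you isolate, namely $w_i^+(p) \ge w_i(p)$, which amounts to nonnegativity of the lower-order terms of $p$. Your remark that the claim can fail for signed profiles is not merely a caveat: since the paper defines a profile as a vector of integers ``positive or negative'', the proposition as literally stated is false. For a concrete witness take $n = 2$ (so the base is $2n+1=5$), $p = \langle 1, -3 \rangle$ and $p' = \langle 0, 4 \rangle$: both satisfy the mass bound ($1+3 = 0+4 = 4 \le 2n$), the first difference is at $i = 1$ with $p_1 > p'_1$, and indeed $w_1(p) = 5 > 4 = w_1^+(p')$ as the first assertion promises, yet $w(p) = 2 < 4 = w(p')$. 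So the second assertion genuinely requires the extra hypothesis you introduce (nonnegativity of $p$ beyond index $i$), and with it your argument --- cancel the common prefix, then chain $w_i^+(p) \ge w_i(p) > w_i^+(p')$ --- is a complete and correct proof. That hypothesis holds where the paper applies the proposition to profiles of stable matchings (Lemma \ref{thm_sm_rm_highweightrankmax}), so your proof supports that use; be aware, though, that your justification imports context the statement itself does not contain, and the paper's later assertion (Proposition \ref{thm_sm_rm_iand1contributionToW_flows}) that the same reasoning extends unchanged to vb-flows whose entries may be as small as $-2n$ runs into precisely the counterexample above.
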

\begin{proof}
	Assume $p_i > p'_i$. Then $p_i$ must be at least $1$ larger than $p_i'$ since each profile element is an integer by definition. A value of $1$ for $p_i$ will contribute $(2n+1)^{n-i}$ to $w_i(p)$ and so it follows that $w_i(p)\geq w_i(p') + (2n+1)^{n-i}$. 
	
	Since $(2n+1)^{n-k}$ decreases as $k$ increases and $|p'_1| + |p'_2| + ... + |p'_n| \leq 2n$, the maximum weight contribution that $p'_{i+1}, p'_{i+2}, ..., p'_n$ can make to $w_i^+(p')$ is when $p'_{i+1} =2n$.

Through the following series of inequalities,

\begin{equation} \label{eq:sm_onelargerhalflast}
\begin{split}
w_i^+(p') & \leq w_i(p') + 2n(2n+1)^{n-(i+1)}\\
& \leq w_i(p) - (2n+1)^{n-i} + \frac{2n}{2n+1}(2n+1)^{n-i} \\
& \leq w_i(p) + \left(\frac{2n}{2n+1} - 1\right)(2n+1)^{n-i} \\
 & < w_i(p) \\
\end{split}
\end{equation}

it follows that $w_i(p) > w_i^+(p')$ as required. If $i$ is the first point at which $p$ and $p'$ differ then it follows that $w(p) > w(p')$.
\end{proof}

\begin{lemma}
\label{thm_sm_rm_highweightrankmax}
Let $I$ be an instance of \acrshort{smi} and let $M$ be a stable matching in $I$. If $w(p(M))$ is maximum amongst all stable matchings of $I$, where $p(M)$ is the profile of $M$, then $M$ is a rank-maximal stable matching.
\end{lemma}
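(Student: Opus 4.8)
The plan is to deduce the lemma directly from Proposition~\ref{thm_sm_rm_iand1contributionToW}, which already establishes that the high-weight function $w$ respects the lexicographic order on profiles: its final assertion says that whenever $i$ is the first index at which two profiles $p$ and $p'$ differ and $p_i > p'_i$, we have $w(p) > w(p')$. Since rank-maximality is defined via the lexicographic order $\prec$ on profiles, this monotonicity is precisely the bridge we need, and the lemma will follow by a short contradiction argument.

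First I would verify that the Proposition is applicable to the profiles in question. For any stable matching $M$ in $I$, each of the $n$ men and each of the $n$ women contributes exactly $1$ to the entry $p_k$ indexed by the rank of their assigned partner, so $p(M)$ has nonnegative integer entries with $\sum_{k} p_k = 2n$. Hence $\sum_k |p_k| = 2n \leq 2n$, and the hypothesis of Proposition~\ref{thm_sm_rm_iand1contributionToW} is satisfied for the profile of every stable matching of $I$.

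Next I would argue by contradiction. Suppose $M$ is not rank-maximal. Then by the definition of rank-maximality there is some stable matching $M'$ with $p(M) \prec p(M')$. Writing $p = p(M)$ and $p' = p(M')$, let $i$ be the first index at which they differ; by the definition of $\prec$ we have $p_i < p'_i$. Applying Proposition~\ref{thm_sm_rm_iand1contributionToW} with the roles of $p$ and $p'$ interchanged (so that the larger entry $p'_i$ plays the role of the strictly-greater coordinate) yields $w(p') > w(p)$, that is, $w(p(M')) > w(p(M))$. This contradicts the assumption that $w(p(M))$ is maximum over all stable matchings of $I$, and hence $M$ must be rank-maximal.

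The real work here has already been carried out in Proposition~\ref{thm_sm_rm_iand1contributionToW}; the lemma is essentially a one-line consequence once that monotonicity is in hand. The only points requiring care are bookkeeping: confirming that the total-mass bound $\sum_k |p_k| \leq 2n$ holds for stable-matching profiles so that the Proposition may legitimately be invoked, and applying the Proposition with the correct orientation of the inequality, so that the lexicographically larger profile is the one receiving the strictly larger weight.
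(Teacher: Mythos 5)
Your proof is correct and follows essentially the same route as the paper's: a contradiction argument that takes a lexicographically larger stable matching $M'$, identifies the first index $i$ where the profiles differ, and invokes Proposition~\ref{thm_sm_rm_iand1contributionToW} to conclude $w(p(M')) > w(p(M))$, contradicting maximality of $w(p(M))$. Your explicit verification that stable-matching profiles satisfy $\sum_k |p_k| = 2n$, so the Proposition's hypothesis holds, is a detail the paper leaves implicit, but it does not change the argument.
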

\begin{proof}
Suppose $w(p(M))$ is maximum amongst all stable matchings of $I$. Now, assume for contradiction that $M$ is not rank-maximal. Then, there exists some stable matching $M'$ in $I$ such that $M'$ lexicographically larger than $M$. Let $i$ be the first point at which $p(M)=\langle p_1, p_2, ..., p_n \rangle$ and $p(M')=\langle p'_1, p'_2, ..., p'_n \rangle$ differ. Since $M'$ is lexicographically larger than $M$ we know that $p'_i > p_i$ and by Proposition \ref{thm_sm_rm_iand1contributionToW} it follows that $w(p(M')) > w(p(M))$.

But this contradicts the fact that $w(p(M))$ is maximum over all stable matchings of $I$. Therefore our assumption that $M$ is not rank-maximal is false, as required.
\end{proof}

We now continue describing \citeauthor{ILG87}'s technique for finding a maximum weight closed subset of the rotation poset. The rotations are divided into positive and negative vertices as follows. A rotation $\rho$ is positive if $w(p(\rho)) > 0$ and negative if $w(p(\rho)) < 0$. A directed edge is added from the source to each negative vertex and is given a capacity equal to $|w(p(\rho))|$. A directed edge is also added between each positive vertex and $t$ with capacity $w(p(\rho))$. The high-weight network of instance $I_0$ is denoted $R_n(I_0)$ and is shown in Figure \ref{sm_rm:networkForRanks}. In this figure, each edge $e$ has a pair of associated integers $e_1/e_2$ where $e_1$ is the flow over $e$ and $e_2$ is the capacity of $e$.

\begin{figure}[]
\centering
    \begin{tikzpicture}
		\node [simpleNodes](r0) at (1*\graphxStd,4) {$\rho_0$};
		\node [simpleNodes](r1) at (0,2) {$\rho_1$};
		\node [simpleNodes](r2) at (2*\graphxStd,2.75) {$\rho_2$};
		\node [simpleNodes](r3) at (2*\graphxStd,1.25) {$\rho_3$};
		\node [simpleNodes](r4) at (1*\graphxStd,0) {$\rho_4$};
		
		\node [simpleNodes](s) at (-4,2) {$s$};
		\node [simpleNodes](t) at (7,2) {$t$};

		\draw[->, >=latex, standardLine] (r0.225) -- (r1.45);
		\draw[->, >=latex, standardLine] (r0.315) -- (r2.135);
		\draw[->, >=latex, standardLine] (r1.315) -- (r4.135);
		\draw[->, >=latex, standardLine] (r2.south) -- (r3.north);
		\draw[->, >=latex, standardLine] (r3.225) -- (r4.45);
		\draw[->, >=latex, standardLine] (r2.225) -- (r4.north);
		
		\draw[->, >=latex, standardLine] (r1.east) -- (t.west);
		\draw[->, >=latex, standardLine] (s.20) -- (r0.200);
		\draw[->, >=latex, standardLine] (s.340) -- (r3.200);
		\draw[->, >=latex, standardLine] (r4.20) -- (t.250);
		\draw[->, >=latex, standardLine] (r2.340) -- (t.160);

		\node at (3.3,2.3) {$\infty$};
		\node at (2,1.8) {$\infty$};
		\node at (2.6,0.65) {$\infty$};
		\node at (2.4,3.6) {$\infty$};
		\node at (0.5,0.9) {$\infty$};
		\node at (0.4,3) {$\infty$};

		\node[rotate=-9] at (4.85,3.05) {$0/$};
		\node[rotate=-9] at (4.8,2.65) {$1336336$};

		\node at (4.4,1.8) {$795036688/$};
		\node at (4.4,1.4) {$819168496$};

		\node[rotate=18] at (4.5,0.7) {$362063824/$};
		\node[rotate=18] at (4.62,0.3) {$362063824$};

		\node[rotate=20] at (-1.95,3.45) {$795036688/$};
		\node[rotate=20] at (-1.8,3.05) {$795036688$};

		\node[rotate=-6.7] at (-1.8,1.45) {$362063824/$};
		\node[rotate=-6.7] at (-1.85,1) {$408840208$};

		\end{tikzpicture}
    \caption{The high-weight network $R_n(I_0)$.}
  \label{sm_rm:networkForRanks}
    \end{figure}

\subsection{Maximum weight closed subset of $R_p(I)$} 
\label{sec_min_cut_stable} 

We wish to find a minimum cut in $R_n(I)$ as this will allow us to calculate the maximum weight closed subset of $R_p(I)$. 
By the Max Flow-Min Cut Theorem \cite{FF56}, we need only find a maximum flow through $R_n(I)$ in order to find a minimum cut in $R_n(I)$. \citet{ILG87} used the Sleator-Tarjan algorithm \cite{ST83} to find a maximum flow. This algorithm completes when no augmenting path exists in $R_n(I)$ with respect to the final flow. Several analogous definitions used in the Sleator-Tarjan algorithm are required when we move to a combinatorial approach and so are described below.

In $R_n(I)$, we denote the flow over a node or edge as $f$ and an $s$-$t$ cut as $c_T$ with capacity given by $c(c_T)$. In order to search for augmenting paths we construct a new network known as the \emph{residual graph}. Given a network $R_n(I)$ and a flow $f$ in $R_n(I)$, the \emph{residual graph} relative to $R_n(I)$ and $f$, denoted $R_{res}(I,f)$, is defined as follows. The vertex set of $R_{res}(I,f)$ is equal to the vertex set of $R_n(I)$. An edge $(u,v)$, known as a forward edge, is added to $R_{res}(I, f)$ with capacity $c(u,v) - f(u,v)$ if $(u,v) \in E$ and $f(u,v) < c(u,v)$. Similarly an edge $(u,v)$, known as a backwards edge, is added to $R_{res}(I, f)$ with capacity $f(v,u)$ if $(v,u) \in E$ and $f(v,u) > 0$. Using a breadth-first search in $R_{res}(I,f)$ we may find an augmenting path or determine that none exists in $O(|E|)$ time. Once an augmenting path $P$ is found we \emph{augment} $R_n(I)$ in the following way:

\begin{itemize}
	\item The \emph{residual capacity} $c_a$ is the minimum of the capacities of the edges in $P$ in $R_{res}(I,f)$;
	\item For each edge $(u,v) \in P$, if $(u,v)$ is a forwards edge, the flow through $(u,v)$ is increased by $c_a$, whilst if $(u,v)$ is a backwards edge, the flow through $(v,u)$ is decreased by $c_a$.
\end{itemize}

\citet{FF56} showed that if no augmenting path in $R_n(I)$ can be found then the flow $f$ in $R_n(I)$ is maximum.

 In Figure \ref{sm_rm:networkForRanks} we show the high-weight network $R_n(I_0)$ with a maximum flow. There is one minimum cut, $c_T = \{(s,\rho_0),(\rho_4,t)\}$. Note that $c_T$ is a cut since removing these edges leaves no path from $s$ to $t$. By the Max Flow-Min Cut Theorem \cite{FF56}, it is also a minimum cut since the capacity of $c_T$ ($1157100512$) is equal to the value of a maximum flow.
 For this cut $c_T$ we list every rotation $\rho$ such that $(\rho,t)$ is an edge in $R_n(I)$ and $(\rho,t)\notin c_T$. Then a maximum weight closed subset of the rotation poset is given by this set of rotations and their predecessors. $c_T$ has associated closed subset of $\{\rho_0,\rho_1,\rho_2\}$ which is precisely a maximum weight closed subset of $R_p(I_0)$. The man-optimal stable matching of $I_0$ is $$M = \{(m_1,w_5), (m_2,w_3), (m_3,w_8), (m_4,w_6), (m_5,w_7), (m_6,w_1), (m_7,w_2), (m_8,w_4)\}.$$ By eliminating rotations $\{\rho_0,\rho_1,\rho_2\}$ from the man-optimal stable matching, we find the rank-maximal stable matching $$M' = \{(m_1,w_3), (m_2,w_6), (m_3,w_1), (m_4,w_8), (m_5,w_7), (m_6,w_5), (m_7,w_2), (m_8,w_4)\}.$$

The following Theorem summarises the work in this section.

\begin{theorem}[\citep{ILG87}]
\label{thm_sm_rm_highweightrankmax_overall}
Let $I$ be an instance of \acrshort{smi}. A rank-maximal stable matching $M$ of $I$ can be found in $O(nm^2 \log n)$ time using weights that are exponential in $n$.
\end{theorem}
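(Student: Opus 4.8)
The plan is to establish the result in two parts: correctness of the exponential-weight approach, and the running-time bound. For correctness I would assemble the chain of reductions already set up in this section. Starting from the man-optimal stable matching, the algorithm builds the rotation poset $R_p(I)$, converts it to the rotation digraph $R_d(I)$, and then to the network $R_n(I)$ whose edge capacities are derived from the high-weight function $w$ of Definition \ref{def_sm_rm_highweightfunction}. By the Max Flow--Min Cut Theorem \cite{FF56}, computing a maximum flow in $R_n(I)$ yields a minimum $s$--$t$ cut $c_T$, and by the construction of $R_n(I)$ this cut identifies a maximum-weight closed subset of the rotation poset. By the $1$--$1$ correspondence between closed subsets of $R_p(I)$ and stable matchings \citep[Theorem 3.1]{ILG87}, eliminating exactly the rotations of this closed subset from the man-optimal stable matching produces a stable matching $M$ whose weight $w(p(M))$ is maximum over $\mathcal{M}_S$. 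Finally, Lemma \ref{thm_sm_rm_highweightrankmax} guarantees that a stable matching of maximum weight under $w$ is rank-maximal, so $M$ is the desired matching.

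For the running time I would account separately for the combinatorial work and for the cost of arithmetic over exponential weights. Constructing the rotation poset, the digraph and the network, computing each capacity $w(p(\rho))$, and recovering $M$ by eliminating the chosen rotations from the man-optimal matching, are all dominated by the max-flow computation; in particular none of these phases exceeds $O(m)$ elementary operations beyond the weight arithmetic. The dominant phase is the maximum-flow computation itself: using the Sleator--Tarjan algorithm \cite{ST83}, a maximum flow can be found in $O(|V||E|\log|V|)$ elementary operations, and since $|V| \le m$, $|E| \le m$ and $\log m = O(\log n)$ \citep{GI89}, this is $O(m^2 \log n)$ operations.

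The hard part, and the reason the final bound carries the extra factor of $n$, is that these are not constant-time operations. Each capacity $w(p(\rho))$, and each flow value accumulated during augmentation, can be as large as order $(2n+1)^{n-1}$, so representing it requires $\Theta(n)$ machine words and each addition or comparison of such numbers costs $O(n)$ time rather than $O(1)$. I would argue that every elementary step of the flow algorithm (residual-capacity computations, flow updates along augmenting paths, and cut-capacity comparisons) manipulates weights of this magnitude, so the $O(m^2\log n)$ operation count must be multiplied by the $O(n)$ per-operation arithmetic cost, giving $O(nm^2\log n)$ overall. The main obstacle is therefore the bookkeeping needed to confirm both that the max-flow phase genuinely dominates every other phase even after each is charged its big-number arithmetic cost, and that the $O(n)$ factor is both necessary and sufficient, i.e.\ that no weight ever grows beyond order $(2n+1)^{n-1}$ during the computation. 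The latter follows from the standing capacity constraint $|p_1| + \cdots + |p_a| \le 2n$ assumed in Definition \ref{def_sm_rm_highweightfunction}, which bounds the magnitude of every profile, and hence every weight, throughout the execution.
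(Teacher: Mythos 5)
Your proposal is correct and follows essentially the same route as the paper, which presents this theorem without a separate proof as a summary of Section \ref{sec_ran_max_exp_weights}: the same network construction over the rotation digraph, the same appeal to Lemma \ref{thm_sm_rm_highweightrankmax} and the Max Flow--Min Cut Theorem to extract a maximum-weight closed subset, and the same accounting of $O(m^2\log n)$ Sleator--Tarjan operations multiplied by an $O(n)$ per-operation cost for arithmetic on weights of magnitude $(2n+1)^{n-1}$. Your explicit attention to bounding weight growth and to checking that the max-flow phase dominates even after charging big-number arithmetic is, if anything, more careful than the paper's own treatment.
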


An alternative to high-weight values when looking for a rank-maximal stable matching, is to use a new approach, involving polynomially-bounded weight vectors, to find a maximum weight closed subset of rotations. This is the focus of the rest of this paper.

\section[Finding a rank-maximal stable matching using polynomially-bounded weight vectors]{\texorpdfstring{Finding a rank-maximal stable matching using \\polynomially-bounded weight vectors}{Finding a rank-maximal stable matching using polynomially-bounded weight vectors}}
\label{sm_rm_combinatoricApproach}


\subsection{Strategy}
\label{sm_prof_strategy}

Following a similar strategy to \citet{ILG87}, we aim to show that we can return a rank-maximal stable matching in $O(nm^2 \log n)$ time without the use of exponential weights. The process we follow is described in the steps below.

\begin{enumerate}
	\item Calculate man-optimal and woman-optimal stable matchings using the Extended Gale-Shapley Algorithm -- $O(m)$ time;
	\item Find all rotations using the minimal differences algorithm -- $O(m)$ time;
	\item Build the rotation digraph and network -- $O(m)$ time; \label{SM_RM_steptochangeA}
	\item Find a minimum cut of the network in $O(nm^2 \log n)$ time without reverting to high weights; \label{SM_RM_steptochangeB}
	\item Use this cut to find a maximum profile closed subset $S$ of the rotation poset -- $O(m)$ time; 
	\item Eliminate the rotations of $S$ from the man-optimal matching to find the rank-maximal stable matching. \label{SM_RM_steptochangeMCSORP} 
\end{enumerate}

In the next section we discuss required adaptions to the high-weight procedure. 

\subsection{Vb-networks and vb-flows}
\label{sm_prof_comb_defns}

In this section we look at steps in the strategy to find a rank-maximal stable matching without the use of exponential weights (Section \ref{sm_prof_strategy}) which either require adaptations or further explanation.

In Step \ref{SM_RM_steptochangeMCSORP} of our strategy we eliminate the rotations of a maximum profile closed subset of the rotation poset from the man-optimal stable matching. We now present Lemma \ref{sm_rm_eliminatingCSORP}, an analogue of \cite[Corollary 3.6.1]{GI89}, which shows that eliminating a maximum profile closed subset of the rotation poset from the man-optimal stable matching results in a rank-maximal stable matching. 

\begin{lemma}
	\label{sm_rm_eliminatingCSORP}
	Let $I$ be an instance of \acrshort{smi} and let $M_0$ be the man-optimal stable matching in $I$. A rank-maximal stable matching $M$ may be obtained by eliminating a maximum profile closed subset $S$ of the rotation poset from $M_0$.
\end{lemma}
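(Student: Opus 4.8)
The plan is to lean on two facts already established: the $1$-$1$ correspondence between closed subsets of the rotation poset and stable matchings \citep[Theorem 3.1]{ILG87}, and the additivity of profiles under rotation elimination, namely that eliminating a closed subset $S$ of rotations from $M_0$ yields a stable matching whose profile is $p(M_0) + p(S)$. Given these, the whole task reduces to showing that maximising $p(S)$ over closed subsets coincides with maximising the profile of the resulting stable matching.

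The one ingredient not yet recorded explicitly is that the lexicographic order $\preceq$ on profiles is invariant under translation by a fixed profile. First I would establish this: for any profiles $p$, $p'$ and any fixed profile $q$, the first index $k$ at which $p$ and $p'$ differ is also the first index at which $p + q$ and $p' + q$ differ, and $(p_k + q_k) - (p'_k + q_k) = p_k - p'_k$, so the sign of the difference at that index is unchanged. Hence $p \preceq p'$ if and only if $p + q \preceq p' + q$. This is the bridge between the \emph{maximum profile} condition on $S$ (which maximises $p(S)$) and \emph{rank-maximality} (which maximises $p(M_0) + p(S)$).

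With translation-invariance in hand, the main argument is short. Let $S$ be a maximum profile closed subset of the rotation poset and let $M$ be the stable matching obtained by eliminating $S$ from $M_0$, so that $p(M) = p(M_0) + p(S)$. Take any stable matching $M'$; by the $1$-$1$ correspondence, $M'$ is obtained by eliminating some closed subset $S'$ from $M_0$, giving $p(M') = p(M_0) + p(S')$. Since $S$ has maximum profile among all closed subsets, $p(S') \preceq p(S)$, and applying translation-invariance with $q = p(M_0)$ yields $p(M') \preceq p(M)$. As $M'$ was arbitrary, $p(M)$ is lexicographically maximum over all stable matchings in $I$, and therefore $M$ is rank-maximal.

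The main obstacle is essentially just pinning down the translation-invariance of $\preceq$ and confirming that it lets us transfer the maximality of $p(S)$ to the profile of the eliminated matching; the remainder is bookkeeping supplied by the correspondence theorem and the profile-additivity identity. A minor point to dispatch cleanly is that profiles of different stable matchings may carry different numbers of trailing nonzero entries, but since every profile is padded to length $n$ with zeros, the comparison and the translation argument go through without modification.
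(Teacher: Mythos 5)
Your proposal is correct and follows essentially the same route as the paper's proof: both rely on the $1$-$1$ correspondence between closed subsets of the rotation poset and stable matchings, together with the additivity identity $p(M) = p(M_0) + \sum_{\rho_i \in S} p(\rho_i)$, to transfer maximality of $p(S)$ to maximality of $p(M)$. The only differences are cosmetic --- you argue directly rather than by contradiction, and you spell out the translation-invariance of $\preceq$ (cancelling $p(M_0)$ from both sides), a step the paper uses implicitly.
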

\begin{proof}
Let $R_p(I)$ be the rotation poset of $I$. By \citet[Theorem 2.5.7]{GI89}, there is a $1$-$1$ correspondence between closed subsets of $R_p(I)$ and the stable matchings of $I$.
Let $S$ be a maximum profile closed subset of the rotation poset $R_p(I)$ and let $M$ be the unique corresponding stable matching. Then, $p(M) = p(M_0) + \sum_{\rho_i \in S} p(\rho_i)$.
Suppose $M$ is not rank-maximal. Then there is a stable matching $M'$ such that $p(M') \succ p(M)$. As above, $M'$ corresponds to a unique closed subset $S'$ of the rotation poset. Also $p(M') = p(M_0) + \sum_{\rho_i \in S'} p(\rho_i)$. But $p(M') \succ p(M)$ and so $S$ cannot be a maximum profile closed subset of $R_p(I)$, a contradiction.
\end{proof}

Steps \ref{SM_RM_steptochangeA} and \ref{SM_RM_steptochangeB} of our strategy are the only places where we are required to check that it is possible to directly substitute an operation involving large weights taking $O(n)$ time with a comparable profile operation taking $O(n)$ time.

The first deviation from \citeauthor{GI89}'s method (described in Section \ref{smi_rm_exp_weight_network_sec}) is in the creation of a \emph{vector-based network} (abbreviated to \emph{vb-network}). For ease of description we denote this new vb-network as $R_n'(I)$ to distinguish it from the high-weight version $R_n(I)$. We now define a \emph{vb-capacity} in $R_n'(I)$ which is of similar notation to that of a profile.

\begin{definition}{}
In a vb-network $R_n'(I)$, the vector-based capacity (vb-capacity) of an edge $e$ is a vector $\boldsymbol{c}(e) = \langle c_1, c_2, ..., c_n \rangle$, where $n$ is the number of men or women in $I$ and $c_i \geq 0$ for $1 \leq i \leq n$.
\end{definition}

As before we add a source $s$ and sink $t$ vertex to the rotation digraph. We replace each original digraph edge with an edge with vb-capacity $\langle \infty, \infty, ..., \infty \rangle$ ($\infty$ repeated $n$ times). For convenience these edges are marked with `$\boldsymbol\infty$' in vb-network diagrams. The definition of a positive and negative rotation is also amended. Let $\rho$ have profile $p(\rho)=\langle p_1, p_2, ..., p_n\rangle $. Let $p_k$ be the first non-zero profile element where $1 \leq k \leq n$. We now define a positive rotation $\rho$ as a rotation where $p_k > 0$, and a negative rotation is one where $p_k < 0$. Define the absolute value operation, denoted $|p(\rho)|$, as follows. If $p_k > 0$, then leave all elements unchanged. If $p_k < 0$, then reverse the sign of all non-zero profile elements. Figure \ref{sm_rm:ex_rotationsProfsAndAbsProfs} shows the profile and absolute profile for each rotation of $I_0$. Then we add a directed edge to the vb-network from $s$ to each negative rotation vertex $\rho$ with a vb-capacity of $|p(\rho)|$ and a directed edge from each positive rotation vertex $\rho$ to $t$ with a vb-capacity of $p(\rho)$.

    \begin{figure}[]
\centering 
\begin{subfigure}[t]{0.35\textwidth}
$\rho_0$: $(m_1, w_5)$ $(m_3, w_8)$\\
$\rho_1$: $(m_1, w_8)$ $(m_2, w_3)$ $(m_4, w_6)$\\
$\rho_2$: $(m_3, w_5)$ $(m_6, w_1)$\\
$\rho_3$: $(m_7, w_2)$ $(m_5, w_7)$\\
$\rho_4$: $(m_3, w_1)$ $(m_5, w_2)$\\
  \caption{Rotations for instance $I_0$.}
  \label{sm_rm:ex_rotationsListA}
    \end{subfigure}
    
    \vspace{0.7cm}
    
   \begin{subfigure}[t]{0.45\textwidth}
   $p(\rho_0) = \langle -2, 1, 1, 1, 0, -1 \rangle$\\
$p(\rho_1) = \langle 2, 0, -1, -1, -1, -2, 1, 2\rangle$\\
$p(\rho_2) = \langle 0, 0, 1, -1\rangle$\\
$p(\rho_3) = \langle -1, 0, 1, 1, -1 \rangle$\\
$p(\rho_4) = \langle 1, -2, 0, 0, 0, 1\rangle$\\
  \caption{Rotation profiles.}
  \label{sm_rm:ex_rotationsprofiles}
    \end{subfigure}	
 \begin{subfigure}[t]{0.45\textwidth}
   $|p(\rho_0)| = \langle 2, -1, -1, -1, 0, 1 \rangle$\\
$|p(\rho_1)| = \langle 2, 0, -1, -1, -1, -2, 1, 2 \rangle$\\
$|p(\rho_2)| = \langle 0, 0, 1, -1 \rangle$\\
$|p(\rho_3)| = \langle 1, 0 -1, -1, 1 \rangle$\\
$|p(\rho_4)| = \langle 1, -2, 0, 0, 0, 1 \rangle$\\
  \caption{Absolute rotation profiles.}
  \label{sm_rm:ex_rotationsAbsoluteProfiles}
    \end{subfigure}	
	\caption{The profile and absolute profile for rotations of $I_0$.}
  \label{sm_rm:ex_rotationsProfsAndAbsProfs}
    \end{figure}

Let $e = (u, v)$ be an edge in $R_n'(I)$ with capacity $\boldsymbol{c}(e)$. We define a \emph{vb-flow} over $e$, denoted $\boldsymbol{f}(e)$, in the following way.

\begin{definition}{}
\label{smi_rm_defn_vb_flow}
In a vb-network $R_n'(I)$, the vector-based flow (vb-flow) over an edge $e$ is a vector $\boldsymbol{f}(e) = \langle f_1, f_2, ..., f_n \rangle$, where $\sum_{i=1}^{n}{|f_i|} \leq 2n$.
\end{definition}

 By Definition \ref{smi_rm_defn_vb_flow}, the sum of the absolute values of the elements of a flow over an edge cannot exceed $2n$, which implies that each flow element $f_i$ satisfies $-2n\leq f_i\leq 2n$. We will use this fact in the next section when proving the equivalence of the vector-based and high-weight approaches.
We also remark that it is possible for $f_i<0$ for some $i$ $(2 \leq i \leq n)$ and for $\boldsymbol{f}(e)$ to be a positive vb-flow. For example
$\langle 0, 0, 0\rangle \prec \langle 0, 1, -10\rangle$, and hence we are not bounding each individual element of a vb-flow over an edge by a minimum of zero.

We may now define a vb-flow over $R_n'(I)$ as follows.

\begin{definition}{}
\label{sm_rm_def_defofflow} In a vb-network $R_n'(I)=(V,E)$, a vector-based flow (vb-flow) is a function $f:E \rightarrow \mathbb{R}^n$ such that \footnote{\citeauthor{ST83}'s algorithm \cite{ST83} (used later in this section) may be used in a way that assumes integer value flows, and hence for the rest of this paper we assume vb-flow value elements will only ever be integers. That is, a vb-flow is a function $f:E \rightarrow \mathbb{N}^n$.},
\begin{enumerate}[i)]
	\item (vb-capacity) $\boldsymbol{f}(e) \succeq \boldsymbol{0}$ and $\boldsymbol{f}(e) \preceq \boldsymbol{c}(e)$ for all $e \in E$;
	\item (vb-conservation) $\sum\limits_{(u,v) \in E} \boldsymbol{f}(u,v) = \sum\limits_{(v,w) \in E} \boldsymbol{f}(v,w)$ for all $v \in V \backslash \{s,t\}$.
\end{enumerate}

Vb-flows are non-negative by the vb-capacity constraint.
	
\end{definition}

In addition we define the following notation and terminology for vb-flows. Let $\boldsymbol{f}$ and be a vb-flow in $R_n'(I) = (V,E)$. Define $\text{val}(\boldsymbol{f}) = \sum\{f(s,v): v \in V \wedge (s,v) \in E \}$. We define a maximum vb-flow $\boldsymbol{f}$ to be a vb-flow such that there is no other vb-flow $\boldsymbol{f'}$ where $\text{val}(\boldsymbol{f'}) > \text{val}(\boldsymbol{f})$.


The vb-network $R_n'(I_0)$ and the corresponding high-weight version $R_n(I_0)$ are shown in Figure \ref{sm_rm:networkForProfiles_HCandPB}. In order to translate \emph{vector-based values} (vb-values) to high-weight values we use the same formula as for profiles. That is, $(2n+1)^{n-i}$ for each man or woman assigned to their $i$th choice \cite{ILG87}. As an example, the vb-value $\langle 0, 0, 1, -1\rangle$ for rotation $\rho_2$ translates to a high-weight value of $0 * 17^7 + 0 * 17^6 + 1 * 17^5 - 1 * 17^4 = 1336336$.

\begin{figure}[]
\centering
  
  \begin{subfigure}[t]{0.8\textwidth}
\centering
    \begin{tikzpicture}
		\node [simpleNodes](r0) at (1*\graphxStd,4) {$\rho_0$};
		\node [simpleNodes](r1) at (0,2) {$\rho_1$};
		\node [simpleNodes](r2) at (2*\graphxStd,2.75) {$\rho_2$};
		\node [simpleNodes](r3) at (2*\graphxStd,1.25) {$\rho_3$};
		\node [simpleNodes](r4) at (1*\graphxStd,0) {$\rho_4$};
		
		\node [simpleNodes](s) at (-4,2) {$s$};
		\node [simpleNodes](t) at (7,2) {$t$};

		\draw[->, >=latex, standardLine] (r0.225) -- (r1.45);
		\draw[->, >=latex, standardLine] (r0.315) -- (r2.135);
		\draw[->, >=latex, standardLine] (r1.315) -- (r4.135);
		\draw[->, >=latex, standardLine] (r2.south) -- (r3.north);
		\draw[->, >=latex, standardLine] (r3.225) -- (r4.45);
		\draw[->, >=latex, standardLine] (r2.225) -- (r4.north);
		
		\draw[->, >=latex, standardLine] (r1.east) -- (t.west);
		\draw[->, >=latex, standardLine] (s.20) -- (r0.200);
		\draw[->, >=latex, standardLine] (s.340) -- (r3.200);
		\draw[->, >=latex, standardLine] (r4.20) -- (t.250);
		\draw[->, >=latex, standardLine] (r2.340) -- (t.160);

		\node at (3.3,2.3) {$\boldsymbol\infty$};
		\node at (2,1.8) {$\boldsymbol\infty$};
		\node at (2.6,0.65) {$\boldsymbol\infty$};
		\node at (2.4,3.6) {$\boldsymbol\infty$};
		\node at (0.5,0.9) {$\boldsymbol\infty$};
		\node at (0.4,3) {$\boldsymbol\infty$};
		
		\node[rotate=-9] at (4.8,2.65) {$\langle 0, 0, 1, -1\rangle$};

		\node[rotate=18] at (4.5,0.7) {$\langle 1, -2, 0, 0, 0, 1\rangle$};
		\node[rotate=20] at (-1.8,3.05) {$\langle 2, -1, -1, -1, 0, 1\rangle$};
		\node[rotate=-6.7] at (-1.8,1.45) {$\langle 1, 0 -1, -1, 1\rangle$};
	
		
		\node[tinyDot](top) at (6,1.9) {};
		\node[tinyDot](bottom) at (6,-0.2) {};
		\draw[-, >=latex, standardLine] (top.south) -- (bottom.north);
	\node at (4.7,-0.6) {$\langle 2, 0, -1, -1, -1, -2, 1, 2\rangle$};
		
		\end{tikzpicture}
    \caption{Vector-based network $R_n'(I_0)$.}
  \label{sm_rm:networkForProfiles_profileBased}
  \end{subfigure}
  
  \vspace{0.7cm}
  
  \begin{subfigure}[t]{0.8\textwidth}
\centering
    \begin{tikzpicture}
		\node [simpleNodes](r0) at (1*\graphxStd,4) {$\rho_0$};
		\node [simpleNodes](r1) at (0,2) {$\rho_1$};
		\node [simpleNodes](r2) at (2*\graphxStd,2.75) {$\rho_2$};
		\node [simpleNodes](r3) at (2*\graphxStd,1.25) {$\rho_3$};
		\node [simpleNodes](r4) at (1*\graphxStd,0) {$\rho_4$};
		
		\node [simpleNodes](s) at (-4,2) {$s$};
		\node [simpleNodes](t) at (7,2) {$t$};

		\draw[->, >=latex, standardLine] (r0.225) -- (r1.45);
		\draw[->, >=latex, standardLine] (r0.315) -- (r2.135);
		\draw[->, >=latex, standardLine] (r1.315) -- (r4.135);
		\draw[->, >=latex, standardLine] (r2.south) -- (r3.north);
		\draw[->, >=latex, standardLine] (r3.225) -- (r4.45);
		\draw[->, >=latex, standardLine] (r2.225) -- (r4.north);
		
		\draw[->, >=latex, standardLine] (r1.east) -- (t.west);
		\draw[->, >=latex, standardLine] (s.20) -- (r0.200);
		\draw[->, >=latex, standardLine] (s.340) -- (r3.200);
		\draw[->, >=latex, standardLine] (r4.20) -- (t.250);
		\draw[->, >=latex, standardLine] (r2.340) -- (t.160);

		\node at (3.3,2.3) {$\infty$};
		\node at (2,1.8) {$\infty$};
		\node at (2.6,0.65) {$\infty$};
		\node at (2.4,3.6) {$\infty$};
		\node at (0.5,0.9) {$\infty$};
		\node at (0.4,3) {$\infty$};
		
		\node[rotate=-9] at (4.8,2.65) {$1336336$};
		\node at (4.4,1.8) {$819168496$};
		\node[rotate=18] at (4.5,0.7) {$362063824$};
		\node[rotate=20] at (-1.8,3.05) {$795036688$};
		\node[rotate=-6.7] at (-1.8,1.45) {$408840208$};
	
		
		\end{tikzpicture}
    \caption{High-weight network $R_n(I_0)$.}
  \label{sm_rm:networkForProfiles_highweight}
  \end{subfigure}

  \caption[Vector-based network $R_n'(I_0)$ and network $R_n(I_0)$.]{Vector-based network $R_n'(I_0)$ and network $R_n(I_0)$ with both vector-based and high-weight capacities respectively.}
  \label{sm_rm:networkForProfiles_HCandPB}
    \end{figure}

    An augmenting path in $R_n'(I)$ has an analogous definition to the standard definition of an augmenting path. The \emph{vector-based residual network} (vb-residual network) $R_{res}'(I, \boldsymbol{f})$ of $R_n'(I)$ with vb-capacities is created in the same way as the residual network $R_{res}(I,f)$ of $R_n(I)$. A cut in $R_n'(I)$, denoted $c_T'$, is defined in a similar way to a cut in $R_n(I)$ and has capacity $\boldsymbol{c}(c_T') = \sum{\boldsymbol{c}(e)}$ where $e$ is an edge in $c_T'$. Where the flow in $R_n(I)$ is equivalent to the vb-flow in $R_n'(I)$, we want to show that a maximum flow in $R_n(I)$ is also equivalent to a maximum vb-flow in $R_n'(I)$, and that the Max Flow-Min Cut Theorem holds for vb-networks.


\subsection{Rank-maximal stable matchings}

In this section, we show how we are able to use our vb-network to find a rank-maximal stable matching. First, we show that the Max Flow-Min Cut Theorem \cite{FF56} can be extended to a vb-network. Next, we prove that, analogous to the exponential weight case, a maximum profile closed subset of the rotation poset may be found by obtaining a minimum cut of the vb-network. Finally, we show that \citeauthor{ST83}'s Max Flow algorithm \citep{ST83} may be adapted to work with vb-networks, and that we are able to find a rank-maximal stable matching in $O(nm^2 \log n)$ time using polynomially-bounded weight vectors.

In Lemma \ref{lemma:sm_proveprofile_consistent_highweight} we show that vb-flows in $R_n'(I)$ correspond to high-weight flows in $R_n(I)$. Let $\boldsymbol{f}$ be a vb-flow in a vb-network, where $\text{val}(\boldsymbol{f}) = \langle f_1, f_2, ..., f_n \rangle$ and let $c_T'$ be a cut where $\boldsymbol{c}(c_T') = \langle c_{T_1}', c_{T_2}', ..., c_{T_n}' \rangle$.


\begin{prop}
	\label{thm_sm_rm_iand1contributionToW_flows}
Let $\boldsymbol{f}$ and $\boldsymbol{f'}$ be vb-flows. Let $w_i(\text{val}(\boldsymbol{f}))$ denote the $i$th term of $w(\text{val}(\boldsymbol{f}))$ and let $w_i^+(\text{val}(\boldsymbol{f'}))$ denote the sum of $w(\text{val}(\boldsymbol{f'}))$ terms for all $j$ such that $i \leq j \leq n$. If $f_i>f'_i$, then $w_i(\text{val}(\boldsymbol{f})) > w_i^+(\text{val}(\boldsymbol{f'}))$. Additionally, if $i$ is the first point at which $\boldsymbol{f}$ and $\boldsymbol{f'}$ differ, then $w(\text{val}(\boldsymbol{f})) > w(\text{val}(\boldsymbol{f'}))$.

Identical results hold for vb-capacities.
\end{prop}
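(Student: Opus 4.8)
The plan is to observe that Proposition~\ref{thm_sm_rm_iand1contributionToW_flows} is essentially a verbatim restatement of Proposition~\ref{thm_sm_rm_iand1contributionToW}, with the profiles $p$ and $p'$ replaced by the vectors $\text{val}(\boldsymbol{f})$ and $\text{val}(\boldsymbol{f'})$, and so the entire argument should reduce to checking that these vectors satisfy the hypotheses required by the earlier proposition. Concretely, Proposition~\ref{thm_sm_rm_iand1contributionToW} requires that the sum of absolute values of the entries of each vector be bounded by $2n$. First I would verify that $\text{val}(\boldsymbol{f})$ and $\text{val}(\boldsymbol{f'})$ meet this bound: since a vb-flow over any single edge satisfies $\sum_{i=1}^{n} |f_i| \leq 2n$ by Definition~\ref{smi_rm_defn_vb_flow}, and $\text{val}(\boldsymbol{f})$ is the total flow out of the source, I would appeal to the fact (established in the surrounding discussion) that the value of a vb-flow is itself a valid profile-like vector whose absolute entries sum to at most $2n$. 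This is the one genuine verification step, because $\text{val}(\boldsymbol{f})$ is a \emph{sum} of edge-flows over the source edges rather than a single edge-flow.

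Once the $2n$-bound is confirmed for both $\text{val}(\boldsymbol{f})$ and $\text{val}(\boldsymbol{f'})$, the rest is immediate: I would apply Proposition~\ref{thm_sm_rm_iand1contributionToW} directly with $p = \text{val}(\boldsymbol{f})$ and $p' = \text{val}(\boldsymbol{f'})$. The hypothesis $f_i > f_i'$ is exactly the hypothesis $p_i > p_i'$ of the earlier proposition, so its conclusion $w_i(p) > w_i^+(p')$ translates word-for-word to $w_i(\text{val}(\boldsymbol{f})) > w_i^+(\text{val}(\boldsymbol{f'}))$. Similarly, the ``first point of difference'' clause carries over verbatim to give $w(\text{val}(\boldsymbol{f})) > w(\text{val}(\boldsymbol{f'}))$. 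For the final sentence, ``Identical results hold for vb-capacities,'' I would note that a vb-capacity $\boldsymbol{c}(e)$ on a source or sink edge equals $|p(\rho)|$ or $p(\rho)$ for some rotation $\rho$, and a cut capacity $\boldsymbol{c}(c_T')$ is a sum of such vectors; the analogue of the $2n$-bound holds because the capacities arise from rotation profiles, whose absolute entries likewise sum to at most $2n$, so the same invocation of Proposition~\ref{thm_sm_rm_iand1contributionToW} applies.

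The main obstacle I anticipate is precisely the bound-checking step for the \emph{aggregate} quantities $\text{val}(\boldsymbol{f})$ and $\boldsymbol{c}(c_T')$, rather than for individual edges. Definition~\ref{smi_rm_defn_vb_flow} bounds each edge-flow, but $\text{val}(\boldsymbol{f})$ sums over all source edges, so a priori its entries could exceed $2n$; I would need to rely on the fact that the value of a valid vb-flow corresponds to the profile of a genuine (partial) matching adjustment, which by the structure of profiles has absolute entries summing to at most $2n$ (each of the $n$ men and $n$ women contributes to exactly one rank). If the paper does not already make this correspondence explicit, this is the point where a short lemma or remark would be needed; otherwise the proposition follows almost entirely by citing Proposition~\ref{thm_sm_rm_iand1contributionToW} with the substituted vectors.
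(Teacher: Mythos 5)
Your overall strategy---reducing the statement to Proposition~\ref{thm_sm_rm_iand1contributionToW}---is the same as the paper's, but the two part ways at exactly the step you flag as your ``main obstacle,'' and that is where your version has a genuine gap. You want to cite Proposition~\ref{thm_sm_rm_iand1contributionToW} as a black box with $p = \text{val}(\boldsymbol{f})$, which forces you to verify its hypothesis $\sum_i |f_i| \leq 2n$ for the \emph{aggregate} vector $\text{val}(\boldsymbol{f})$. This cannot be done the way you propose. Your suggested justification---that the value of a vb-flow ``corresponds to the profile of a genuine (partial) matching adjustment''---is not available: $\text{val}(\boldsymbol{f})$ sums the flows over all source edges, and an arbitrary vb-flow (in particular an intermediate flow arising partway through an augmenting-path algorithm, which is precisely where this proposition gets used) corresponds to no closed subset of rotations and to no stable matching, so there is no matching profile to appeal to. The situation is even worse for the capacity half of the statement: a cut capacity $\boldsymbol{c}(c_T')$ is a sum over possibly many edges of vectors derived from rotation profiles, and the number of such edges is not bounded by a constant, so the sum-of-absolute-values bound of $2n$ simply need not hold for it (your claim that rotation profiles have absolute entries summing to at most $2n$ is itself not right---a rotation with $k$ pairs can contribute up to $4k$). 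So the verbatim reduction you describe cannot be completed.

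The paper's proof avoids this by re-running the \emph{proof} of Proposition~\ref{thm_sm_rm_iand1contributionToW} rather than invoking its \emph{statement}. It observes that the only properties that argument actually uses are (i) integrality of the entries (the footnote to Definition~\ref{sm_rm_def_defofflow}) and (ii) boundedness of the individual entries, and it notes (remark following Definition~\ref{smi_rm_defn_vb_flow}) that vb-flow entries lie in $[-2n, 2n]$; the relaxation of the lower bound from $0$ to $-2n$ does not disturb the chain of inequalities. Indeed, a per-element bound alone suffices: if $|f'_j| \leq 2n$ for all $j > i$, then $\sum_{j>i} f'_j (2n+1)^{n-j} \leq 2n \sum_{j>i} (2n+1)^{n-j} = (2n+1)^{n-i} - 1 < (2n+1)^{n-i}$, which is exactly the slack the argument needs. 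This is the repair you should make: replace the black-box citation by this re-derivation under per-element bounds (which are what the vectors in question actually satisfy), rather than trying to establish the sum bound $\sum_i |f_i| \leq 2n$ for $\text{val}(\boldsymbol{f})$ and for cut capacities, which is stronger than what is true.
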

\begin{proof}
Recall from the footnote of Definition \ref{sm_rm_def_defofflow} that vb-flow values must contain only integer elements.

The only difference between the structure of a profile $p$ and a vb-flow $\boldsymbol{f}$ is that each profile element must take a value between $0$ and $2n$ inclusive, whereas the lower bound of vb-flow elements is relaxed to $-2n$. This difference does not affect the validity of Proposition \ref{thm_sm_rm_iand1contributionToW} and so we may use identical reasoning to show that if $f_i>f'_i$, then $w_i(\text{val}(\boldsymbol{f})) > w_i^+(\text{val}(\boldsymbol{f'}))$, and if $i$ is the first point at which $\boldsymbol{f}$ and $\boldsymbol{f'}$ differ, then $w(\text{val}(\boldsymbol{f})) > w(\text{val}(\boldsymbol{f'}))$.

Since vb-capacities have an identical structure to vb-flows the all results also hold for the vb-capacity case.
 \end{proof}

\begin{lemma}
\label{lemma:sm_proveprofile_consistent_highweight}
	Let $\boldsymbol{f}$ and $\boldsymbol{f'}$ be vb-flows in $R_n'(I)$. Then $\text{val}(\boldsymbol{f}) \prec \text{val}(\boldsymbol{f'})$ if and only if $w(\text{val}(\boldsymbol{f})) < w(\text{val}(\boldsymbol{f'}))$. 
	
	Additionally, let $c_T'$ and $c_T''$ be cuts in $R_n'(I)$. Then $\boldsymbol{c}(c_T') \prec \boldsymbol{c}(c_T'')$ if and only if $w(\boldsymbol{c}(c_T')) < w(\boldsymbol{c}(c_T''))$.
\end{lemma}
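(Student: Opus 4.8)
The plan is to reduce the lemma to Proposition \ref{thm_sm_rm_iand1contributionToW_flows}, which has already established the key arithmetic fact for both vb-flows and vb-capacities. The statement is a biconditional, so I would prove each direction separately, and I would exploit the fact that both $\prec$ and the high-weight ordering $<$ are total orders on the relevant objects, which lets me collapse much of the work via a trichotomy argument.

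First I would handle the forward direction for flows. Suppose $\text{val}(\boldsymbol{f}) \prec \text{val}(\boldsymbol{f'})$. By the definition of $\prec$ there is a first index $i$ at which the two value-vectors differ, and at that index $f_i < f'_i$ with $f_j = f'_j$ for all $j < i$. Applying Proposition \ref{thm_sm_rm_iand1contributionToW_flows} with the roles of $\boldsymbol{f}$ and $\boldsymbol{f'}$ interchanged (so that the vector with the strictly larger $i$th entry is $\boldsymbol{f'}$), the ``additionally'' clause of that proposition yields directly $w(\text{val}(\boldsymbol{f'})) > w(\text{val}(\boldsymbol{f}))$, i.e. $w(\text{val}(\boldsymbol{f})) < w(\text{val}(\boldsymbol{f'}))$, as required. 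This is essentially immediate once the proposition is invoked.

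For the reverse direction I would argue by contraposition using totality of $\prec$. Suppose $w(\text{val}(\boldsymbol{f})) < w(\text{val}(\boldsymbol{f'}))$ but, for contradiction, that $\text{val}(\boldsymbol{f}) \not\prec \text{val}(\boldsymbol{f'})$. Since $\prec$ is a total order on vectors (equality or one strictly preceding the other), the negation leaves two cases: either $\text{val}(\boldsymbol{f}) = \text{val}(\boldsymbol{f'})$, in which case $w(\text{val}(\boldsymbol{f})) = w(\text{val}(\boldsymbol{f'}))$ since $w$ is a well-defined function of the vector, contradicting the strict inequality; or $\text{val}(\boldsymbol{f'}) \prec \text{val}(\boldsymbol{f})$, in which case the already-proven forward direction gives $w(\text{val}(\boldsymbol{f'})) < w(\text{val}(\boldsymbol{f}))$, again contradicting the hypothesis. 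Hence $\text{val}(\boldsymbol{f}) \prec \text{val}(\boldsymbol{f'})$ must hold. The capacity statement then follows by exactly the same two-direction argument, since Proposition \ref{thm_sm_rm_iand1contributionToW_flows} explicitly asserts that the identical results hold for vb-capacities, and a cut capacity $\boldsymbol{c}(c_T')$ is itself a vector whose absolute-value-sum is bounded (being a sum of at most $n$ capacities, each arising from a rotation profile with bounded total weight), so the same hypotheses apply.

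The only genuine point needing care — and the step I would flag as the main obstacle — is verifying that the vectors appearing here actually satisfy the side condition of Proposition \ref{thm_sm_rm_iand1contributionToW_flows} (and ultimately of Proposition \ref{thm_sm_rm_iand1contributionToW}), namely that the sum of absolute values of the entries is at most $2n$ so that the telescoping inequality chain in \eqref{eq:sm_onelargerhalflast} remains valid. For $\text{val}(\boldsymbol{f})$ this is guaranteed directly by Definition \ref{smi_rm_defn_vb_flow}, but for the cut-capacity vectors $\boldsymbol{c}(c_T')$ I would need to confirm that a minimum cut's capacity also obeys this bound; I would argue this by observing that the relevant cuts correspond to profile differences between stable matchings, whose entries sum (in absolute value) to at most $2n$ because each man and each woman contributes to exactly one rank. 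Once that bound is in place the arithmetic is entirely inherited from the earlier proposition and no new computation is required.
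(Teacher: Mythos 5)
Your proposal is correct and follows essentially the same route as the paper's own proof: the forward direction is a direct application of Proposition~\ref{thm_sm_rm_iand1contributionToW_flows}, the reverse direction is the same equality-or-reversed-order trichotomy argument by contradiction, and the capacity claim is dispatched by the identical reasoning. The side condition you flag (the bound of $2n$ on absolute-value sums) is a point the paper does not revisit in this lemma at all---it is implicitly absorbed into the statement of Proposition~\ref{thm_sm_rm_iand1contributionToW_flows} for vb-flows and vb-capacities---so your extra care is reasonable, though note that Definition~\ref{smi_rm_defn_vb_flow} bounds the flow on a single edge rather than $\text{val}(\boldsymbol{f})$ (a sum over all edges leaving $s$), so that step of your side-condition argument inherits the same leap the paper makes rather than closing it.
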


\begin{proof}
Suppose that $\text{val}(\boldsymbol{f}) \prec \text{val}(\boldsymbol{f'})$. We know $\text{val}(\boldsymbol{f}) \neq \text{val}(\boldsymbol{f'})$, and at the first point $i$ at which $\text{val}(\boldsymbol{f})$ and $\text{val}(\boldsymbol{f'})$ differ $f_i<f'_i$. By Proposition \ref{thm_sm_rm_iand1contributionToW_flows}, $w(\text{val}(\boldsymbol{f})) < w(\text{val}(\boldsymbol{f'}))$ as required.

Now assume $w(\text{val}(\boldsymbol{f})) < w(\text{val}(\boldsymbol{f'}))$ and suppose for contradiction that $\text{val}(\boldsymbol{f}) \succeq \text{val}(\boldsymbol{f'})$. If $\text{val}(\boldsymbol{f}) = \text{val}(\boldsymbol{f'})$ then clearly $w(\text{val}(\boldsymbol{f})) = w(\text{val}(\boldsymbol{f'}))$ a contradiction. Therefore suppose $\text{val}(\boldsymbol{f}) \succ \text{val}(\boldsymbol{f'})$. Then, we can use identical arguments to the preceding paragraph to prove that $w(\text{val}(\boldsymbol{f})) > w(\text{val}(\boldsymbol{f'}))$. But this contradicts our original assumption that $w(\text{val}(\boldsymbol{f})) < w(\text{val}(\boldsymbol{f'}))$. Therefore, $\text{val}(\boldsymbol{f}) \prec \text{val}(\boldsymbol{f'})$.

Using identical reasoning to the vb-flow case for the  vb-capacity case, we can show that $\boldsymbol{c}(c_T') \prec \boldsymbol{c}(c_T'')$ if and only if $w(\boldsymbol{c}(c_T')) < w(\boldsymbol{c}(c_T''))$.
\end{proof}

%
%
%
%
%
%
%

Lemma \ref{thm:sm_rm_vb_noaug_maxflow} shows that if there is no augmenting path in a vb-network, then the vb-flow existing in this network is maximum.

\begin{lemma}
\label{thm:sm_rm_vb_noaug_maxflow}
	Let $I$ be an instance of \acrshort{smi} and let $R_n'(I)$ and $R_n(I)$ define the vb-network and network of $I$ respectively. For all vb-flows and vb-capacities we define a corresponding flow or capacity for $R_n(I)$ using the high-weight function $w$ (Definition \ref{def_sm_rm_highweightfunction}). Suppose $\boldsymbol{f}$ is a vb-flow in $R_n'(I)$ that admits no augmenting path. Then $\boldsymbol{f}$ is a maximum vb-flow in $R_n'(I)$.
\end{lemma}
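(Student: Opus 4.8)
The plan is to transfer the classical Ford--Fulkerson result from the high-weight network $R_n(I)$ to the vb-network $R_n'(I)$ by exploiting the order-preserving correspondence of Lemma \ref{lemma:sm_proveprofile_consistent_highweight}. To each vb-flow $\boldsymbol{f}$ I would associate the high-weight flow $f$ defined edgewise by $f(e) = w(\boldsymbol{f}(e))$. First I would check that $f$ is a genuine flow in $R_n(I)$: the vb-capacity constraint $\boldsymbol{f}(e) \preceq \boldsymbol{c}(e)$ gives $w(\boldsymbol{f}(e)) \le w(\boldsymbol{c}(e))$ by Lemma \ref{lemma:sm_proveprofile_consistent_highweight}, and conservation is inherited because $w$ is linear, so applying $w$ to the vb-conservation identity at each internal vertex yields the ordinary conservation identity for $f$. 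The same linearity gives $\text{val}(f) = w(\text{val}(\boldsymbol{f}))$.

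Next I would argue that the vb-residual network $R_{res}'(I,\boldsymbol{f})$ and the ordinary residual network $R_{res}(I,f)$ have exactly the same edge set. A forward residual edge $(u,v)$ is present in $R_{res}'(I,\boldsymbol{f})$ precisely when $\boldsymbol{f}(u,v) \prec \boldsymbol{c}(u,v)$, which by Lemma \ref{lemma:sm_proveprofile_consistent_highweight} holds if and only if $w(\boldsymbol{f}(u,v)) < w(\boldsymbol{c}(u,v))$, i.e.\ exactly when the forward edge is present in $R_{res}(I,f)$. A symmetric argument, using $w(\boldsymbol{0}) = 0$, handles the backward edges, and the edges of infinite capacity are always present on both sides. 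Since the two residual networks share the same vertices and edges, $s$-$t$ reachability coincides, so $\boldsymbol{f}$ admits an augmenting path in $R_n'(I)$ if and only if $f$ admits one in $R_n(I)$.

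With these correspondences in hand the argument finishes quickly. Assuming $\boldsymbol{f}$ admits no augmenting path, neither does $f$, so $f$ is a maximum flow in $R_n(I)$ by \citet{FF56}. If $\boldsymbol{f}$ were not a maximum vb-flow there would be a vb-flow $\boldsymbol{f'}$ with $\text{val}(\boldsymbol{f}) \prec \text{val}(\boldsymbol{f'})$; Lemma \ref{lemma:sm_proveprofile_consistent_highweight} then yields $w(\text{val}(\boldsymbol{f})) < w(\text{val}(\boldsymbol{f'}))$, and since $\text{val}(w(\boldsymbol{f'})) = w(\text{val}(\boldsymbol{f'}))$ this exhibits a flow in $R_n(I)$ of strictly larger value than $f$, contradicting the maximality of $f$. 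Hence $\boldsymbol{f}$ is a maximum vb-flow.

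I expect the main obstacle to be the edge-set identity of the two residual networks. The delicate point is that the residual-edge conditions compare a flow against a capacity rather than a flow against a flow, so I must verify that the order-preserving equivalence applies uniformly to any two vectors drawn from the common bounded-integer domain of vb-flows and vb-capacities; this is exactly the structural observation underlying Proposition \ref{thm_sm_rm_iand1contributionToW_flows}, which treats flows and capacities identically, and I would invoke it to justify the $\prec \iff <$ step for the mixed flow--capacity comparisons, taking care that the infinite-capacity edges are handled consistently on both sides.
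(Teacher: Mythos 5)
Your proposal is correct and takes essentially the same route as the paper: both pass to the high-weight flow $f$ corresponding to $\boldsymbol{f}$ in $R_n(I)$, use the no-augmenting-path hypothesis to conclude that $f$ is a maximum flow there, and then pull maximality back to $R_n'(I)$ via the order-preserving correspondence of Lemma \ref{lemma:sm_proveprofile_consistent_highweight}. The only difference is internal to the middle step: where you prove that the two residual networks have identical edge sets (justifying the mixed flow--capacity comparisons via Proposition \ref{thm_sm_rm_iand1contributionToW_flows}), the paper argues one direction only, lifting a hypothetical augmenting path of $R_n(I)$ edge-by-edge to an augmenting path of $R_n'(I)$ by incrementing or decrementing vb-flows by $\langle 0, 0, \ldots, 1 \rangle$ --- a weaker statement that suffices for the contradiction.
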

\begin{proof}
	Let $f$ be the flow corresponding to $\boldsymbol{f}$ in $R_n(I)$. First, we show that $f$ is a maximum flow in $R_n(I)$. Suppose for contradiction that $f$ is not a maximum flow. Then there must exist an augmenting path relative to $f$ in $R_n(I)$. Let $E_P$ denote the edges involved in this augmenting path. Then, for each edge $(u,v) \in E_P$ of this augmenting path either,
	\begin{itemize}
		\item $f(u,v) < c(u, v)$, in which case the vb-flow $\boldsymbol{f}(u,v)$ through edge $(u,v) \in R_n'(I)$ may increase by $\langle 0, 0, ..., 1 \rangle$, or;
		\item $f(v,u) > 0$, and so the vb-flow $\boldsymbol{f}(v, u)$ through edge $(v,u) \in R_n'(I)$ may decrease by $\langle 0, 0, ..., 1 \rangle$.
	\end{itemize}
	
Therefore, there exists an augmenting path relative to $\boldsymbol{f}$ in $R_n'(I)$. But this contradicts the fact that $\boldsymbol{f}$ is a vb-flow in $R_n'(I)$ that admits no augmenting path. Hence our assumption that $f$ is not a maximum flow in $R_n(I)$ is false.

We now show that $\boldsymbol{f}$ is a maximum vb-flow in $R_n'(I)$. Suppose for contradiction that this is not the case. Then, there must exist a vb-flow $\boldsymbol{f'}$ such that $\text{val}(\boldsymbol{f'}) \succ \text{val}(\boldsymbol{f})$. By Lemma \ref{lemma:sm_proveprofile_consistent_highweight}, $w(\text{val}(\boldsymbol{f'})) > w(\text{val}(\boldsymbol{f}))$. Let $f'$ be the flow corresponding to $\boldsymbol{f'}$ in $R_n(I)$.
Then we have the following inequality: $$\text{val}(f') = w(\text{val}(\boldsymbol{f'})) > w(\text{val}(\boldsymbol{f})) = \text{val}(f)$$  
contradicting the fact that $f$ is a maximum flow in $R_n(I)$. Therefore $\boldsymbol{f}$ is a maximum vb-flow in $R_n'(I)$.
\end{proof}

This means if we use any Max Flow algorithm that terminates with no augmenting paths (such as the Ford-Fulkerson Algorithm \cite{FF56} adapted to work with vb-flows and vb-capacities) we have found a maximum flow in a vb-network.

We now show that the Max Flow-Min Cut Theorem can be extended to a vb-network.

\begin{theorem}
\label{thm:sm_rm_vb_mfmc}
Let $I$ be an instance of \acrshort{smi} and let $R_n'(I)=(V',E')$ and $R_n(I)$ define the vb-network and network of $I$ respectively. For all vb-flows and vb-capacities we define a corresponding flow or capacity for $R_n(I)$ using the high-weight function $w$ (Definition \ref{def_sm_rm_highweightfunction}).		 Let $\boldsymbol{f}$ be a maximum vb-flow through $R_n'(I)$ and $c_{T}'$ be a minimum cut of $R_n'(I)$. Then $\boldsymbol{c}(c_{T}') = \text{val}(\boldsymbol{f})$. 
\end{theorem}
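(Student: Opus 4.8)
The plan is to reduce the vector-based Max Flow--Min Cut statement to the classical scalar version by exploiting the weight-preserving correspondence already established. First I would invoke the standard Max Flow--Min Cut Theorem \cite{FF56} in the high-weight network $R_n(I)$: if $f$ is the scalar flow corresponding to the maximum vb-flow $\boldsymbol{f}$ (obtained via the high-weight function $w$ of Definition \ref{def_sm_rm_highweightfunction}), then $f$ is a maximum flow in $R_n(I)$ by Lemma \ref{thm:sm_rm_vb_noaug_maxflow}, and hence there is a minimum cut $c_T$ in $R_n(I)$ with $c(c_T) = \text{val}(f)$. The task is then to transfer both the existence of a matching minimum cut and the equality of capacity and flow value back to the vector world.

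The key steps, in order, would be as follows. First, observe that every cut $c_T'$ of $R_n'(I)$ corresponds to a cut $c_T$ of $R_n(I)$ on the same edge set, and by Lemma \ref{lemma:sm_proveprofile_consistent_highweight} (the vb-capacity half) we have $w(\boldsymbol{c}(c_T')) = c(c_T)$; moreover, since $w$ respects the lexicographic ordering, a cut is $\prec$-minimum among vb-cuts exactly when its scalar image is minimum-capacity in $R_n(I)$. Second, take the minimum cut $c_T'$ of $R_n'(I)$ whose existence is asserted in the statement, and let $c_T$ be its scalar image; I would argue $c_T$ is a minimum cut of $R_n(I)$, so $c(c_T) = \text{val}(f)$ by classical Max Flow--Min Cut. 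Third, I would translate this scalar equality back using the injectivity of $w$ on the relevant range of vectors: because $\boldsymbol{c}(c_T')$ and $\text{val}(\boldsymbol{f})$ are both vectors whose entries are bounded in absolute value by $2n$ (Definition \ref{smi_rm_defn_vb_flow} bounds flow entries, and cut capacities sum the profiles of at most all rotations, keeping entries polynomially bounded), the equality $w(\boldsymbol{c}(c_T')) = w(\text{val}(\boldsymbol{f}))$ of their scalar weights forces $\boldsymbol{c}(c_T') = \text{val}(\boldsymbol{f})$ as vectors, which is precisely the claim.

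The crucial lever throughout is the bijectivity of $w$ restricted to integer vectors with entries in $[-2n, 2n]$: Proposition \ref{thm_sm_rm_iand1contributionToW_flows} shows that the first index of disagreement between two such vectors dominates the sign of the difference of their weights, so $w$ is order-preserving and in particular injective on this range. I expect the main obstacle to be the capacity-boundedness check needed to apply this injectivity to cut capacities rather than flows: a cut capacity $\boldsymbol{c}(c_T')$ is a sum of several rotation profiles, and I must confirm its entries still lie within the range on which $w$ is provably injective (or, equivalently, argue directly that the weak duality inequality $\text{val}(\boldsymbol{f}) \preceq \boldsymbol{c}(c_T')$ combined with the matching scalar weights forces vector equality). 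I would address this either by proving a vb-analogue of weak duality (every vb-flow value is $\preceq$ every vb-cut capacity, via vb-conservation summed across the cut) and then showing the common weight pins the two vectors together, or by bounding cut-capacity entries directly; the weak-duality route is cleaner and avoids any delicate argument about whether intermediate cut sums exceed the $2n$ bound.
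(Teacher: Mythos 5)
Your route is genuinely different from the paper's, so it is worth spelling out the contrast. The paper never pulls the equality $\boldsymbol{c}(c_T') = \text{val}(\boldsymbol{f})$ back from the scalar world: it constructs the reachability cut $c_T' = \{(u,v) : u \in A,\, v \in B\}$ from the maximum vb-flow (where $A$ is the set of vertices reachable along partial augmenting paths), notes that every forward edge of this cut is saturated and every backward edge carries zero vb-flow, and concludes $\text{val}(\boldsymbol{f}) = \boldsymbol{c}(c_T')$ directly \emph{as vectors}, exactly as in the classical argument. The high-weight network $R_n(I)$ is used only to certify minimality: a vb-cut $c_T''$ with $\boldsymbol{c}(c_T'') \prec \boldsymbol{c}(c_T')$ would, by the cut half of Lemma \ref{lemma:sm_proveprofile_consistent_highweight}, yield a scalar cut of capacity below $\text{val}(f)$, contradicting Max Flow--Min Cut in $R_n(I)$. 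Note that this only ever applies $w$ to an \emph{already proven} vector equality and compares cut capacities with cut capacities; no injectivity of $w$ and no mixed flow-versus-cut comparison is needed. Your proof instead establishes everything scalar-side (maximum flow, minimum cut, scalar equality) and then must transfer the equality back to vectors; that reverse transfer is precisely what the paper's construction avoids. A small citation point in the same spirit: Lemma \ref{thm:sm_rm_vb_noaug_maxflow} states that \emph{no augmenting path} implies maximality, so to conclude that the scalar image $f$ of your maximum $\boldsymbol{f}$ is a maximum flow you must first argue that a maximum vb-flow admits no augmenting path and then reuse the first half of that lemma's proof, rather than its statement.

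The genuine gap is in your primary justification of the pull-back step. Injectivity of $w$ does \emph{not} follow from entries being bounded in absolute value by $2n$, let alone from being ``polynomially bounded'': already for $n=2$ one has $w(\langle 1, -4\rangle) = 5 - 4 = 1 = w(\langle 0, 1\rangle)$, and both vectors have all entries in $[-2n, 2n]$. The hypothesis that actually drives Proposition \ref{thm_sm_rm_iand1contributionToW} is the bound on the \emph{sum} of absolute values, $\sum_i |p_i| \leq 2n$, which is strictly stronger than entry-wise boundedness; and this sum bound is exactly what is not automatic for a cut capacity (a sum of many rotation-profile capacities), nor even, strictly speaking, for $\text{val}(\boldsymbol{f})$ itself, since Definition \ref{smi_rm_defn_vb_flow} bounds each edge's flow separately while $\text{val}(\boldsymbol{f})$ sums the flows on all source edges. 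So the ``injectivity from boundedness'' argument, as stated, fails. Your own fallback is the correct repair and should be promoted to the main argument: lexicographic order makes $\mathbb{Z}^n$ a totally ordered abelian group (adding the same vector to both sides preserves $\preceq$), so the classical weak-duality computation transfers verbatim to give $\text{val}(\boldsymbol{f}) \preceq \boldsymbol{c}(c_T')$ for every vb-flow and vb-cut; if this inequality were strict, the order-preserving direction of the $w$-transfer (the same one-directional fact the paper itself relies on, via Proposition \ref{thm_sm_rm_iand1contributionToW_flows}) would force $w(\text{val}(\boldsymbol{f})) < w(\boldsymbol{c}(c_T'))$, contradicting the scalar equality you obtained from classical Max Flow--Min Cut. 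With that substitution your proof is sound, and at the same level of rigour as the paper's treatment of capacities in Proposition \ref{thm_sm_rm_iand1contributionToW_flows}.
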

\begin{proof}

Given $\boldsymbol{f}$ is a maximum vb-flow in $R_n'(I)$, we define a cut $c_T'$ in $R_n'(I)$ in the following way. A \emph{partial augmenting path} is an augmenting path from the source vertex $s$ to vertex $u \neq t$ in $V$ with respect to $\boldsymbol{f}$ in $R_n'(I)$. Note that, by Lemma \ref{thm:sm_rm_vb_noaug_maxflow} no augmenting path from $s$ to $t$ may exist at this point since $\boldsymbol{f}$ is a maximum vb-flow. Let $A$ be the set of reachable vertices along partial augmenting paths, and let $B = V \backslash A$. Then $s\in A$ and $t \in B$. Define $c_T' = \{(u,v) : u \in A, v \in B\}$. Then $c_T'$ is a cut in $R_n'(I)$. Since there is no partial augmenting path extending between vertices in $A$ and vertices in $B$, we know that for vertices $u \in A$ and $v \in B$,

\begin{itemize}
	\item if $(u,v) \in E'$ then $\boldsymbol{f}(u,v) = \boldsymbol{c}(u,v)$, and;
	\item if $(v,u) \in E'$ then $\boldsymbol{f}(v,u) = \boldsymbol{0}$.
\end{itemize}

Therefore $\text{val}(\boldsymbol{f}) = \boldsymbol{c}(c_T')$ (see, for example, \citet[p. 721, Lemma 26.4]{CLRS09} for a proof of this statement).

Let $f$ be the flow corresponding to $\boldsymbol{f}$ in $R_n(I)$. We now show that $c_T'$ is a minimum cut in $R_n'(I)$. Suppose for contradiction that this is not the case. Then there must exist a cut $c_T''$ in $R_n'(I)$ such that $\boldsymbol{c}(c_T'') \prec \boldsymbol{c}(c_T')$. By Lemma \ref{lemma:sm_proveprofile_consistent_highweight},
$w(\boldsymbol{c}(c_T'')) < w(\boldsymbol{c}(c_T'))$ and so we have the following inequality.

$$c(c_T'') = w(\boldsymbol{c}(c_T'')) < w(\boldsymbol{c}(c_T')) = w(\text{val}(\boldsymbol{f})) = \text{val}(f)$$

But then $c_T''$ is a cut with smaller capacity than $\text{val}(f)$ in $R_n(I)$ contradicting the Max Flow-Min Cut Theorem in $R_n(I)$. Hence $c_T'$ is a minimum cut in $R_n'(I)$.
\end{proof}

As an example, Figure \ref{sm_rm:networkForProfiles_profileBased_Flow} shows a maximum flow over the vb-network $R_n'(I_0)$, and Figure \ref{sm_rm:networkForProfiles_highweight_Flow} shows these vb-flows translated into the high-weight network $R_n(I_0)$. Similar to the high weight case, in $R_n'(I_0)$, each edge $e$ has a pair of associated integers $e_1/e_2$ where $e_1$ is the vb-flow over $e$ and $e_2$ is the vb-capacity of $e$. In Figure \ref{sm_rm:networkForProfiles_profileBased_Flow} each edge flow is positive, that is, the first non-zero element of each vb-flow is positive as required. The maximum vb-flow shown in this figure has saturated both edge $(s, \rho_0)$ leaving the source $s$ and edge $(\rho_4, t)$ entering the sink $t$. Let $c_T' = \{(s, \rho_0), (\rho_4, t)\}$. Clearly $c_T'$ constitutes a cut as removing these edges leaves no path from $s$ to $t$. Since the vb-capacity of $c_T'$ ($\langle 3, -3, -3, 1, -1, 0, 2\rangle$) is equal to the value of a maximum vb-flow of $R_n'(I_0)$, $c_T'$ is also a minimum cut, by Theorem \ref{thm:sm_rm_vb_mfmc}. The equivalent situation in the high-weight network is shown in Figure \ref{sm_rm:networkForProfiles_highweight_Flow}.

\begin{figure}[t]
\centering
  \begin{subfigure}[t]{0.8\textwidth}
\centering
    \begin{tikzpicture}
		\node [simpleNodes](r0) at (1*\graphxStd,4) {$\rho_0$};
		\node [simpleNodes](r1) at (0,2) {$\rho_1$};
		\node [simpleNodes](r2) at (2*\graphxStd,2.75) {$\rho_2$};
		\node [simpleNodes](r3) at (2*\graphxStd,1.25) {$\rho_3$};
		\node [simpleNodes](r4) at (1*\graphxStd,0) {$\rho_4$};
		
		\node [simpleNodes](s) at (-4,2) {$s$};
		\node [simpleNodes](t) at (7,2) {$t$};

		\draw[->, >=latex, standardLine] (r0.225) -- (r1.45);
		\draw[->, >=latex, standardLine] (r0.315) -- (r2.135);
		\draw[->, >=latex, standardLine] (r1.315) -- (r4.135);
		\draw[->, >=latex, standardLine] (r2.south) -- (r3.north);
		\draw[->, >=latex, standardLine] (r3.225) -- (r4.45);
		\draw[->, >=latex, standardLine] (r2.225) -- (r4.north);
		
		\draw[->, >=latex, standardLine] (r1.east) -- (t.west);
		\draw[->, >=latex, standardLine] (s.20) -- (r0.200);
		\draw[->, >=latex, standardLine] (s.340) -- (r3.200);
		\draw[->, >=latex, standardLine] (r4.20) -- (t.250);
		\draw[->, >=latex, standardLine] (r2.340) -- (t.160);

		\node at (3.3,2.3) {$\boldsymbol\infty$};
		\node at (2,1.8) {$\boldsymbol\infty$};
		\node at (2.6,0.65) {$\boldsymbol\infty$};
		\node at (2.4,3.6) {$\boldsymbol\infty$};
		\node at (0.5,0.9) {$\boldsymbol\infty$};
		\node at (0.4,3) {$\boldsymbol\infty$};
		
		\node[rotate=-9] at (4.85,3.15) {$\langle 0\rangle /$};
		\node[rotate=-9] at (4.8,2.65) {$\langle 0, 0, 1, -1\rangle$};

		\node[rotate=-6.7] at (-1.8,1.45) {$\langle 1, -2, 0, 0, 0, 1\rangle /$};
		\node[rotate=-6.7] at (-1.95,0.95) {$\langle 1, 0 -1, -1, 1\rangle$};
		
		\node[rotate=20] at (-1.9,3.55) {$\langle 2, -1, -1, -1, 0, 1\rangle /$};
		\node[rotate=20] at (-1.8,3.05) {$\langle 2, -1, -1, -1, 0, 1\rangle$};
		
		\node[rotate=18] at (4.5,0.7) {$\langle 1, -2, 0, 0, 0, 1\rangle /$};
		\node[rotate=18] at (4.65,0.25) {$\langle 1, -2, 0, 0, 0, 1\rangle$};

		
		\node[tinyDot](top) at (6,1.9) {};
		\node[tinyDot](bottom) at (6,-0.2) {};
		\draw[-, >=latex, standardLine] (top.south) -- (bottom.north);
	\node at (4.7,-1.1) {$\langle 2, 0, -1, -1, -1, -2, 1, 2\rangle$};
	\node at (4.25,-0.6) 
{$\langle 2, -1, -1, -1, 0, 1\rangle /$};
		
		\end{tikzpicture}
    \caption{Vector-based network $R_n'(I_0)$.}
  \label{sm_rm:networkForProfiles_profileBased_Flow}
  \end{subfigure}
  
    \vspace{0.7cm}
  
\begin{subfigure}[t]{0.8\textwidth}
\centering
    \begin{tikzpicture}
		\node [simpleNodes](r0) at (1*\graphxStd,4) {$\rho_0$};
		\node [simpleNodes](r1) at (0,2) {$\rho_1$};
		\node [simpleNodes](r2) at (2*\graphxStd,2.75) {$\rho_2$};
		\node [simpleNodes](r3) at (2*\graphxStd,1.25) {$\rho_3$};
		\node [simpleNodes](r4) at (1*\graphxStd,0) {$\rho_4$};
		
		\node [simpleNodes](s) at (-4,2) {$s$};
		\node [simpleNodes](t) at (7,2) {$t$};

		\draw[->, >=latex, standardLine] (r0.225) -- (r1.45);
		\draw[->, >=latex, standardLine] (r0.315) -- (r2.135);
		\draw[->, >=latex, standardLine] (r1.315) -- (r4.135);
		\draw[->, >=latex, standardLine] (r2.south) -- (r3.north);
		\draw[->, >=latex, standardLine] (r3.225) -- (r4.45);
		\draw[->, >=latex, standardLine] (r2.225) -- (r4.north);
		
		\draw[->, >=latex, standardLine] (r1.east) -- (t.west);
		\draw[->, >=latex, standardLine] (s.20) -- (r0.200);
		\draw[->, >=latex, standardLine] (s.340) -- (r3.200);
		\draw[->, >=latex, standardLine] (r4.20) -- (t.250);
		\draw[->, >=latex, standardLine] (r2.340) -- (t.160);

		\node at (3.3,2.3) {$\infty$};
		\node at (2,1.8) {$\infty$};
		\node at (2.6,0.65) {$\infty$};
		\node at (2.4,3.6) {$\infty$};
		\node at (0.5,0.9) {$\infty$};
		\node at (0.4,3) {$\infty$};

				\node[rotate=-9] at (4.85,3.05) {$0/$};
		\node[rotate=-9] at (4.8,2.65) {$1336336$};

		\node at (4.4,1.8) {$795036688/$};
		\node at (4.4,1.4) {$819168496$};

		\node[rotate=18] at (4.5,0.7) {$362063824/$};
		\node[rotate=18] at (4.62,0.3) {$362063824$};

		\node[rotate=20] at (-1.95,3.45) {$795036688/$};
		\node[rotate=20] at (-1.8,3.05) {$795036688$};

		\node[rotate=-6.7] at (-1.8,1.45) {$362063824/$};
		\node[rotate=-6.7] at (-1.85,1) {$408840208$};

		
		\end{tikzpicture}
    \caption{High-weight network $R_n(I_0)$.}
  \label{sm_rm:networkForProfiles_highweight_Flow}
  \end{subfigure}

  \caption[Maximum vb-flow and flow in the networks $R_n'(I_0)$ and $R_n(I_0)$.]{Maximum vb-flow and flow in the networks $R_n'(I_0)$ and $R_n(I_0)$ with both vector-based and high-weight capacities respectively.}
  \label{sm_rm:networkForProfiles_HC_PB_Flow}
    \end{figure}

In order to determine which rotations must be eliminated from the man-optimal stable matching, we must first determine a maximum profile closed subset of the rotation poset. As with the example in Section \ref{sec_ran_max_exp_weights}, we find the positive vertices which have edges into $t$ that are not in $c_T'$. These are $\rho_1$ and $\rho_2$. In Theorem \ref{th:sm_rm_proofRankMaximal} we will prove that a maximum profile closed subset of the vb-network comprises these vertices and their predecessors: $\{\rho_0, \rho_1, \rho_2\}$. It was shown in Section \ref{sec_ran_max_exp_weights} that this was indeed a maximum weight closed subset of $R_p(I_0)$.

The following theorem is a restatement of \citeauthor{GI89}'s theorem \citep[p. 130]{GI89} proving that a maximum profile closed subset of the rotation poset $R_p(I)$ can be found by finding a minimum $s$-$t$ cut of the vb-network $R_n'(I)$, when using a vector-based weight function.


%
%

\begin{theorem} 
	\label{th:sm_rm_proofRankMaximal}
	 Let $I$ be an instance of \acrshort{smi} and let $R_p(I)$, $R_d(I)$ and $R_n'(I)$ denote the rotation poset, rotation digraph and vb-network of $I$ respectively. Let $c_T'$ be a minimum $s$-$t$ cut in $R_n'(I)$, and let $\mathcal{P}_{c_T'}$ be the positive vertices of the network whose edges into $t$ are not in $c_T'$. Then the vertices $\mathcal{P}_{c_T'}$ and their predecessors define a maximum profile closed subset of $R_p(I)$. Further $\mathcal{P}_{c_T'}$ is exactly the set of positive vertices of this closed subset of the rotation poset.
	\end{theorem}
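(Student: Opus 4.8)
The plan is to reduce the vector-based statement to the already-established high-weight version of this theorem, exploiting the correspondence between vb-flows/vb-cuts and high-weight flows/cuts proved in Lemma \ref{lemma:sm_proveprofile_consistent_highweight} and the vb Max Flow-Min Cut Theorem \ref{thm:sm_rm_vb_mfmc}. The key observation is that the high-weight function $w$ is order-preserving on both flow values and cut capacities, so a minimum cut $c_T'$ in $R_n'(I)$ corresponds under $w$ to a minimum cut in $R_n(I)$; then the analogous result for the exponential-weight network (the result of \citeauthor{GI89} \citep[p. 130]{GI89} restated at the end of Section \ref{sec_ran_max_exp_weights}) can be invoked almost verbatim.

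First I would fix a minimum $s$-$t$ cut $c_T'$ in $R_n'(I)$ and let $f$ be the high-weight flow corresponding to the maximum vb-flow $\boldsymbol{f}$, which by Theorem \ref{thm:sm_rm_vb_mfmc} satisfies $\boldsymbol{c}(c_T') = \text{val}(\boldsymbol{f})$. Applying $w$ and Lemma \ref{lemma:sm_proveprofile_consistent_highweight}, the image of $c_T'$ under $w$ is a minimum cut in $R_n(I)$ with capacity $w(\boldsymbol{c}(c_T')) = w(\text{val}(\boldsymbol{f})) = \text{val}(f)$. The set $\mathcal{P}_{c_T'}$ of positive vertices whose edges into $t$ avoid the cut is determined purely by the combinatorial structure of the cut, which is identical in both networks since the two networks share the same underlying digraph and the same edges to/from $s$ and $t$ (only the capacities differ). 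Hence $\mathcal{P}_{c_T'}$ is exactly the set picked out by the corresponding high-weight cut.

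Next I would invoke the high-weight theorem (the restatement of \citeauthor{GI89}'s result \citep[p. 130]{GI89}) to conclude that $\mathcal{P}_{c_T'}$ together with its predecessors forms a maximum weight closed subset of $R_p(I)$, and that $\mathcal{P}_{c_T'}$ is exactly the set of positive vertices of that closed subset. It then remains to translate ``maximum weight'' back into ``maximum profile.'' Letting $S$ be this closed subset, the profile of the corresponding stable matching is $p(M_0) + \sum_{\rho_i \in S} p(\rho_i)$; by Lemma \ref{lemma:sm_proveprofile_consistent_highweight} (and the order-preserving property of $w$ established via Proposition \ref{thm_sm_rm_iand1contributionToW_flows}), a closed subset of maximum high-weight is precisely a closed subset of lexicographically maximum profile, since $w$ is strictly order-preserving on the relevant bounded vectors. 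This gives that $S$ is a maximum profile closed subset, and combining with Lemma \ref{sm_rm_eliminatingCSORP} yields that eliminating $S$ from $M_0$ produces a rank-maximal stable matching.

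I expect the main obstacle to be the bookkeeping of the ``closed subset'' correspondence: verifying carefully that the set $\mathcal{P}_{c_T'}$ and ``its predecessors'' coincide with the closed subset produced by the high-weight argument, and that the predecessor (closure) structure is genuinely identical across the two networks. One must confirm that the infinite-capacity digraph edges — which encode the poset precedence and thus force closure — are unchanged between $R_n'(I)$ and $R_n(I)$, so that any minimum cut in either network must respect the same closure constraints (an edge of capacity $\boldsymbol{\infty}$, equivalently $w$-image $\infty$, can never be cut). Once this structural identity is pinned down, the order-preservation lemmas make the rest a direct transfer from the exponential-weight setting, and the final ``further'' clause about $\mathcal{P}_{c_T'}$ being exactly the positive vertices of the closed subset follows from the same combinatorial identification.
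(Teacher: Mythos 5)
Your proposal is correct in outline, but it takes a genuinely different route from the paper. The paper never transfers this theorem to the exponential-weight network: it re-derives the closure-problem argument directly in the vector-based setting, defining for each $S \subseteq \mathcal{P}$ (the positive rotations) the set $N(S)$ of negative predecessors, observing that the edges from $s$ into $N(S)$ together with the edges from $\mathcal{P}\setminus S$ into $t$ form a cut of vb-capacity $w(\mathcal{P}\setminus S) + |w(N(S))|$, and then arguing that a minimum cut $c_T'$ consists of exactly such a family for $S^\ast = \mathcal{P}_{c_T'}$ (the $\boldsymbol{\infty}$-capacity edges cannot be cut), so that minimising cut capacity is literally the same as maximising $w(S)+w(N(S))$, i.e.\ finding a maximum profile closed subset --- all carried out with lexicographic vector arithmetic and without reference to $R_n(I)$. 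Your argument instead pushes everything through the order-preserving map $w$ (via Lemma \ref{lemma:sm_proveprofile_consistent_highweight} and Theorem \ref{thm:sm_rm_vb_mfmc}) and invokes the high-weight theorem of \citeauthor{GI89} \citep[p. 130]{GI89} as a black box; this is the same transfer technique the paper itself uses to prove Lemma \ref{thm:sm_rm_vb_noaug_maxflow} and Theorem \ref{thm:sm_rm_vb_mfmc}, so your proof is shorter and arguably more uniform, whereas the paper's direct proof is self-contained and keeps the correctness argument independent of the exponential-weight machinery it is trying to displace (the high-weight structural result is only described, never proved, in Section \ref{sec_min_cut_stable}). Two points in your sketch need to be made explicit for the transfer to be airtight: first, that a rotation is positive/negative under the vb-definition (sign of the first non-zero entry of $p(\rho)$) if and only if it is positive/negative under the high-weight definition (sign of $w(p(\rho))$), since otherwise the two networks need not have the same source and sink edges; second, that Proposition \ref{thm_sm_rm_iand1contributionToW} is only stated for vectors whose entries have absolute sum at most $2n$, which the raw profile sum of a closed subset can violate, so the final step equating ``maximum weight closed subset'' with ``maximum profile closed subset'' should be routed through the profiles of the corresponding stable matchings, $p(M) = p(M_0) + \sum_{\rho_i \in S} p(\rho_i)$, which do satisfy the bound. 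Both repairs are easy, so the proposal stands as a correct alternative rather than a gapped one.
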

\begin{proof}

Let $S$ be an arbitrary set of rotations in $I$ and define $w(S) = \sum_{s_i \in S}{p(s_i)}$, that is, $w(S)$ is the total vector-based weight of these rotations.  Let $\mathcal{P}$ be the set of all positive rotations. For any set of rotations $S \subseteq \mathcal{P}$, let $N(S)$ be the set of all negative rotation predecessors of $S$ in the rotation digraph $R_d(I)$. Let $Q$ denote a maximum profile closed subset of $R_d(I)$. In order for a negative rotation vertex to exist in $Q$ it must precede at least one positive rotation vertex, otherwise $Q$ could not be of maximum weight. Hence $Q$ can be found by maximising $w(S) + w(N(S))$ over all subsets of $S \subseteq \mathcal{P}$.

We now show that $w(S) + w(N(S))$ is equivalent to $w(S) - |w(N(S))|$ according to our vector-based weight function. We know that $w(N(S))$ is negative (i.e. the first non-zero element of $w(N(S))$ is negative) and therefore taking the absolute value of $w(N(S))$ will reverse the signs of all non-zero elements. Taking the negative of $|w(N(S))|$ reverses the element signs once more and so we have $w(S) + w(N(S)) = w(S) - |w(N(S))|$.

Therefore we may say that $Q$ can be found by maximising $w(S) - |w(N(S))|$ over all subsets of $S \subseteq \mathcal{P}$. But by maximising this function, we also minimise $w(\mathcal{P}) - (w(S) - |w(N(S))|) = w(\mathcal{P} \backslash S) + |w(N(S))|$. That is, we are minimising the total weight of the positive rotations not in $S$ added to the absolute value of the negative rotations that are $S$'s predecessors. This becomes clearer when looking at the vb-network $R_n'(I)$. 


Let $\boldsymbol{c}({c_T'})$ denote the capacity of the minimum cut ${c_T'}$. We want to show that $\boldsymbol{c}({c_T'})$ is at least as small as $w(\mathcal{P} \backslash S) + |w(N(S))|$ for any $S \subseteq \mathcal{P}$. We can find an upper bound for $\boldsymbol{c}({c_T'})$ by doing the following. If we have a set of edges $c_T''$ that comprises \emph{(1)} all edges from $s$ to vertices in $N(S)$, and \emph{(2)} all edges from vertices in $\mathcal{P} \backslash S$ to $t$, then $c_T''$ is certainly a cut since there can be no flow through $R_n'(I)$. Moreover $\boldsymbol{c}(c_T'') = w(\mathcal{P} \backslash S) + |w(N(S))|$ and therefore, $\boldsymbol{c}({c_T'})\leq w(\mathcal{P} \backslash S) + |w(N(S))|$ for any $S \subseteq \mathcal{P}$. Now let $S^* \subseteq \mathcal{P}$ be the set of positive rotation vertices that have edges into $t$ that are not in ${c_T'}$. Then ${c_T'}$ must contain all edges from $\mathcal{P} \backslash S^*$ to $t$. Since ${c_T'}$ has finite capacity all the edges within it must also have finite capacity and consequently ${c_T'}$ must also contain all edges in $N(S^*)$ (since it cannot contain any edges with capacity $\boldsymbol\infty$). Therefore, $$\boldsymbol{c}({c_T'}) = w(\mathcal{P} \backslash S^*) + |w(N(S^*))| \leq w(\mathcal{P} \backslash S) + |w(N(S))|$$ 
for all $S \subseteq \mathcal{P}$. Hence, $\mathcal{P}_{c_T'} = S^*$ and, $\mathcal{P}_{c_T'}$ and their predecessors define a maximum profile closed subset of the rotation poset $R_p(I)$.
\end{proof}

It remains to show that we can adapt \citeauthor{ST83}'s $O(m^2 \log n)$ Max Flow algorithm to work with vb-networks. This is shown in Lemma \ref{sm_rm_sleator_tarjan_ok}.

\begin{lemma}
\label{sm_rm_sleator_tarjan_ok}
	Let $I$ be an instance of \acrshort{smi} and let $R_n'(I)$ be a vb-network. We can use a version of \citeauthor{ST83}'s Max Flow algorithm \citep{ST83} adapted to work with vb-networks in order to find a maximum flow $\boldsymbol{f}$ of  $R_n'(I)$ in $O(nm^2 \log n)$ time.
\end{lemma}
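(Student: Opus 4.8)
The plan is to run \citeauthor{ST83}'s algorithm on $R_n'(I)$ essentially verbatim, making three substitutions: scalar capacities and flows become vb-values, the scalar comparison used to locate bottleneck (minimum residual capacity) edges becomes the lexicographic comparison $\prec$, and every scalar addition or subtraction becomes a pointwise vector operation. Correctness of the returned flow would then come for free: the modified procedure is still an augmenting-path method, so on halting no augmenting path remains in $R_n'(I)$, and Lemma~\ref{thm:sm_rm_vb_noaug_maxflow} guarantees that the resulting vb-flow $\boldsymbol{f}$ is maximum. The substance of the lemma is thus to show that the substitution neither corrupts the algorithm's control flow nor costs more than an extra factor of $O(n)$.

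My main tool would be the observation that the high-weight function $w$ of Definition~\ref{def_sm_rm_highweightfunction} is both additive, $w(\boldsymbol a+\boldsymbol b)=w(\boldsymbol a)+w(\boldsymbol b)$, and (on suitably bounded values) order-preserving, the latter being exactly Lemma~\ref{lemma:sm_proveprofile_consistent_highweight}. As an analytical device I would imagine the unmodified scalar Sleator--Tarjan algorithm running on the high-weight network $R_n(I)$ and argue that it proceeds in lock-step with the vb-execution: applying $w$ edgewise carries every residual capacity, path minimum and augmenting amount computed over $R_n'(I)$ to its counterpart over $R_n(I)$, and a test $\boldsymbol a\prec\boldsymbol b$ succeeds in $R_n'(I)$ exactly when $w(\boldsymbol a)<w(\boldsymbol b)$ succeeds in $R_n(I)$. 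The two runs therefore make identical branching decisions, so the vb-run terminates precisely when the classical run does and performs the same number of steps.

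Next I would check that the dynamic-tree machinery underlying the $O(m^2\log n)$ bound survives the substitution. Sleator--Tarjan maintain link--cut trees supporting a lazy ``add a value to every edge on a path'' update together with a ``minimum edge value on a path'' query, whose correctness rests on the scalar identity $\min(a+c,b+c)=\min(a,b)+c$. The vector analogue holds because $\prec$ is translation invariant: if $\boldsymbol a\prec\boldsymbol b$ and $k$ is the first index where they differ, then $\boldsymbol a+\boldsymbol c$ and $\boldsymbol b+\boldsymbol c$ also first differ at $k$, with $a_k+c_k<b_k+c_k$, so $\boldsymbol a+\boldsymbol c\prec\boldsymbol b+\boldsymbol c$. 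Hence lexicographic minimum commutes with vector addition and the data structure operates unchanged on vb-values. Counting then follows from the correspondence: the vb-run executes $O(|V||E|\log|V|)=O(m^2\log n)$ basic operations (using $|V|\le m$, $|E|\le m$ and $O(\log m)=O(\log n)$), each now a lexicographic comparison or pointwise add of vectors of length at most $n$ with integer entries of magnitude $O(n)$ (by Definition~\ref{smi_rm_defn_vb_flow}), hence costing $O(n)$; the product is $O(nm^2\log n)$.

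The hard part will be policing the hypothesis of Lemma~\ref{lemma:sm_proveprofile_consistent_highweight}: it equates $\prec$ with numeric comparison only while the vectors involved satisfy the bound $\sum_i|\cdot|\le 2n$ required by Proposition~\ref{thm_sm_rm_iand1contributionToW_flows}. I would therefore need to confirm that every quantity the algorithm ever compares---residual capacities along augmenting paths, their minima, and the amounts by which flow is pushed---stays within a fixed constant multiple of this bound, so that a vb-comparison can never disagree with its high-weight counterpart and so that each operation genuinely costs only $O(n)$. Given the flow and capacity bounds already in place this should be routine, but it is the point on which both the control-flow correspondence and the time bound ultimately rest.
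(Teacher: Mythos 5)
Your overall skeleton coincides with the paper's proof: replace scalar capacities, flows, comparisons and additions inside the Sleator--Tarjan algorithm by their vector analogues, observe that the algorithm halts with a flow admitting no augmenting path so that Lemma~\ref{thm:sm_rm_vb_noaug_maxflow} yields maximality, and multiply the $O(m^2\log n)$ operation count by the $O(n)$ cost of each vector operation. Your observation that lexicographic order is translation invariant, so that lexicographic minima commute with pointwise addition, is a correct and useful detail which the paper compresses into the assertion that the dynamic-tree operations ``consist of straightforward graph operations and comparisons, additions, subtractions''.

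The genuine problem is the lock-step simulation against a scalar run on $R_n(I)$, which you make load-bearing for both the control-flow correspondence and the time bound. For the two runs to branch identically you need $\boldsymbol{a}\prec\boldsymbol{b}$ if and only if $w(\boldsymbol{a})<w(\boldsymbol{b})$ for \emph{every} pair of quantities the algorithm ever compares: partial flows, residual capacities $\boldsymbol{c}(e)-\boldsymbol{f}(e)$, and path minima. Proposition~\ref{thm_sm_rm_iand1contributionToW_flows} and Lemma~\ref{lemma:sm_proveprofile_consistent_highweight} supply this equivalence only under the bound $\sum_i|\cdot|\le 2n$, and---contrary to your closing claim---that bound is \emph{not} ``already in place'' for these quantities. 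The capacity constraint of Definition~\ref{sm_rm_def_defofflow} is the lexicographic inequality $\boldsymbol{0}\preceq\boldsymbol{f}(e)\preceq\boldsymbol{c}(e)$, which bounds no individual entry: the vector $\langle 1,-10^{6},0,\dots,0\rangle$ satisfies both constraints against the capacity $\langle 2,0,\dots,0\rangle$. Mid-execution flows are accumulations of pushed bottlenecks, each bottleneck being a residual capacity that itself involves earlier flows, so controlling their entries would require an inductive invariant over the whole execution; nothing in Definition~\ref{smi_rm_defn_vb_flow} enforces it, and if entries can grow then both the carry argument behind $w$ and the $O(n)$-per-operation costing collapse. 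This is the crux of your proof, not a routine check. The paper's proof never incurs this obligation: it runs the algorithm natively on lexicographically ordered integer vectors---which, by exactly your translation-invariance fact, form a totally ordered abelian group in which Sleator--Tarjan's saturation and phase-counting arguments are purely order-theoretic---and it invokes the weight function $w$ only once, at termination, inside Lemma~\ref{thm:sm_rm_vb_noaug_maxflow}. (The paper does tacitly assume the maintained vectors stay small enough for the $O(n)$ operation cost, but it never needs order-preservation of $w$ during execution.) Your own first paragraph already contains the paper's argument; the simulation apparatus is redundant, and it is the only part of your proof that rests on the unproven boundedness, so the repair is simply to delete it.
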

\begin{proof}
A \emph{blocking flow} in the high-weight setting is a flow such that each path through the network from $s$ to $t$ has a saturated edge. Note that this is different from a maximum flow, since a blocking flow may still allow extra flow to be pushed from $s$ to $t$ using backwards edges in the residual graph. The Sleator-Tarjan algorithm \citep{ST83} is an adapted version of \citeauthor{Din70}'s algorithm \citep{Din70} which improves the time complexity of finding a blocking flow. This is achieved by the introduction of a new \emph{dynamic tree structure}. 

The following operations are required from \citeauthor{ST83}'s \emph{dynamic tree structure} in the max flow setting \cite{ST83}: \emph{link}, \emph{capacity}, \emph{cut}, \emph{mincost}, \emph{parent}, \emph{update} and \emph{cost}. Each of these processes (not described here) consists of straightforward graph operations (such as adding a parent vertex, deleting an edge etc.) and comparisons, additions, subtractions and updating of edge capacities and flows. Since we have a vector-based interpretation of comparison, addition and subtraction operations, it is possible to adapt \citeauthor{ST83}'s Max Flow algorithm to work in the vector-based setting.

\citeauthor{ST83}'s algorithm \cite{ST83} terminates with a flow that admits no augmenting path. Let $\boldsymbol{f}$ be a vb-flow given at the termination of \citeauthor{ST83}'s algorithm, as applied to $R_n'(I)$. Since $\boldsymbol{f}$ admits no augmenting path, it follows, by Lemma \ref{thm:sm_rm_vb_noaug_maxflow} that $\boldsymbol{f}$ is a maximum vb-flow in $R_n'(I)$. \citeauthor{ST83}'s algorithm runs in $O(m^2 \log n)$ time assuming constant time operations for comparison, addition and subtraction. However, in the vb-flow setting, each of these operations takes $O(n)$ time in the worst case. Therefore, using the \citeauthor{ST83} algorithm, we have a total time complexity of $O(nm^2 \log n)$ to find a maximum flow of a vb-network.
\end{proof}

Finally, we now show that there is an $O(nm^2 \log n)$ algorithm for finding a rank-maximal stable matching in an instance of \acrshort{smi}, based on polynomially-bounded weight vectors.

\begin{theorem}
	Given an instance $I$ of \acrshort{smi} there is an $O(nm^2 \log n)$ algorithm to find a rank-maximal stable matching in $I$ that is based on polynomially-bounded weight vectors.
\end{theorem}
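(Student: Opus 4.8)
The plan is to assemble the six-step strategy of Section~\ref{sm_prof_strategy} and verify that every step runs within the claimed bound, with correctness following from the chain of results already established in this section. Since the substantive work has been carried out in the preceding lemmas and theorems, the proof of this statement is essentially a matter of stitching the pieces together and accounting for the running time.

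First I would compute the man-optimal and woman-optimal stable matchings via the Extended Gale-Shapley algorithm in $O(m)$ time, enumerate all rotations using the minimal differences algorithm in $O(m)$ time, and then build the rotation digraph and convert it to the vb-network $R_n'(I)$ of Section~\ref{sm_prof_comb_defns}, attaching to each edge leaving $s$ or entering $t$ the vb-capacity $|p(\rho)|$ or $p(\rho)$ respectively. As noted in the strategy, Steps~\ref{SM_RM_steptochangeA} and~\ref{SM_RM_steptochangeB} are precisely the places where a large-weight operation costing $O(n)$ is replaced by a profile operation of the same $O(n)$ cost, so this construction runs in $O(m)$ time just as in the high-weight case.

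Next I would invoke Lemma~\ref{sm_rm_sleator_tarjan_ok} to find a maximum vb-flow $\boldsymbol{f}$ in $R_n'(I)$ in $O(nm^2 \log n)$ time using the adapted \citeauthor{ST83} algorithm. By Lemma~\ref{thm:sm_rm_vb_noaug_maxflow} the flow returned admits no augmenting path and is therefore maximum, and by Theorem~\ref{thm:sm_rm_vb_mfmc} the associated cut $c_T'$ is a minimum cut with $\boldsymbol{c}(c_T') = \text{val}(\boldsymbol{f})$. Applying Theorem~\ref{th:sm_rm_proofRankMaximal}, the positive vertices $\mathcal{P}_{c_T'}$ whose edges into $t$ avoid $c_T'$, together with their predecessors in the rotation digraph, form a maximum profile closed subset $S$ of the rotation poset; extracting this subset is a reachability computation taking $O(m)$ time. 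Finally I would eliminate the rotations of $S$ from the man-optimal matching, which by Lemma~\ref{sm_rm_eliminatingCSORP} yields a rank-maximal stable matching, in a further $O(m)$ time.

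The only point requiring care, rather than a genuine obstacle, is the accounting of the running time: every step other than the maximum vb-flow computation costs $O(m)$, so the $O(nm^2 \log n)$ bound of Lemma~\ref{sm_rm_sleator_tarjan_ok} dominates and gives the claimed overall complexity. I would close by observing that all weights manipulated throughout are vb-capacities and vb-flows, whose elements are bounded in absolute value by $2n$ by Definition~\ref{smi_rm_defn_vb_flow}, so the algorithm indeed relies only on polynomially-bounded weight vectors, as required.
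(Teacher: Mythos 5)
Your proposal is correct and follows essentially the same route as the paper's own proof: both assemble the six-step strategy of Section~\ref{sm_prof_strategy}, invoke the adapted Sleator--Tarjan algorithm (Lemma~\ref{sm_rm_sleator_tarjan_ok}) for the dominant $O(nm^2 \log n)$ step, and rely on Theorem~\ref{thm:sm_rm_vb_mfmc}, Theorem~\ref{th:sm_rm_proofRankMaximal} and Lemma~\ref{sm_rm_eliminatingCSORP} for correctness. If anything, you make explicit the correctness chain that the paper's proof leaves implicit by deferring to the strategy section, which is a harmless elaboration rather than a different argument.
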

\begin{proof}
	We use the process described in Section \ref{sm_prof_strategy}. All operations from this are well under the required time complexity except number $4$. Let $R_n'(I) = (V',E')$ be a vb-network of $I$. Here, bounds on the number of edges and number of vertices are identical to the maximum weight case, that is $|E'| \leq m$ and $|V'| \leq m$. This is because, despite having alternative versions of flows and capacities, we have an identical graph structure to the high-weight case.
	
	By using the adaption of \citeauthor{ST83}'s Max Flow algorithm from Lemma \ref{sm_rm_sleator_tarjan_ok} we achieve an overall time complexity of $O(nm^2 \log n)$ to find a maximum vb-flow $\boldsymbol{f}$ in $R_n'(I)$. Let $c_T'$ denote a minimum cut in $R_n'(I)$. By Theorem \ref{thm:sm_rm_vb_mfmc}, $\boldsymbol{c}(c_{T}') = \text{val}(\boldsymbol{f})$. Therefore using the process described in Section \ref{sm_prof_strategy}, with vector-based adaptations, we can find a rank-maximal stable matching in $O(nm^2 \log n)$ time without reverting to high weights.
\end{proof}

Hence we have an $O(nm^2 \log n)$ algorithm for finding a rank-maximal stable matching, without reverting to high-weight operations.

\section{Generous stable matchings}
\label{sec_gen}

We now show how to adapt the techniques in Section \ref{sm_rm_combinatoricApproach} to the generous setting. Let $I$ be an instance of \acrshort{smi} and let $M$ be a matching in $I$ with profile $p(M)=\langle p_1, p_2, ..., p_k \rangle$. Recall the \emph{reverse profile} $p_r(M)$ is the vector $p_r(M)=\langle p_k, p_{k-1}, ..., p_1 \rangle$.

As with the rank-maximal case, we wish to use an approach to finding a generous stable matching that does not require exponential weights. Recall $n$ is the number of men in $I$. A simple $O(n)$ operation on a matching profile allows the rank-maximal approach described in the previous section to be used in the generous case. 

Let $M$ be a stable matching in $I$ with degree $k$ and profile
$p(M) = \langle p_1, p_2, ..., p_{k-1}, p_k \rangle$. Since we wish to minimise the reverse profile $p_r(M)=\langle p_k, p_{k-1}, ..., p_1 \rangle$ we can simply maximise the reverse profile where the value of each element is negated. A short proof of this is given in Proposition \ref{th:sm_gen_equivalentProfiles}. We denote this profile by $p_r'(M)$, where

\begin{equation}
	\label{sm_rm_generousProfile}
	p_r'(M) = \langle -p_k, -p_{k-1}, ..., -p_2, -p_1 \rangle.
\end{equation}

Thus in general, profile elements corresponding to $p_r'(M)$ can take negative values. All profile operations described in Section \ref{sec_ran_max_exp_weights} still apply to profiles of this type.

\begin{prop} 
	\label{th:sm_gen_equivalentProfiles}
	Let $M$ be a matching in an instance $I$ of \acrshort{smi}. Then, $$M \in \arg \min \{ p_r(M'): M' \text{ is a matching in } I\}$$ if and only if $$M \in \arg \max \{ p'_r(M'): M' \text{ is a matching in } I\}.$$
	\end{prop}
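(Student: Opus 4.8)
The plan is to observe that the ``negated reverse profile'' $p_r'(M)$ is literally the componentwise negation of the reverse profile $p_r(M)$, together with the elementary fact that componentwise negation reverses the lexicographic order $\prec$. The statement then reduces to chaining biconditionals, and no property of stability is needed --- only the definitions of $\prec$, $p_r$ and $p_r'$.

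First I would record the order-reversing lemma. Regard all reverse profiles as integer vectors of a common length (padding the shorter ones with trailing zeros so that ``the first index at which two profiles differ'' is well defined across matchings of differing degree). For any two such vectors $a = \langle a_1, \dots, a_n \rangle$ and $b = \langle b_1, \dots, b_n \rangle$, let $-a$ denote componentwise negation. I claim $a \prec b$ if and only if $-a \succ -b$, and $a = b$ if and only if $-a = -b$. This is immediate from the definition of $\prec$: if $k$ is the first index at which $a$ and $b$ differ, then $k$ is also the first index at which $-a$ and $-b$ differ, and $a_k < b_k$ holds exactly when $-a_k > -b_k$; the equality case is trivial. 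Next I would note that, by the definition in Equation~(\ref{sm_rm_generousProfile}), $p_r'(M) = -p_r(M)$ for every matching $M$ of $I$.

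With these two ingredients the proof is a direct chain of equivalences. Membership $M \in \arg\min\{p_r(M') : M' \text{ a matching in } I\}$ means $p_r(M) \preceq p_r(M')$ for every matching $M'$; applying the order-reversing lemma to each $M'$ turns this into $-p_r(M) \succeq -p_r(M')$, i.e. $p_r'(M) \succeq p_r'(M')$, for every $M'$, which is exactly $M \in \arg\max\{p_r'(M') : M' \text{ a matching in } I\}$. Since every step is a biconditional, the two membership statements are equivalent, as required. There is really no hard step here: the only point that needs care --- and the one I would be explicit about --- is that reverse profiles of matchings with different degrees must be compared as equal-length vectors (zeros beyond the degree), so that the ``first differing index'' used by both $\prec$ and the negation lemma is unambiguous; once that convention is fixed, everything follows from the order-reversing property of negation.
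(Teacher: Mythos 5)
Your proof is correct, and it rests on exactly the same fact as the paper's proof: componentwise negation reverses the lexicographic order, because the first differing index is unchanged under negation and the inequality there flips. The difference is organizational. The paper proves the two implications separately, each by contradiction --- assuming, say, that $p_r(M)$ is minimum but $p'_r(M)$ is not maximum, producing a witness $M'$, and then re-deriving the order flip inline (writing out both vectors and the first differing index) to contradict minimality --- whereas you isolate the order-reversal statement as a standalone lemma, observe $p'_r(M) = -p_r(M)$, and get both directions at once from a chain of biconditionals over all $M'$. Your packaging is cleaner, avoids the duplicated symmetric argument, and makes explicit that stability plays no role (true of the paper's proof as well, which never invokes it).

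One caveat concerns your alignment convention. You compare reverse profiles of matchings of different degree by appending \emph{trailing} zeros to the shorter vector. Under the paper's definitions (stated cleanly in Section \ref{sec_formal_sr}, where $p_r(M) = \langle p_n, p_{n-1}, \dots, p_1 \rangle$ is the reversal of the full length-$n$ profile), the zeros for ranks above the degree sit at the \emph{front} of the reverse profile. Trailing-zero padding of the degree-truncated vector would compare the count of agents at rank $d(M)$ against the count at rank $d(M')$, which is not the generous semantics when $d(M) \neq d(M')$. This does not damage your proof of the equivalence: the argument needs only that $p'_r$ is the componentwise negation of $p_r$ under some fixed, consistent alignment, and it goes through verbatim with the length-$n$ convention (where no padding is needed at all). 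But the convention you chose is the one point where your write-up diverges from the intended definitions --- and, to be fair, the paper's own proof is equally loose here, since it compares vectors of lengths $k$ and $l$ positionally with indices up to $\min\{k,l\}$.
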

\begin{proof}
	Suppose $M$ is a matching in $I$ such that $p_r(M)$ is minimum taken over all matchings in $I$ and $p_r'(M)$ is not maximum taken over all matchings in $I$. Then, there is a matching $M'$ in $I$ such that $p_r'(M') \succ p_r'(M)$.
	
	Let $p_r(M)=\langle p_1, p_2, ..., p_k \rangle$ (note that we use indices from $1$ to $k$, despite $p_r(M)$ being a reverse profile) and therefore $p'_r(M)=\langle -p_1, -p_2, ..., -p_k \rangle$. Also let $p'_r(M') = \langle p_1', p_2', ..., p_l' \rangle$. Since $p_r'(M') \succ p_r'(M)$, there must exist some $i$ $(1 \leq i \leq \min\{k,l\})$ such that $p_i'> -p_i$ and $p_j'=-p_j$ for $1\leq j <i$.
	
	Then,
\begin{equation} 
\begin{split}
p_r(M') & = \langle -p_1', -p_2', ..., -p_l' \rangle \\
 & = \langle p_1, p_2, ..., p_{i-1}, -p'_i, ..., -p'_l \rangle \\
  & \prec \langle p_1, p_2, ...p_{i-1}, p_i, ..., p_k \rangle\\
  & = p_r(M). \\
\end{split}
\end{equation}
	Hence, $p_r(M)$ cannot be minimum taken over all matchings in $I$, a contradiction. Therefore, $M$ is a matching such that $p_r'(M)$ is maximum taken over all matchings in $I$.
	
	Now conversely, suppose that $M$ is a matching such that $p_r'(M)$ is maximum taken over all matchings in $I$, but $p_r(M)$ is not minimum taken over all matchings in $I$. Then there is a matching $M'$ in $I$ such that $p_r(M') \prec p_r(M)$. 
	
	Let $p_r(M)=\langle p_1, p_2, ..., p_k \rangle$ and therefore $p'_r(M)=\langle -p_1, -p_2, ..., -p_k \rangle$. Also let $p'_r(M') = \langle p_1', p_2', ..., p_l' \rangle$ and so $p_r(M') = \langle -p_1', -p_2', ..., -p_l' \rangle$. Since $p_r(M') \prec p_r(M)$, there must exist some $i$ $(1 \leq i \leq \min\{k,l\})$ such that $-p_i'<p_i$ and $-p_j'= p_j$ for $1\leq j <i$.
	
	Then,
	\begin{equation} 
\begin{split}
p'_r(M') & = \langle p_1', p_2', ..., p_l' \rangle \\
 & = \langle -p_1, -p_2, ..., -p_{i-1}, p'_i, ..., p'_l \rangle \\
  & \succ \langle -p_1, -p_2, ... -p_{i-1} , -p_i, ..., -p_k \rangle\\
  & = p'_r(M). \\
\end{split}
\end{equation}
	Hence, $p'_r(M)$ cannot be maximum taken over all matchings in $I$, a contradiction, meaning that $M$ is a matching such that $p_r(M)$ is minimum taken over all matchings in $I$.
\end{proof}

We now show that a generous stable matching may be found by eliminating a maximum profile closed subset of the rotation poset as in the rank-maximal case. Let $\rho$ be the rotation that takes us from stable matching $M$ to stable matching $M'$, where $M$ and $M'$ have profiles $p(M)=\langle p_1, p_2, ..., p_k \rangle$ and $p(M')=\langle p_1', p_2', ..., p_k' \rangle$ with each profile having length $k$ without loss of generality. Then $p(\rho)=\langle p_1' - p_1, p_2' - p_2, ..., p_k' - p_k \rangle$ and so $p(M')=p(M) + p(\rho)$. We also know that $p_r'(M)=\langle -p_k,-p_{k-1}, ..., -p_1 \rangle$ and $p_r'(M')=\langle -p_k',-p_{k-1}', ..., -p_1' \rangle$.

 Now, since $p_r'(\rho)=\langle p_k - p_k', p_{k-1} - p_{k-1}', ..., p_1 - p_1' \rangle$, in the generous case we have $p_r'(M')=p_r'(M) + p_r'(\rho)$. We next present Lemma \ref{sm_rm_eliminatingCSORPgen} which is an analogue of Lemma \ref{sm_rm_eliminatingCSORP} and shows that a generous stable matching may be found by eliminating a maximum profile closed subset of the rotation poset.

\begin{lemma}
	\label{sm_rm_eliminatingCSORPgen}
		Let $I$ be an instance of \acrshort{smi} and let $M_0$ be the man-optimal stable matching in $I$. A generous stable matching $M$ may be obtained by eliminating a maximum profile closed subset of the rotation poset $S$ from $M_0$.
\end{lemma}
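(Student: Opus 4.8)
The plan is to transcribe the proof of Lemma~\ref{sm_rm_eliminatingCSORP} almost verbatim, replacing the profile $p$ throughout by the negated reverse profile $p_r'$ and invoking Proposition~\ref{th:sm_gen_equivalentProfiles} to pass between the ``generous'' condition and maximality of $p_r'$. Throughout I would regard every profile as a length-$n$ vector (padding with trailing zeros where necessary), so that its reverse, and hence $p_r'$, is also a well-defined length-$n$ vector; with this convention the pointwise addition of Section~\ref{sec_optimality} applies uniformly to all reverse profiles regardless of the degree of the underlying matching. Here a \emph{maximum profile closed subset} $S$ is understood with respect to the reverse-profile weights, that is, $S$ is a closed subset maximising $\sum_{\rho_i \in S} p_r'(\rho_i)$ lexicographically; this is precisely the object obtained when the vb-network of Section~\ref{sm_rm_combinatoricApproach} is built using $p_r'(\rho)$ as the vb-capacities.

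First I would recall, via \citet[Theorem~2.5.7]{GI89}, the $1$-$1$ correspondence between the closed subsets of the rotation poset $R_p(I)$ and the stable matchings of $I$. Eliminating the rotations of a closed subset $S$ from $M_0$ yields the unique corresponding stable matching $M$, and the single-rotation identity $p_r'(M') = p_r'(M) + p_r'(\rho)$ established immediately before the lemma extends over the rotations of $S$ (summing the contributions under the length-$n$ convention above) to give $p_r'(M) = p_r'(M_0) + \sum_{\rho_i \in S} p_r'(\rho_i)$.

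Next I would argue by contradiction. Suppose $M$, the stable matching corresponding to a maximum-$p_r'$ closed subset $S$, is not generous. Then some stable matching $M'$ satisfies $p_r(M') \prec p_r(M)$, and by Proposition~\ref{th:sm_gen_equivalentProfiles} (applied to the stable matchings of $I$) this is equivalent to $p_r'(M') \succ p_r'(M)$. Letting $S'$ be the closed subset corresponding to $M'$ under the same correspondence, the decomposition of the previous paragraph gives $p_r'(M') = p_r'(M_0) + \sum_{\rho_i \in S'} p_r'(\rho_i)$, whence $\sum_{\rho_i \in S'} p_r'(\rho_i) \succ \sum_{\rho_i \in S} p_r'(\rho_i)$. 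This contradicts the maximality of $S$, so $M$ must be generous.

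I do not expect a serious obstacle, since the statement is deliberately an analogue of Lemma~\ref{sm_rm_eliminatingCSORP}; the only points needing care are bookkeeping ones. The first is to fix the meaning of ``maximum profile'' as maximality with respect to $p_r'$ rather than $p$, and the second is to chain the two lexicographic translations in the correct direction, so that generosity of $M$ (minimality of $p_r$) is converted, through Proposition~\ref{th:sm_gen_equivalentProfiles}, into the maximality of $p_r'$ in the sense in which $S$ was chosen. The length-$n$ padding convention removes any ambiguity arising from matchings of differing degree and makes the summation of $p_r'$ over $S$ identical to the rank-maximal case.
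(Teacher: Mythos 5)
Your proposal is correct and takes essentially the same route as the paper's own proof: the $1$-$1$ correspondence between closed subsets of $R_p(I)$ and stable matchings, the decomposition $p_r'(M) = p_r'(M_0) + \sum_{\rho_i \in S} p_r'(\rho_i)$, a contradiction against the maximality of $S$ with respect to the reversed-and-negated profiles, and Proposition~\ref{th:sm_gen_equivalentProfiles} to pass between minimality of $p_r$ and maximality of $p_r'$. The differences are purely presentational (your explicit length-$n$ padding convention, and invoking the proposition inside the contradiction step rather than in a closing paragraph as the paper does) and do not alter the argument.
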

\begin{proof}
Let $R_p(I)$ be the rotation poset of $I$. There is a $1$-$1$ correspondence between closed subsets of $R_p(I)$ and the stable matchings of $I$ \citep{GI89}.
Let $S$ be a maximum profile closed subset of the rotation poset $R_p(I)$, whose rotation profiles are reversed and negated, and let $M$ be the unique corresponding stable matching. Then by addition over reversed negated profiles described in the text above this lemma, $p_r'(M) = p_r'(M_0) + \sum_{\rho_i \in S} p_r'(\rho_i)$.
Suppose $M$ is not generous. Then there is a stable matching $M'$ such that $p_r'(M') \succ p_r'(M)$. $M'$ corresponds to a unique closed subset $S'$ of the rotation poset, such that $p_r'(M') = p_r'(M_0) + \sum_{\rho_i \in S'} p_r'(\rho_i)$. But $p_r'(M') \succ p_r'(M)$ and so $S$ cannot be a maximum profile closed subset of $R_p(I)$, a contradiction.

By Proposition \ref{th:sm_gen_equivalentProfiles}, since $M$ is a matching such that $p_r'(M)$ is maximum among all stable matchings, $M$ is also a matching such that $p_r(M)$ is minimum among all stable matchings. Therefore $M$ is a generous stable matching in $I$.\end{proof}

Recall that in Definition \ref{def_sm_rm_highweightfunction} we defined the high-weight function $w$ in order to show that a stable matching of maximum weight is a rank-maximal stable matching and then showed that vb-flows and vb-capacities correspond directly with this high-weight setting.  In the generous case, since we seek a matching $M$ that maximises $p_r'(M)$, the negation of the reverse profile, the constraints of Definition \ref{def_sm_rm_highweightfunction} still apply. By Proposition \ref{th:sm_gen_equivalentProfiles} and Lemma \ref{sm_rm_eliminatingCSORPgen}, all processes from the previous section to find a rank-maximal stable matching may now be used to find a generous stable matching in $O(nm^2 \log n)$ time. 

However, it is also possible to exploit the structure of a generous stable matching to bound some part of the overall time complexity by the generous stable matching degree rather than by the number of men or women $n$.

Let $I$ be an instance of \acrshort{smi}. First we find a minimum regret stable matching $M'$ of $I$ as described in Section \ref{sec_optimality} in $O(m)$ time. It must be the case that the degree $d(M)$ of a generous stable matching $M$ is the same as the degree of $M'$. Therefore, since no man or women can be assigned to a partner of rank higher than $d(M)$, it is possible to simply truncate all preference lists beyond rank $d(M)$, which has a positive effect on the overall time complexity of finding a generous stable matching. Theorem \ref{sm_rm_generous_tc} shows that a generous stable matching may be found in $O(\min\{m, nd\}^2 d \log n)$ time, where $d$ is the degree of a minimum regret stable matching.
 
\begin{theorem}
\label{sm_rm_generous_tc}
Given an instance $I$ of \acrshort{smi} there is an $O(\min\{m, nd\}^2 d \log n)$ algorithm to find a generous stable matching in $I$ using polynomially-bounded weight vectors, where $d$ is the degree of a minimum regret stable matching.
\end{theorem}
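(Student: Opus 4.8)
The plan is to reuse, essentially verbatim, the rank-maximal machinery of Section~\ref{sm_rm_combinatoricApproach}, but applied to the negated reverse profile $p_r'$ and on a \emph{truncated} instance. By Proposition~\ref{th:sm_gen_equivalentProfiles} and Lemma~\ref{sm_rm_eliminatingCSORPgen}, a generous stable matching is obtained by eliminating a maximum-$p_r'$-profile closed subset of the rotation poset from the man-optimal stable matching, and this maximum closed subset can be located by building the vb-network, running the adapted \citeauthor{ST83} algorithm (Lemma~\ref{sm_rm_sleator_tarjan_ok}) and reading off the minimum cut (Theorem~\ref{thm:sm_rm_vb_mfmc}), exactly as in the rank-maximal case. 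Used naively this yields only the baseline $O(nm^2\log n)$ bound; the entire gain comes from shrinking both the network size and the per-operation cost by truncating.

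First I would compute a minimum regret stable matching $M'$ of $I$ in $O(m)$ time \cite{Gus87}, and let $d$ be its degree. The key observation is that every generous stable matching has degree exactly $d$: no stable matching has degree below $d$ (as $d$ is the minimum regret degree), and if some stable matching had degree $d'>d$ then, reading reverse profiles from the top rank downwards, it would carry a strictly positive entry at rank $d'$ where $M'$ carries a zero, so its reverse profile would be lexicographically larger than that of $M'$, contradicting generosity. Hence no agent of a generous matching is assigned beyond rank $d$, and I would truncate every preference list after rank $d$ to obtain an instance $I_d$, carrying out the whole strategy of Section~\ref{sm_prof_strategy} on $I_d$. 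The step that must be argued carefully here is that truncation is \emph{lossless}: the stable matchings of $I_d$ are precisely the stable matchings of $I$ of degree at most $d$ (a pair lying beyond rank $d$ for one of its agents can never block, since that agent is already matched within rank $d$), so $I_d$ remains solvable with every agent matched, and the generous stable matching of $I_d$ coincides with that of $I$.

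With truncation in place the complexity follows by bookkeeping. Each truncated list has length at most $d$, so $I_d$ has at most $\min\{m,nd\}$ acceptable pairs, and the resulting vb-network $R_n'(I_d)=(V',E')$ satisfies $|V'|\le\min\{m,nd\}$ and $|E'|\le\min\{m,nd\}$ by the same counting as in the rank-maximal proof. Moreover every profile, vb-capacity and vb-flow arising in $I_d$ is supported on ranks $\{1,\dots,d\}$, so each comparison, addition and subtraction of these vectors costs $O(d)$ rather than $O(n)$. Feeding these bounds into Lemma~\ref{sm_rm_sleator_tarjan_ok}, whose $O(|V'||E'|\log|V'|)$ cost (under unit-cost arithmetic) becomes $O(|V'||E'|\,d\log|V'|)$ once each arithmetic operation costs $O(d)$, and using $\log|V'|=O(\log n)$, the maximum vb-flow in $R_n'(I_d)$ is found in $O(\min\{m,nd\}^2\,d\log n)$ time. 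Every remaining step (finding $M'$, truncating, building the poset, digraph and network, extracting the closed subset and eliminating its rotations) costs $O(m)$, which is dominated by this bound, giving the claimed $O(\min\{m,nd\}^2\,d\log n)$ total.

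The main obstacle I anticipate is not the flow analysis but the truncation-preservation argument: one must verify both that cutting lists at rank $d$ destroys no generous matching and introduces no spurious stable matching whose reverse profile beats it, and that $I_d$ inherits a full (perfect) matching structure so that the rotation-poset theory of Section~\ref{sm_rm_combinatoricApproach} still applies unchanged. The remaining delicate point is the claim that vector operations genuinely drop to $O(d)$; this relies on every relevant profile being confined to the first $d$ coordinates, which in turn is exactly what truncation guarantees.
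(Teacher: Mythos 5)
Your proposal follows essentially the same route as the paper's proof: compute the minimum-regret degree $d$ in $O(m)$ time, truncate all preference lists at rank $d$, apply the rank-maximal machinery of Section~\ref{sm_rm_combinatoricApproach} to the reversed-and-negated profiles via Proposition~\ref{th:sm_gen_equivalentProfiles} and Lemma~\ref{sm_rm_eliminatingCSORPgen}, and do the same bookkeeping ($O(\min\{m,nd\})$ vertices and edges, $O(d)$ per vector operation in the adapted Sleator--Tarjan algorithm) to reach $O(\min\{m,nd\}^2 d \log n)$. If anything you are more careful than the paper, which simply asserts that a generous stable matching has the minimum-regret degree and that truncation is harmless, whereas you correctly prove both claims (the degree argument via reverse-profile comparison, and losslessness of truncation via the blocking-pair argument together with the observation that stable matchings of the truncated instance remain perfect).
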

\begin{proof}
Each step required to find a generous stable matching in $I$ is outlined below along with its time complexity. 
	 \begin{enumerate}
	\item Calculate the degree $d$ of a minimum regret stable matching and truncate preference lists accordingly. A minimum regret stable matching may be found in $O(m)$ time \cite{Gus87}. We now assume all preference lists are truncated below rank $d$.
	\item Calculate the man-optimal and woman-optimal stable matchings. The Extended Gale-Shapley Algorithm takes $O(\min\{m, nd\})$ time since the number of acceptable pairs is now $\min\{m, nd\}$. \label{sm_rm_genlist_egs}
	\item Find all rotations using the Minimal Differences Algorithm \cite{GI89} in $O(\min\{m, nd\})$ time, since the number of acceptable pairs is now $\min\{m, nd\}$. \label{sm_rm_genlist_mindiff}
	\item Build the rotation digraph and vb-network, using the process described in Section \ref{sm_rm_combinatoricApproach}, but where rotation profiles are reversed and negated for the building of the vb-network. We know that no man-woman pair can appear in more than one rotation. This means that the number of vertices in each of the associated rotation poset, rotation digraph and vb-network is also $O(\min\{m, nd\})$. Identical reasoning (with adapted time complexities to suit the generous case) to that of \citet[p. 112]{GI89} may be used to obtain a bound of $O(\min\{m, nd\})$ on the number of edges. This is because we have a bound of $O(\min\{m, nd\})$ for the creation of both type $1$ and type $2$ edges of the rotation digraph. Therefore we may build the rotation digraph and vb-network in $O(\min\{m, nd\})$ time.
	\item Find a minimum cut of the vb-network using the process described in Section \ref{sm_rm_combinatoricApproach}. With $O(\min\{m, nd\})$ vertices and edges and a maximum length of $d$ for any preference list, the \citeauthor{ST83} algorithm \citep{ST83} has a time complexity of $O(\min\{m, nd\}^2 \log n)$, with an additional factor of $O(d)$ to perform operations over vectors. Hence this step takes a total of $O(\min\{m, nd\}^2 d \log n)$ time. \label{sm_gen_placeImproved}
	\item Use this cut to find a maximum profile closed subset $S$ of the rotations in $O(\min\{m, nd\})$ time, since the numbers of vertices and edges are bounded by $O(\min\{m, nd\})$.\label{sm_rm_genlist_mwcsrp}
	\item Eliminate the rotations of $S$ from the man-optimal matching to find the corresponding rank-maximal stable matching.
\end{enumerate}

 Therefore the operation that dominates the time complexity is still Step \ref{sm_gen_placeImproved} and the overall time complexity to find a generous stable matching for an instance $I$ of \acrshort{smi} is $O(\min\{m, nd\}^2 d \log n)$.
\end{proof}

  \section{Complexity of finding profile-based stable matchings in {\sc sr}}
\label{sec_rm_sr}

 \acrshort{sr}, a generalisation of \acrshort{sm}, was first introduced in Section \ref{sec_formal_sr}. In this section, we look at the complexity of finding rank-maximal and generous stable matchings in \acrshort{sr}. 

First let \acrshort{rmsr} be the problem of finding a rank-maximal stable matching in an instance $I$ of \acrshort{sr}, and let \acrshort{gensr} be the problem of finding a generous stable matching in an instance $I$ of \acrshort{sr}. We now define their respective decision problems.

\begin{definition}
	We define {\sc rmsr-d}, the decision problem of \acrshort{rmsr}, as follows. An instance $(I, \sigma)$ of {\sc rmsr-d} comprises an instance $I$ of \acrshort{sr} and a profile $\sigma$. The problem is to decide whether there exists a stable matching $M$ in $I$ such that $p(M) \succeq \sigma$.
\end{definition}

\begin{definition}
	We define {\sc gensr-d}, the decision problem of \acrshort{gensr}, as follows. An instance $(I, \sigma)$ of {\sc gensr-d} comprises an instance $I$ of \acrshort{sr} and a profile $\sigma$. The problem is to decide whether there exists a stable matching $M$ in $I$ such that $p_r(M) \preceq \sigma_r$, where $\sigma_r$ is the reverse profile of $\sigma$.
\end{definition}

We now show that both {\sc rmsr-d} and {\sc gensr-d} are $\NP$-complete.

\begin{theorem}
	\label{pb_opt:thm_rmsr_gensr}
    {\sc rmsr-d} is $\NP$-complete.
\end{theorem}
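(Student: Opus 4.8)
The plan is to prove the two halves of $\NP$-completeness separately, with membership routine and hardness requiring a gadget reduction. For membership, a stable matching $M$ is a certificate of polynomial size: given $M$ we can check in $O(n^2)$ time that it is a matching and that no pair $(a_i,a_j)$ blocks it (for each agent, scan the strict prefix of its list up to its partner), compute the profile $p(M)=\langle p_1,\dots,p_n\rangle$ in $O(n)$ time, and compare $p(M)$ with $\sigma$ lexicographically in $O(n)$ time to test $p(M)\succeq\sigma$. Since the acceptance condition is checkable in polynomial time, {\sc rmsr-d} $\in \NP$.

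For $\NP$-hardness I would give a polynomial-time reduction from a suitably restricted satisfiability problem — for concreteness a bounded-occurrence variant of $3$-{\sc sat}, which keeps gadget sizes, and hence the target profile, polynomially bounded. From a formula $\phi$ I would build an {\sc sr} instance $I_\phi$ out of two kinds of gadgets. Each variable $x$ is realised by a small cyclic preference gadget admitting exactly two local stable configurations, encoding $x=\text{true}$ and $x=\text{false}$; these two configurations are designed to be \emph{profile-neutral}, i.e.\ to contribute identically to $p(M)$, so that the choice of assignment is itself free of profile cost. Each clause is realised by a gadget wired to the gadgets of its three literals so that the ranks attained inside the clause gadget improve — moving one agent from rank $r+1$ to some fixed earlier rank $r$ — precisely when at least one incident literal gadget is in its satisfying state. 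The construction is arranged so that the number of agents attaining rank $r$ equals a fixed base plus the number of satisfied clauses, and I would set the target profile $\sigma$ to the value attained exactly when every clause gadget is in its improved state.

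The correctness argument then splits into two directions. In the forward direction, a satisfying assignment of $\phi$ selects a state in each variable gadget and hence a stable matching $M$ of $I_\phi$ whose profile meets $\sigma$, giving $p(M)\succeq\sigma$. In the reverse direction I would argue that any stable matching with $p(M)\succeq\sigma$ must place every clause gadget in its improved state, and that, because the variable gadgets are profile-neutral and fix all coordinates before the decisive rank $r$, this forces a globally consistent truth assignment satisfying every clause. The same construction, with $\sigma$ and the improved/unimproved roles read through the reverse profile $p_r$, should yield the companion hardness of {\sc gensr-d}.

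The hard part will be the gadget design and the reverse direction. Unlike the bipartite case exploited earlier in this paper, the stable matchings of an {\sc sr} instance are not captured by a minimum cut over a rotation network, so I cannot lean on that machinery; instead I must verify stability of the intended matchings by hand within each gadget and, crucially, rule out unintended stable matchings that could meet the threshold $\sigma$ without corresponding to any assignment. Controlling cross-gadget blocking pairs, and ensuring that the lexicographic comparison at the single decisive coordinate $r$ encodes ``all clauses satisfied'' rather than merely ``many clauses satisfied'', is where the bounded-occurrence restriction and a careful choice of ranks inside the gadgets will earn their keep.
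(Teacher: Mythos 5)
Your $\NP$-membership argument is essentially the paper's and is fine. The hardness half, however, has a genuine gap: everything that would constitute the proof is deferred. A reduction is only a proof once the instance is written down and verified; here the variable gadget (two stable, profile-neutral configurations), the clause gadget (an agent's rank improves from $r+1$ to $r$ exactly when an incident literal gadget is in its satisfying state), and the arguments that (i) the intended matchings are stable, (ii) no unintended stable matchings meet the threshold $\sigma$, and (iii) cross-gadget blocking pairs do not arise, are all stated as design goals rather than constructed. You acknowledge this yourself (``the hard part will be the gadget design and the reverse direction''), and those are precisely the places where reductions in the roommates setting succeed or fail. Two concrete obstacles your plan must clear: first, since deciding whether a strict-preference {\sc sr} instance admits any stable matching is polynomial-time solvable, your clause gadget must still admit a stable \emph{unimproved} configuration when no literal is satisfied, while the \emph{improved} configuration must be destroyed by a blocking pair in every stable matching whenever the clause is unsatisfied; achieving both simultaneously, with each literal gadget wired to several clause gadgets, is delicate and is not exhibited. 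Second, profile-neutrality of the variable gadgets (which your lexicographic argument needs, so that coordinate $r$ is the first place profiles can differ) is an additional constraint on the two local configurations, and no gadget with this property is given.

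For contrast, the paper's proof avoids both difficulties by reducing from {\sc vertex cover} in cubic graphs with a single gadget type: each graph vertex $v_i$ becomes four agents $v_i, w_i, x_i, y_i$ whose preference lists admit exactly two stable local states, $\{v_i,w_i\},\{x_i,y_i\}$ (``in the cover'') and $\{v_i,y_i\},\{w_i,x_i\}$ (``out of the cover''), with the edges of $G$ placed in the $v_i$ lists so that an uncovered edge is a blocking pair. The two states are deliberately \emph{not} profile-neutral: every stable matching has profile $\langle 2n-k+8,\, 2k,\, n-k,\, 0,\, n-k,\, k\rangle$, where $k$ is the number of ``in'' vertices, so taking $\sigma = \langle 2n-K+8,\, 2K,\, n-K,\, 0,\, n-K,\, K\rangle$ lets the first coordinate alone enforce $k \leq K$. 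No clause gadgets, satisfied-clause counting, or neutrality arguments are required, and the stability case analysis is a short enumeration of possible pairs. To turn your SAT route into a proof you would need to supply both gadgets and carry out that case analysis; as written, the proposal is a plan for a proof rather than a proof.
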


\begin{proof}

We first show that {\sc rmsr-d} is in $\NP$. Given a matching $M$ it is possible to check that $M$ is stable in $O(m)$ time by iterating through the preference lists of all men and women, comparing the rank of a given agent on a preference list with the rank of the assigned partner in $M$. It is also possible to calculate profile $p(M)$ and to test whether $p(M) \succeq \sigma$ in $O(n)$ time. Therefore {\sc rmsr-d} is in $\NP$.

	We reduce from {\sc vertex cover} in cubic graphs \cite{GJS76, MS77} (VC-3). An instance $(G,K)$ of VC-3 comprises a graph $G=(V,E)$, with vertex set $V=\{v_1, v_2, ..., v_n\}$ such that each vertex has degree $3$, and a positive integer $K$. The problem is to determine whether there exists a set of vertices $V' \subseteq V$ such that for every edge $e \in E$ at least one endpoint of $e$ is in $V'$, where $|V'| \leq K$.
	
	We now construct an instance $(I, \sigma)$ of {\sc rmsr-d} from $(G,K)$.
	
	Agents of the constructed instance $I$ of \acrshort{sr} comprise $A \cup B \cup V \cup W \cup X \cup Y$ where $A = \{a_1, a_2, a_3, a_4\}$, $B = \{b_1, b_2, b_3, b_4\}$, $V = \{v_1, v_2, ..., v_n\}$, $W = \{w_1, w_2, ..., w_n\}$, $X = \{x_1, x_2, ..., x_n\}$ and $Y = \{y_1, y_2, ..., y_n\}$.
	For each vertex $v_i \in V$ with neighbours $v_i^1, v_i^2, v_i^3$ in $G$, shown in Figure \ref{sm_rm:srcomplexity_a}, we create complete preference lists for instance $I$ of \acrshort{sr} as in Figure \ref{sm_rm:srcomplexity_b}, with members of the sets $A$, $B$, $V$, $W$, $X$ and $Y$ comprising new roommate agents of the constructed \acrshort{sr} instance. Note that relative order of $v_i^1, v_i^2$ and $v_i^3$ in $v_i$'s preference list is not important. Additional complete preference lists are created as shown in Figure \ref{sm_rm:srcomplexity_c}. In these figures, the symbol ``$...$" at the end of preference lists indicates that all other agents in $I$ who are not already specified in the preference list are then listed in any order. 
	
\begin{figure}
	\centering
   \begin{subfigure}[b]{0.4\textwidth}
    \centering
    \begin{tikzpicture}
 
 		\node [simpleNodes](vi) at (0, -2) {$v_i$};
 		\node [simpleNodes](vi1) at (-1.75, -1.25) {$v_i^1$};
 		\node [simpleNodes](vi2) at (1, -0.5) {$v_i^2$};
 		\node [simpleNodes](vi3) at (0.5, -3.75) {$v_i^3$};

		\draw[-, >=latex, standardLine] (vi) -- (vi1);
		\draw[-, >=latex, standardLine] (vi) -- (vi2);
		\draw[-, >=latex, standardLine] (vi) -- (vi3);
		\draw[-, >=latex, standardLineDashed] (vi1) -- (-1.75, -0.4);
		\draw[-, >=latex, standardLineDashed] (vi1) -- (-2.15, -2);
		\draw[-, >=latex, standardLineDashed] (vi2) -- (0.75, 0.3);
		\draw[-, >=latex, standardLineDashed] (vi2) -- (1.75, -1);
		\draw[-, >=latex, standardLineDashed] (vi3) -- (0, -4.45);
		\draw[-, >=latex, standardLineDashed] (vi3) -- (1.4, -4);
		\end{tikzpicture}
  \caption{The neighbourhood of $v_i$ in the graph $G$ of a VC-3 instance.}
  \label{sm_rm:srcomplexity_a}
  \end{subfigure}
	\begin{subfigure}[b]{0.4\textwidth}
	\centering
	\begin{tabular}{l}
For each $v_i \in V$: \\\\
$v_i$: $w_i$, $v_i^1$, $v_i^2$, $v_i^3$, $y_i$, $...$ \\
$w_i$: $x_i$, $a_1$, $a_2$, $a_3$, $a_4$, $v_i$, $...$\\
$x_i$: $a_1$, $y_i$, $w_i$, $...$\\
$y_i$: $v_i$, $x_i$, $...$\\\\
  \end{tabular}
\caption{Preference lists of $I$ for a given $v_i$.}
\label{sm_rm:srcomplexity_b}
\end{subfigure}
\vspace{1cm}

\begin{subfigure}[t]{0.6\textwidth}
	\centering
	\begin{tabular}{l}
$a_j$: $b_j$, $...$ \hspace{1cm} $(1 \leq j \leq 4)$\\
$b_j$: $a_j$, $...$ \hspace{1cm} $(1 \leq j \leq 4)$\\
  \end{tabular}
  \caption{Additional preference lists for instance $I$.}
  \label{sm_rm:srcomplexity_c}
\end{subfigure}
\caption[Creation of an instance $I$ of {\sc sr}.]{Creation of an instance $I$ of \acrshort{sr}.}
\label{sm_rm:srcomplexity}
\end{figure}

Finally, set the profile $\sigma = \langle 2n - K + 8, 2K, n-K, 0, n-K, K \rangle$.

We claim that $G$ has a vertex cover of size $\leq K$ if and only if $I$ has a stable matching $M$ such that $p(M) \succeq \sigma$.

Suppose that $G$ has a vertex cover $C$ such that $|C| = k \leq K$. We create a matching $M$ as follows.
\begin{itemize}
	\item Add, $\{a_j, b_j\}$ to $M$ for $1 \leq j \leq 4$ giving $8$ first choices. 
	\item If $v_i \in C$ then add $\{v_i,w_i\}$ and $\{x_i,y_i\}$ to $M$. This means $v_i$ is assigned their first choice in $I$, $w_i$ is assigned their sixth choice, and $x_i$ and $y_i$ both have their second choices.
	\item If $v_i \notin C$ then add $\{v_i,y_i\}$ and $\{w_i,x_i\}$ to $M$. Then $v_i$ is assigned their fifth choice in $I$, $y_i$ and $w_i$ are assigned their first choices, and $x_i$ is assigned their third choice.
\end{itemize}

We now show that $M$ is stable. Agents from $A$ and $B$ are always assigned to each other, and so cannot block $M$. We now look through the four addition types of pair of $M$. Pair $\{v_i,w_i\}$ cannot block $M$ since $w_i$ cannot prefer $v_i$ to their assigned partner. Pair $\{x_i,y_i\}$ cannot block $M$ since $y_i$ cannot prefer $x_i$ to their assigned partner. Pair $\{v_i,y_i\}$ cannot block $M$ since $v_i$ cannot prefer $y_i$ to their assigned partner. Finally, pair $\{w_i,x_i\}$ cannot block $M$ since $x_i$ cannot prefer $w_i$ to their assigned partner. The only remaining possibility is that $\{v_i, v_j\}$ blocks $M$. Assume for contradiction that this is the case. Then, it must be that $\{v_i,y_i\}$ and $\{v_j,y_j\}$ are in $M$. But, by construction this implies that neither $v_i$ nor $v_j$ are in $C$, meaning $C$ is not a valid vertex cover, a contradiction since $\{v_i, v_j\} \in E$.

We are now able to calculate the profile $p(M) = \langle 2n - k + 8, 2k, n-k, 0, n-k, k \rangle$ by totalling the choices for each rank described in the bullet points above. Since $k \leq K$, we have $p(M) \succeq \sigma$.
	
Conversely, suppose $M$ is a stable matching in $I$ such that $p(M) \succeq \sigma$. 

We first show that each agent may only be assigned to a subset of agents shown on their preference lists above. 
\begin{itemize}
	\item For $1 \leq j \leq 4$, since $a_j$ ranks $b_j$ as first choice and vice versa, $\{a_j, b_j\} \in M$; 
	\item $v_i$ can never be assigned lower on their preference list than $y_i$, or else $\{v_i, y_i\}$ would block $M$;
	\item $w_i$ can never be assigned lower on their preference list than $v_i$, or else $\{v_i, w_i\}$ would block $M$;
	\item $x_i$ can never be assigned lower on their preference list than $w_i$, or else $\{w_i, x_i\}$ would block $M$;
	\item $y_i$ cannot be assigned lower on their preference list than $x_i$, or else $\{x_i, y_i\}$ would block $M$, noting that $\{x_i, a_1\} \notin M$ by the first point above;
	\item Finally, suppose $\{v_i, v_j\} \in M$. Then one of $w_i, x_i, y_i$ is unassigned in $M$, a contradiction to the above. So, either $\{v_i, w_i\} \in M$ or $\{v_i, y_i\} \in M$.
\end{itemize}



If $\{v_i, w_i\} \in M$ then we add $v_i$ to $C$. It remains to prove that $C$ is a vertex cover of $G$. Suppose for a contradiction that it is not. Then there exists an edge $\{v_i, v_j\} \in E$ such that $v_i \notin C$ and $v_j \notin C$. By construction this means that $\{v_i, y_i\} \in M$ and $\{v_j, y_j\} \in M$. But then $\{v_i, v_j\}$ blocks $M$, a contradiction.

Finally, suppose for contradiction that $|C| = k > K$. Note that if $v_i \in C$ then $\{v_i, w_i\} \in M$, and in turn $\{x_i, y_i\} \in M$ and also if $v_i \notin C$ then $\{v_i, y_i\} \in M$, and in turn $\{w_i, x_i\} \in M$. Therefore, using similar logic to above we can calculate the profile of $M$ as $p(M) = \langle 2n - k + 8, 2k, n-k, 0, n-k, k \rangle$. Since $k > K$ we have $p(M) \prec \sigma$, a contradiction.

We have shown that $G$ has a vertex cover of size $\leq K$ if and only if $I$ has a stable matching $M$ such that $p(M) \succeq \sigma$. Since the reduction described above can be completed in polynomial time, {\sc rmsr-d} is $\NP$-hard. Finally, as {\sc rmsr-d} is in $\NP$, {\sc rmsr-d} is also $\NP$-complete.
\end{proof}

\begin{corollary}
	\label{pb_opt:thm_gensr}
    {\sc gensr-d} is $\NP$-complete.
\end{corollary}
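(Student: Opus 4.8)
The plan is to reuse, essentially verbatim, the reduction from VC-3 constructed in the proof of Theorem~\ref{pb_opt:thm_rmsr_gensr}, observing that the very same constructed instance and the very same target profile $\sigma$ also serve for the generous criterion. The reason this works is structural: in the constructed instance, every stable matching corresponds to a vertex cover $C$ of $G$, and its profile is completely determined by the single quantity $k = |C|$. As $k$ decreases, the profile grows lexicographically (favouring rank-maximality) while the reverse profile shrinks lexicographically (favouring generosity), so both optimality notions are governed by minimising $k$. Hence the identical reduction should certify $\NP$-hardness for {\sc gensr-d}.

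First I would note that {\sc gensr-d} lies in $\NP$ by exactly the argument already given for {\sc rmsr-d}: a candidate stable matching $M$ can be checked for stability in $O(m)$ time, and the test $p_r(M) \preceq \sigma_r$ can be performed in $O(n)$ time. For $\NP$-hardness I would take the same instance $(I,\sigma)$ built from a VC-3 instance $(G,K)$, with $\sigma = \langle 2n-K+8, 2K, n-K, 0, n-K, K \rangle$ as before. The entire stability analysis of Theorem~\ref{pb_opt:thm_rmsr_gensr} carries over unchanged in both directions: each vertex cover $C$ of size $k$ yields a stable matching $M$ with $p(M) = \langle 2n-k+8, 2k, n-k, 0, n-k, k \rangle$, and conversely every stable matching of $I$ induces, via $C = \{v_i : \{v_i,w_i\} \in M\}$, a vertex cover with profile of this form. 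The only ingredient requiring fresh attention is the final profile comparison.

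The key step is therefore to show that $p_r(M) \preceq \sigma_r$ if and only if $k \leq K$. Here the reverse profile is $p_r(M) = \langle k, n-k, 0, n-k, 2k, 2n-k+8 \rangle$ and $\sigma_r = \langle K, n-K, 0, n-K, 2K, 2n-K+8 \rangle$. Comparing these lexicographically, the leading entries are $k$ and $K$: if $k < K$ then $p_r(M) \prec \sigma_r$ at once; if $k = K$ then every remaining entry coincides and $p_r(M) = \sigma_r$; and if $k > K$ then $p_r(M) \succ \sigma_r$. Thus $p_r(M) \preceq \sigma_r$ holds exactly when $k \leq K$, which is precisely the condition characterising a vertex cover of size at most $K$. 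Since this is the same polynomial-time reduction as before, $\NP$-hardness follows, and combined with membership in $\NP$ we conclude that {\sc gensr-d} is $\NP$-complete. I do not expect a genuine obstacle; the only point needing care is verifying that the reversed comparison collapses to the same threshold $k \leq K$, which holds because the reverse profile is dominated by its first entry $k$.
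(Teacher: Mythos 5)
Your proposal is correct and takes essentially the same route as the paper, whose own proof of this corollary is just the observation that the Theorem~\ref{pb_opt:thm_rmsr_gensr} reduction works verbatim: $G$ has a vertex cover of size $\leq K$ if and only if $I$ admits a stable matching $M$ with $p_r(M) \preceq \sigma_r$. Your explicit verification that the reversed comparison collapses to the threshold $k \leq K$ (which is legitimate because every stable matching of $I$ assigns all agents within the first six positions of their lists, so $p_r(M)$ and $\sigma_r$ share the same leading zeros) simply fills in the detail the paper leaves implicit.
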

\begin{proof}
	We use the same reduction and a similar argument  as Theorem \ref{pb_opt:thm_rmsr_gensr} to show that $G$ has a vertex cover of size $\leq K$ if and only if $I$ admits a stable matching $M$ such that $p_r(M) \preceq \sigma_r$.
\end{proof}

We are able to further extend these results looking at a two restricted decision problems. 

\begin{definition}
	We define {\sc rmsr-first-d} as follows. An instance of {\sc rmsr-first-d}, $(I, K)$, comprises an instance $I$ of \acrshort{sr} and an integer $K$. The problem is to decide whether there exists a stable matching $M$ in $I$ such that $p_1 \geq K$, where $p(M) = \langle p_1, p_2, ..., p_n \rangle$.
\end{definition}

\begin{definition}
	We define {\sc gensr-final-d} as follows. An instance of {\sc gensr-final-d}, $(I, K)$, comprises an instance $I$ of \acrshort{sr} and an integer $K$. The problem is to decide whether there exists a minimum regret stable matching $M$ in $I$ such that $p_i \leq K$, where $i=d(M)$ and $p(M) = \langle p_1, p_2, ..., p_n \rangle$.
\end{definition}

\begin{corollary}
	{\sc rmsr-first-d} is $\NP$-complete.
\end{corollary}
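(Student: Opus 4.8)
The plan is to reuse the reduction from \textsc{vertex cover} in cubic graphs (VC-3) constructed in the proof of Theorem \ref{pb_opt:thm_rmsr_gensr}, changing only the target value and the quantity we constrain. Membership in $\NP$ is immediate and follows exactly as in Theorem \ref{pb_opt:thm_rmsr_gensr}: given a matching $M$ in $I$, stability is checkable in $O(m)$ time and the single profile entry $p_1$ (the number of agents assigned a first choice) is computable in $O(n)$ time, so verifying $p_1 \geq K$ is polynomial.

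For $\NP$-hardness, given a VC-3 instance $(G,K)$ I would build the same \acrshort{sr} instance $I$ as in Theorem \ref{pb_opt:thm_rmsr_gensr}, but now produce the \textsc{rmsr-first-d} instance $(I,K')$ with threshold $K' = 2n - K + 8$, asking whether $I$ admits a stable matching $M$ with $p_1 \geq K'$. The key observation is that in that construction the number of agents receiving their first choice is exactly $2n - k + 8$ whenever the matching corresponds to a vertex cover of size $k$: the four pairs $\{a_j,b_j\}$ always contribute $8$ first choices, each covered vertex block contributes one first choice (for $v_i$), and each uncovered vertex block contributes two (for $w_i$ and $y_i$). Hence $p_1$ is a strictly decreasing affine function of the vertex-cover size, so demanding many first choices is equivalent to demanding a small vertex cover.

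The correctness argument then splits as before. For the forward direction, a vertex cover $C$ with $|C|=k\leq K$ yields, via the same matching $M$, a stable matching with $p_1 = 2n - k + 8 \geq K'$. For the converse I would invoke the structural analysis already established in Theorem \ref{pb_opt:thm_rmsr_gensr}: the bullet-point argument forcing $\{a_j,b_j\}\in M$ and forcing each $v_i$ to be matched either as $\{v_i,w_i\}$ (with $\{x_i,y_i\}$) or as $\{v_i,y_i\}$ (with $\{w_i,x_i\}$) depends only on stability, \emph{not} on any profile constraint. Thus every stable matching determines a set $C$ that is necessarily a vertex cover of $G$, and $p_1\geq K'$ forces $|C| = 2n+8-p_1 \leq K$.

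The main point to verify—and the only place this argument genuinely differs from Theorem \ref{pb_opt:thm_rmsr_gensr}—is that constraining the single coordinate $p_1$ suffices, even though the earlier reduction dominated the entire profile $\sigma$. This is exactly where the stability-based structural forcing does the work: because stability alone pins down the shape of every stable matching of $I$, the first coordinate $p_1$ already encodes the vertex-cover size, and no information from the remaining coordinates is needed. Since the construction is clearly polynomial-time computable and membership in $\NP$ was noted above, this establishes that \textsc{rmsr-first-d} is $\NP$-complete.
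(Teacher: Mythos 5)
Your proposal is correct and follows essentially the same route as the paper: the paper's (very terse) proof likewise reuses the VC-3 reduction from Theorem \ref{pb_opt:thm_rmsr_gensr} verbatim and observes that restricting attention to the first profile coordinate suffices, which is exactly your threshold $K' = 2n - K + 8$ together with the observation that the stability-only structural forcing makes $p_1$ a strictly decreasing affine function of the vertex-cover size. Your write-up simply makes explicit the details the paper leaves implicit.
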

\begin{proof}
	Using identical constructions and logic as in Theorem \ref{pb_opt:thm_rmsr_gensr} above, we can see that by considering only the first element of the profiles we are able to prove {\sc rmsr-first-d} is $\NP$-complete.
\end{proof}

  \begin{corollary}
	{\sc gensr-final-d} is $\NP$-complete.
\end{corollary}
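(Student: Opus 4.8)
The plan is to recycle the reduction from {\sc vc-3} built in the proof of Theorem~\ref{pb_opt:thm_rmsr_gensr} almost verbatim, exploiting a structural feature of that construction: every stable matching of the constructed \acrshort{sr} instance has the \emph{same} degree. Recall that there each stable matching $M$ of $I$ corresponds to a vertex cover $C$ of $G$ (via $v_i\in C \iff \{v_i,w_i\}\in M$) and has profile $p(M)=\langle 2n-k+8,\,2k,\,n-k,\,0,\,n-k,\,k\rangle$ with $k=|C|$. Because $G$ is cubic it has at least one edge, so every vertex cover satisfies $k\geq 1$; hence $p_6=k\geq 1$ and every stable matching of $I$ has degree exactly $6$. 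Consequently every stable matching of $I$ is a minimum regret stable matching, the index $i=d(M)$ appearing in the definition of {\sc gensr-final-d} is always $6$, and $p_i=p_6=k$.

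First I would establish membership in $\NP$. Given a matching $M$ as certificate, stability is verifiable in $O(m)$ time and the profile $p(M)$ is computable in $O(n)$ time. The only nonroutine point is confirming that $M$ is \emph{minimum regret}: I would compute a minimum regret stable matching, and hence the minimum achievable degree, in polynomial time using the $O(n^2)$ algorithm of \citet{GI89} recalled in Section~\ref{sec_formal_sr}, compare that degree with $d(M)$, and finally test $p_{d(M)}\leq K$. Each step is polynomial, so {\sc gensr-final-d} lies in $\NP$.

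For $\NP$-hardness I would map an instance $(G,K)$ of {\sc vc-3} to the instance $(I,K)$ of {\sc gensr-final-d}, where $I$ is exactly the \acrshort{sr} instance constructed in Theorem~\ref{pb_opt:thm_rmsr_gensr} and the integer $K$ is carried over unchanged. The equivalence to prove is that $G$ has a vertex cover of size at most $K$ if and only if $I$ has a minimum regret stable matching $M$ with $p_{d(M)}\leq K$. By the observations above $d(M)=6$ for every stable matching and $p_6$ equals the size of the associated vertex cover, so the claim collapses to: some vertex cover has size at most $K$ iff some stable matching has $p_6\leq K$. The forward direction reuses the matching built from a cover $C$ in Theorem~\ref{pb_opt:thm_rmsr_gensr}, which is stable, has degree $6$, and satisfies $p_6=|C|\leq K$; the reverse direction reuses the argument there that any stable matching induces a vertex cover of size $p_6$.

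I expect the main obstacle to be bookkeeping rather than conceptual: I must ensure that no stable matching of $I$ can have degree other than $6$, so that the minimum regret restriction in {\sc gensr-final-d} neither rules out any matching used in the reduction nor moves the index $i=d(M)$ away from the sixth profile position. This rests entirely on $p_6=k\geq 1$ holding for every vertex cover, which follows from $G$ being cubic, and I would state this explicitly to justify that the minimum regret constraint is effectively vacuous on $I$ and that the reduction, computable in polynomial time, is correct; combined with membership in $\NP$ this yields $\NP$-completeness.
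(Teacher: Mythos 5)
Your proof is correct and takes essentially the same approach as the paper: both reuse the reduction of Theorem~\ref{pb_opt:thm_rmsr_gensr} verbatim and rest on the key observation that every stable matching of the constructed instance has degree exactly $6$, which makes the minimum regret restriction vacuous and reduces the question to whether some stable matching has $p_6 \leq K$, i.e.\ whether $G$ has a vertex cover of size at most $K$. The only differences are cosmetic: the paper proves the degree-$6$ claim by a direct contradiction (degree $\leq 5$ forces $\{w_i,x_i\} \in M$ and hence $\{v_i,y_i\} \in M$ for all $i$, so any edge of $G$ blocks $M$), whereas you deduce it from the stable-matching/vertex-cover correspondence together with the fact that a cubic graph admits no empty vertex cover, and you spell out the $\NP$-membership check (including verifying the minimum regret property of the certificate via the $O(n^2)$ algorithm) that the paper leaves implicit.
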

\begin{proof}
	We first show that any stable matching $M$ constructed as described in Theorem \ref{pb_opt:thm_rmsr_gensr} will have degree $6$. Suppose for contradiction that there exists a stable matching $M$ with $d(M) \leq 5$. Then $\{w_i,x_i\} \in M$ for all $i$ $(1 \leq i \leq n)$ which implies that $\{v_i,y_i\} \in M$ for all $i$ $(1 \leq i \leq n)$. If we pick any edge $\{v_i, v_j\} \in E$ then $\{v_i, v_j\}$ blocks $M$, a contradiction. 
	
	Now, using a similar proof to Theorem \ref{pb_opt:thm_rmsr_gensr}, we note that any stable matching $M$ constructed from a vertex cover $C$ of size $\leq K$ satisfies $d(M)=6$ and $p_6 \leq K$, where $p(M) = \langle p_1, p_2, ..., p_n \rangle$. Conversely, given a minimum regret stable matching $M$ such that $p_i \leq K$ where $i = d(M)$, it follows that $i=6$. We then proceed as before, obtaining a vertex cover $C$ of size $\leq K$.
\end{proof}

  \section{Experiments and evaluations}
  \label{sm_rm_sec_exps}
  \FloatBarrier

  \subsection{Methodology}
  \label{sm_prof_exp_sec_method}
 
  For our experiments we used randomly-generated data to compare rank-maximal, generous and median\footnote{Recall from Section \ref{smi_rm_sec_contribution} that although the median criterion is not profile-based, we are interested in determining whether the median stable matching more closely approximates a rank-maximal or a generous stable matching, in practice.} stable matchings over a range of measures (cost, sex-equal score, degree, number of agents obtaining their first choice and number of agents who obtain a partner in the lower $a\%$ of their preference list). This final measure (an example of which was given in Section \ref{smi_rm_sec_contribution}) may be more formally defined as the number of agents who obtain a partner between their $b$th choice and $n$th choice inclusive, where $b = (100-a) \frac{n}{100}+1$. We also investigated the effect of varying instance size (in terms of the number of men or women) on these properties. Our experiments explored $19$ separate instance types with the number of men (and women) taking the values of $\{10, 20, ..., 100, 200,..., 1000\}$ and with $1000$ instances tested in each case. All instances tested were complete with uniform distributions on preference lists, as preliminary experimentation showed that this, in general, produces a larger number of stable matchings than using incomplete lists or linear distributions.

The set of all stable matchings of an instance $I$ of \acrshort{smi} were found using \citeauthor{Gus87}'s $O(m + n|\mathcal{M}_S|)$ time Enumeration Algorithm \citep{Gus87}. The Enumeration Algorithm comprises two runs of the extended Gale-Shapley Algorithm \citep{GS62} (finding both the man-optimal and woman-optimal stable matchings), one run of
 the Minimal Differences Algorithm \cite{GI89} (finding all rotations of an instance), creation of the rotation digraph, and finally enumeration of all stable matchings using this digraph \cite{Gus87}.
Using the list of enumerated stable matchings we were then able to compute all the types of optimal stable matchings described above. 
A timeout of $1$ hour was used for this algorithm. Note that since the time complexities of the algorithms described in this paper do not beat the best known for these problems, we did not implement our algorithms to test their performance.
Experiments were carried out on a machine with 32 cores, 8$\times$64GB RAM and Dual Intel\textsuperscript{\textregistered} Xeon\textsuperscript{\textregistered} CPU E5-2697A v4 processors, running Ubuntu version $17.10$. Instance generation and statistics collection programs were written in Python and run on Python version $2.7.14$. The plot and table generation program was written in Python and run using Python version 3.6.1. All other code was written in Java and compiled using Java version $1.8.0$. All Java code was run on a single thread, with GNU Parallel \cite{Tan11} used to run multiple instances in parallel. Java garbage collection was run in serial and a maximum heap size of 1GB was distributed to each thread. Code and data repositories for these experiments can be found at \url{https://doi.org/10.5281/zenodo.2545798} and \url{https://doi.org/10.5281/zenodo.2542703} respectively.

In addition to the above experiments, we also compared the minimum amount of memory required to store edge capacities of the network (based on exponential weights) and associated vb-network (based on vector-based weights) for each instance. The instances described above were used along with five additional instances each for $n \in \{2000, 3000, 4000,$ $5000\}$. Unlike the instances described above for $n \leq 1000$, space requirement calculations for these larger instances were carried out on a machine running Ubuntu version $14.04$ with $4$ cores, $16$GB RAM and Intel\textsuperscript{\textregistered}, Core\textsuperscript{\texttrademark} i7-4790 processors, and compiled using Java version 1.7.0. This machine, compared to the one described previously, was able to calculate solutions to individual instances more quickly, although it could run fewer threads in parallel. In these experiments, we were solving a small number of more complex instances and were not interested in the time taken, but rather in the memory requirements, therefore this change in machine did not impact our results. A larger timeout of $24$ hours was used for each of two runs of the extended Gale-Shapley Algorithm and one run of the Minimal Differences Algorithm. Stable matchings were not enumerated for these instances. All other configurations were the same as before.

A description of how space requirements were calculated now follows. In these calculations we did not assume any particular implementation, but more generally estimated the minimum number of bits required theoretically to store the capacities in each case. As mentioned previously, the Minimal Differences Algorithm was used to find all rotations of an instance,
and only instances that did not timeout and had at least one rotation were used in these calculations. For each rotation, a rotation profile was easily computed, and vector-based weight and exponential weight edge capacities were then calculated directly from these rotation profiles.

Let $R$ be the set of rotations in an instance of \acrshort{smi} and let $\rho$ be a rotation with profile $p(\rho) = \{p_1, p_2, ..., p_n\}$. The \emph{degree} of $p(\rho)$ may be described as the maximum index $i$ for which there exists some $p_i$ such that $p_i \neq 0$. Let $d_t$ denote the maximum degree over all rotations in $R$. An exponential weight $w_e$ was calculated from $p(\rho)$ according to \citeauthor{ILG87}'s \cite{ILG87} original formula of a weight of $n^{n-i}$ for each agent assigned to their $i$th choice. In reality, we reduced this to $d_t^{d_t - i}$ which allowed a smaller number to be stored. Then, the number of bits required to store $w_e$ was calculated as $\left\lceil\log_2 w_e\right\rceil$ with an additional standard $32$-bit word used to store the length of this bit representation\footnote{For the case where $w_e = 0$, we calculated the number of bits required as $1$ with an additional standard $32$-bit word. For both the exponential weight and vector-based weight cases, if $d_t = 0$, we calculated the number of bits required as a standard $32$-bit word.}\textsuperscript{,}\footnote{We note that since $\left\lceil\log_2 w_e\right\rceil$ is in general far greater than $32$ (for large $n$), we chose to use a standard $32$-bit word to store the number of bits for each exponential number as opposed to assuming all exponential numbers are an equal size. As an example, consider an instance of size $n=100$, where there exists one rotation with profile $\langle 1, 0, ..., -1 \rangle$ of length $n$ (requiring $\left\lceil\log_2(100^{99} - 1)\right\rceil = 658$ bits to store the associated exponential number) and several rotations with profile $\langle 0, 0, ..., 0 \rangle$ (requiring $1$ bit to store the associated exponential number). Were we to require all exponential numbers to be stored using the same number of bits, far more space would be used than necessary.}. A vector-based weight $w_v$ was calculated from $p(\rho)$ using lossless vector compression, as described in Section \ref{sm_rm_motivation_sec}. Let $z$ be the number of non-zero elements of $p(\rho)$. Then, the number of bits required to hold indices of non-zero elements is given by $z\left\lceil\log_2 n\right\rceil$ (since the length of the profile is bounded by $n$), and the number of bits required to hold values of non-zero elements is given by $z (\left\lceil\log_2 2n \right\rceil + 1)$ (since each element is bounded below by $-2n$ and above by $2n$). The addition of a $32$-bit word then allowed the number $z$ to be stored. Finally, two additional $32$-bit words are required over the whole vb-network (for storing the numbers $n$ and $2n$).

Correctness testing was conducted in the following way. All stable matchings produced by all instances of size up to $n=1000$ were checked for \emph{(1) capacity}: each man (woman) may only be assigned to one woman (man) respectively; and \emph{(2) stability}: no blocking pair exists. Additional correctness testing was also conducted for all instances of size $n=10,...,60$. For these instances, in addition to the above testing, a process took place to determine whether the number of stable matchings found by the Enumeration Algorithm matched the number found by an IP model. This was developed in Python version 2.7.14 with the IP modelling framework PuLP (version 1.6.9) \cite{MOD11} using the CPLEX solver \cite{Cpl19}, version 12.8.0. Each instance was run on a single thread with a time limit of $10$ hours (all runs completed within this time), using the same machine that was used for instances of size $n \leq 1000$.
All correctness tests passed successfully.

 \subsection{Experimental results summary}


Table \ref{sm_rm_res_instanceParams} shows the $19$ instance types of size up to $n=1000$. Instance types are labelled according to $n$, e.g., S100 is the instance type containing instances where $n = 100$. In Columns $3$ and $4$ of this table we show the mean number of rotations $|\mathcal{R}|_{av}$ and the mean number of stable matchings $|\mathcal{M}_S|_{av}$ per instance type respectively. Column $5$ displays the number of instances that did not complete within the $1$ hour timeout and the mean time taken to run the Enumeration Algorithm over a completed instance is shown in Column $6$. Figure \ref{sm_rm_plot_av_stab_vs_nlogn} shows the mean number of stable matchings as $n$ increases. 

Figures \ref{sm_rm_plot_first_vs_n} and \ref{sm_rm_plot_degree_vs_n} show the mean number of first choices and mean degree respectively, for rank-maximal, median and generous stable matchings, as $n$ increases. Figures \ref{sm_rm_plot_egal_vs_n} and \ref{sm_rm_plot_se_vs_n} show the mean cost and sex-equal score respectively, of the above types of stable matchings with the addition of their respective mean optimal values. Note that for any given instance, the cost and sex-equal scores of all rank-maximal stable matchings are equal. This is also the case for generous stable matchings. Our definition of the median stable matching (given in Section \ref{sec_optimality}) ensures that the median stable matching is unique. Thus, we are not required to find an optimal matching with best cost or sex-equal score. Data for these plots may be seen as Tables \ref{sm_rm_res_generalStats}, \ref{sm_rm_res_RM}, \ref{sm_rm_res_GEN} and \ref{sm_rm_res_GM} in Appendix \ref{app_results_tables}, where Table \ref{sm_rm_res_generalStats} shows statistics for cost and sex-equal score, and Tables \ref{sm_rm_res_RM}, \ref{sm_rm_res_GEN} and \ref{sm_rm_res_GM} display statistics for rank-maximal, generous and median stable matchings. Additionally, these latter tables show the minimum, maximum and mean number of assignments $l_a$ in the last $a\%$ of all preference lists. 

Finally, Figure \ref{fig_sm_rm_mot_exp} (with associated Table \ref{sm_rm_space_table} in Appendix \ref{app_results_tables}) shows a plot comparing the mean number of bits required to store edge capacities of a network and vb-network using exponential weights and vector-based weights respectively. In this plot, circles represent the mean number of bits required for different values of $n$. The exact space requirements were calculated according to the process described in Section \ref{sm_prof_exp_sec_method}.   Solid circles represent data points $n\in \{100, 200, ..., 1000\}$ and these were used to calculate the best fit curves shown when assuming a second order polynomial model. $90\%$ confidence intervals using the $5$th and $95$th percentile measurements for each representation are also displayed. Above $n=1000$ we extrapolate up until $n=100,000$, showing the expected trend with an increasing $n$. Data points for instances of size $n \in \{10, 20, ..., 90, 2000, 3000, 4000, 5000\}$ are represented as unfilled circles. These data were not used to calculate the best fit curves, but are added to the figure to help determine the validity of our extrapolation mentioned above.

The main findings of these experiments are:

    \begin{itemize}
    	\item \emph{Number of stable matchings}: From Figure \ref{sm_rm_plot_av_stab_vs_nlogn} we can see that the mean number of stable matchings increases with instance size. \citet{LP09} showed that the number of expected stable matchings in an instance of size $n$ tends to the order of $n \log n$. Our experiments confirm this result and show a reasonably linear correlation between $n \log n$ and the mean number of stable matchings for instances with $n \geq 100$. 

    	\item \emph{Number of first choices}: As expected, rank-maximal stable matchings obtain the largest number of first choices by some margin, when compared to generous and median stable matchings. When looking at the mean number of first choices, this margin appears to increase from almost 1:1 for instance type S10 ($6.9$ for rank-maximal compared to $6.0$ and $6.1$ for generous and median respectively) to approximately 3:1 for instance type S1000 ($158.4$ for rank-maximal compared to $63.5$ and $71.5$ for generous and median respectively). Generous and median stable matchings are far more aligned, however generous is increasingly outperformed by median on the mean number of first choices with ratios starting at around 1:1 for S10, gradually increasing to 1.1:1 for S1000. This is summarised in the plot shown in Figure \ref{sm_rm_plot_first_vs_n}.
    	\item \emph{Number of last $a\%$ choices}: For rank-maximal stable matchings, the mean number of assignments in the final $10\%$ of preference lists was low, increasing from $0.4$ for instance type S10 to $1.4$ for instance type S1000. Note that this increase is far lower than the rate of instance size increase. The mean number of generous stable matching choices in the final $50\%$ of preference lists decreased from $2.4$ to $0.0$ over all instance types.  As the generous criteria minimises final choices, this is likely due to the number of stable matchings increasing with larger instance size. Finally, it is interesting to note that the mean number of median stable matching choices in the final $20\%$ of preference lists decreases from $0.5$ to $0.0$ despite the number of lower ranked choices not being directly minimised. Figure \ref{sm_rm_plot_degree_vs_n} shows how the mean degree changes with respect to $n$ for rank-maximal, median and generous stable matchings. We can see that on average the rank-maximal criteria performs badly, putting men or women very close to the end of their preference list. As above the generous criteria outperforms either of the other optimisations, with median somewhere in between.
    	
    	\item \emph{Cost and sex-equal score}: The range between minimum and maximum optimal costs and sex-equal scores over all instance types (Table \ref{sm_rm_res_generalStats}) is small when compared with results found for these measures in the rank-maximal, generous and median stable matching experiments. 
    	In Figure \ref{sm_rm_plot_egal_vs_n}, we can see that a generous stable matching is a close approximation of an egalitarian stable matching in practice. This is followed by the median and then the rank-maximal solution concepts. A similar, though less pronounced, result holds for the sex-equal stable matching, as can be seen in Figure \ref{sm_rm_plot_se_vs_n}.

    	\item \emph{Median stable matchings:} We can see from Figures \ref{sm_rm_plot_first_vs_n}, \ref{sm_rm_plot_degree_vs_n}, \ref{sm_rm_plot_egal_vs_n} and \ref{sm_rm_plot_se_vs_n} that over all measures, a median stable matching, on average, more closely approximates a generous stable matching. One possible reason for this is as follows. A rank-maximal stable matching may be seen as prioritising some agents obtaining high choice partners (out of their set of possible partners in any stable matching), possibly at the expense of some agents obtaining low choice partners (out of their set of possible partners in any stable matching). A generous stable matching on the other hand, ensures no-one is assigned to a partner of worse rank than the degree of a minimum-regret stable matching, and so it is less likely that agents achieve low choice partners and, consequently (by similar reasoning to the rank-maximal case), less likely that agents achieve high choice partners as well. Thus, a median stable matching, which assigns each agent with their middle-choice partner (out of the set of all the possible partners in any stable matching) is more likely to approximate a generous stable matching.


    	\item \emph{Space requirement:} From Figure \ref{fig_sm_rm_mot_exp}, we can see clearly that the exponential weights of the network require more space on average than the vector-based weights of the associated vb-network, and that this difference increases as $n$ grows large.  Note that the additional small number of data points (not used to calculate the curves) at $n\in \{2000, 3000, 4000, $ $5000\}$ fit this model well. At $n=1000$, the vector-based weights require approximately $10$ times less space than the exponential weights. Finally, we can see that at $n=100,000$, we expect the vector-based weights to be around $100$ times less costly in terms of space than the exponential weights, with the space requirement for the exponential weights nearing $1$GB.

    	Finally, we show that for a specific family of instances, the result above is even more pronounced. Consider the family of \acrshort{sm} instances represented by instance $I_1$ in Figure \ref{sm_rm_fig_space_example}, where $n$ is even. In $I_1$, the ``...'' symbol in each man's preference list indicates that all other women, not already specified, are listed in any order (after his first choice and before his last choice). Similarly in each woman's preference list, all other men, not already specified, are listed in any order after her second choice. For $n=100,000$, $I_1$ requires over $10$GB to store exponential weights of the network, and only $0.64$MB to store the equivalent vector-based weights of the vb-network. This shows that in certain circumstances, the vector-based weights can be over $100,000$ times less costly in terms of space than the exponential weights. 
    	
      \end{itemize}

 \begin{figure}[]
\centering
   \begin{subfigure}[t]{0.3\textwidth}
   Men's preferences:\\
   \begin{tabular}{p{0.8cm} p{0.6cm} p{0.6cm} p{0.6cm}}
$m_1$: & $w_1$ & ... & $w_2$\\
$m_2$: & $w_2$ & ... & $w_1$\\
...\\
$m_{2i - 1}$: & $w_{2i - 1}$ & ... & $w_{2i}$\\
$m_{2i}$: & $w_{2i}$ & ... & $w_{2i - 1}$\\
...\\
$m_{n-1}$: & $w_{n-1}$ & ... & $w_n$\\
$m_n$: & $w_n$ & ... & $w_{n-1}$\\
\end{tabular}
  \end{subfigure}
  \hspace{1cm}
   \begin{subfigure}[t]{0.3\textwidth}
  Women's preferences:\\
  \begin{tabular}{p{0.8cm} p{0.6cm} p{0.6cm} p{0.6cm}}
$w_1$: & $m_2$ & $m_1$ & ...\\
$w_2$: & $m_1$ & $m_2$ & ...\\
...\\
$w_{2i - 1}$: & $m_{2i}$ & $m_{2i - 1}$ & ...\\
$w_{2i}$: & $m_{2i - 1}$ & $m_{2i}$ & ...\\
...\\
$w_{n-1}$: & $m_n$ & $m_{n-1}$ & ...\\
$w_n$: & $m_{n-1}$ & $m_n$ & ...\\
\end{tabular}
  \end{subfigure}
  \caption[{\sc sm} instance $I_1$.]{\acrshort{sm} instance $I_1$, with $i$ satisfying $1\leq i\leq n/2$.}
  \label{sm_rm_fig_space_example}
    \end{figure}


\begin{table}[] \centerline{\begin{tabular}{ R{1.5cm} | R{1.5cm} R{1.5cm} R{1.5cm} R{1.5cm} R{2cm} }\hline\hline Case & $n$ & $|\mathcal{R}|_{av}$ & $|\mathcal{M}|_{av}$ & Timeout & Time (ms) \\ 
\hline S10 & $10$ & $1.8$ & $3.0$ & $0$ & $50.3$ \\ 
 S20 & $20$ & $4.2$ & $6.5$ & $0$ & $60.2$ \\ 
 S30 & $30$ & $6.5$ & $10.9$ & $0$ & $75.2$ \\ 
 S40 & $40$ & $8.9$ & $15.7$ & $0$ & $93.2$ \\ 
 S50 & $50$ & $11.2$ & $20.9$ & $0$ & $113.8$ \\ 
 S60 & $60$ & $13.4$ & $27.2$ & $0$ & $133.9$ \\ 
 S70 & $70$ & $15.9$ & $34.0$ & $0$ & $163.7$ \\ 
 S80 & $80$ & $18.2$ & $40.6$ & $0$ & $205.5$ \\ 
 S90 & $90$ & $20.0$ & $46.4$ & $0$ & $235.1$ \\ 
 S100 & $100$ & $22.4$ & $54.2$ & $0$ & $278.0$ \\ 
 S200 & $200$ & $41.9$ & $138.8$ & $0$ & $1084.5$ \\ 
 S300 & $300$ & $59.2$ & $231.0$ & $0$ & $2886.1$ \\ 
 S400 & $400$ & $76.2$ & $337.6$ & $0$ & $7972.7$ \\ 
 S500 & $500$ & $90.7$ & $442.0$ & $0$ & $15934.8$ \\ 
 S600 & $600$ & $105.8$ & $566.1$ & $0$ & $32925.3$ \\ 
 S700 & $700$ & $119.1$ & $675.5$ & $0$ & $50802.4$ \\ 
 S800 & $800$ & $131.4$ & $804.0$ & $1$ & $87169.2$ \\ 
 S900 & $900$ & $144.9$ & $937.6$ & $0$ & $128878.0$ \\ 
 S1000 & $1000$ & $157.6$ & $1115.2$ & $1$ & $196029.9$ \\ 
 \hline\hline \end{tabular}} \caption{General instance information and algorithm timeout results.} \label{sm_rm_res_instanceParams} \end{table} 

\begin{figure}
\centering
  		\includegraphics[scale=0.65]{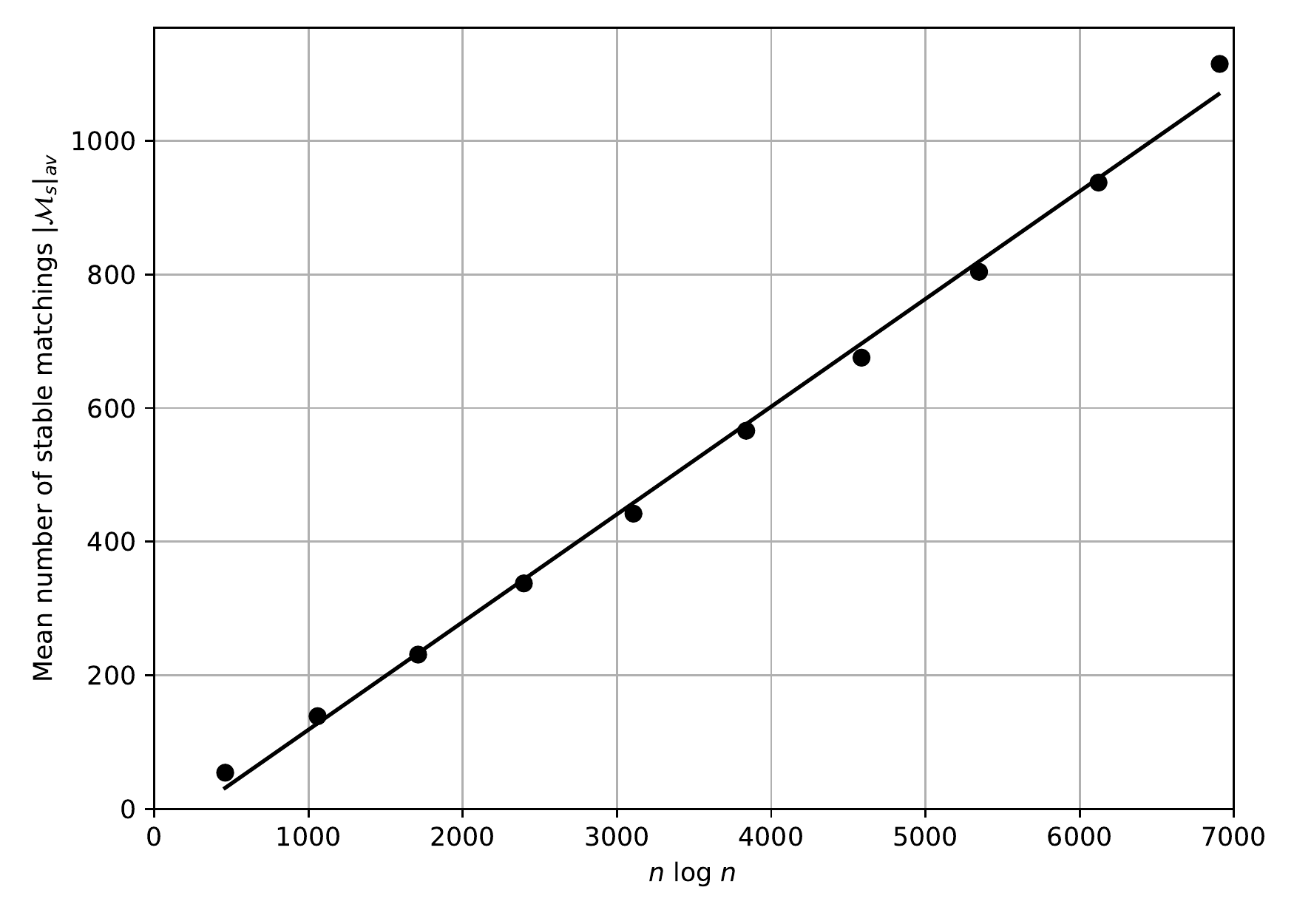}
  		\caption[Plot of the mean number of stable matchings.]{Plot of the mean number of stable matchings $|\mathcal{M}_S|_{av}$ with increasing $n$, where $n$ is the number of men or women. A linear model has been assumed for the best-fit line.}
		\label{sm_rm_plot_av_stab_vs_nlogn}
\end{figure}

\begin{figure}
\centering
   	\includegraphics[scale=0.65]{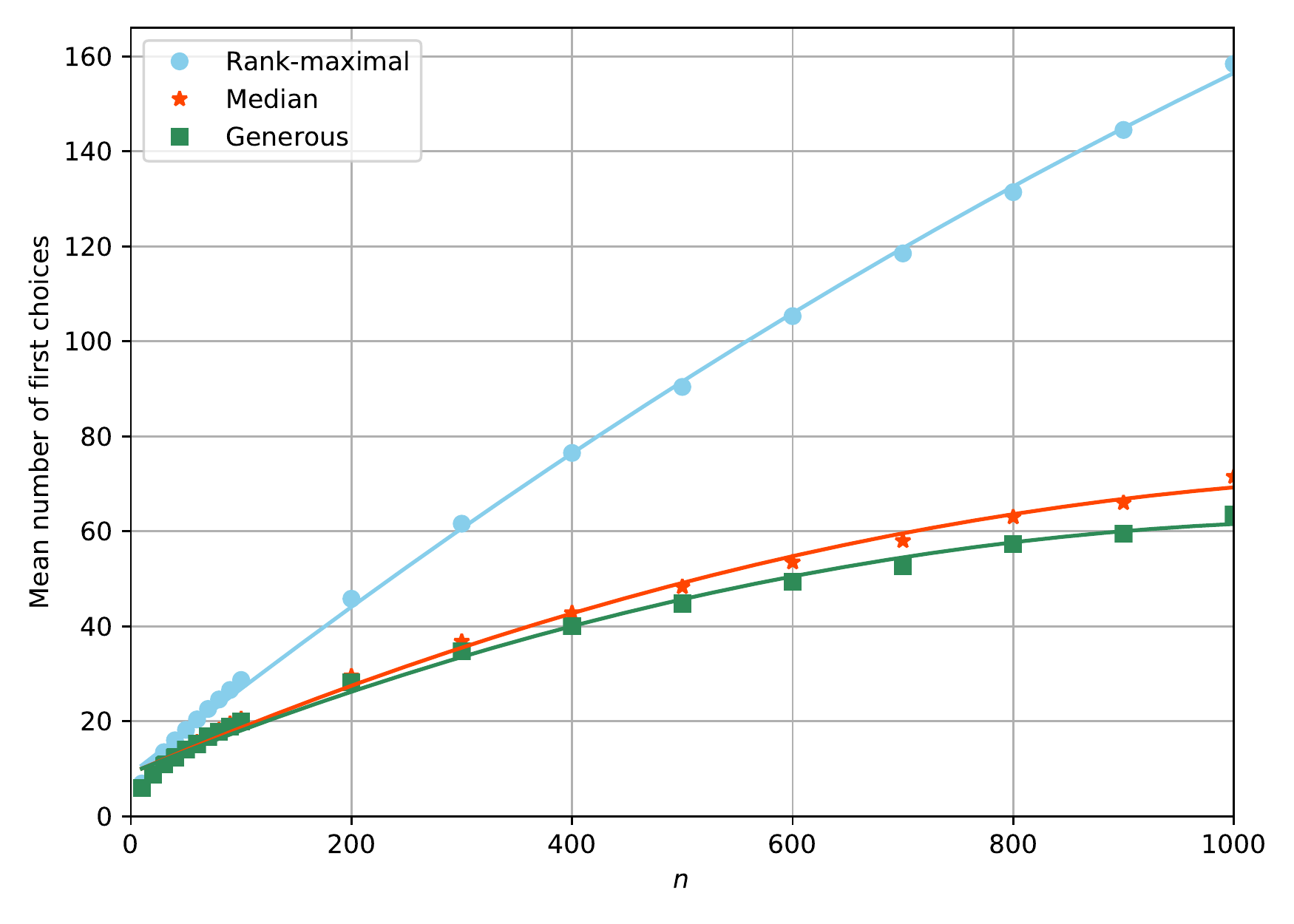}
  		\caption[Plot of the mean number of first choices for rank-maximal, median and generous stable matchings.]{Plot of the mean number of first choices for rank-maximal, median and generous stable matchings with increasing $n$, where $n$ is the number of men or women. A second order polynomial model has been assumed for all best-fit lines.}
  		\label{sm_rm_plot_first_vs_n}
\end{figure}

\begin{figure}
\centering
  		\includegraphics[scale=0.65]{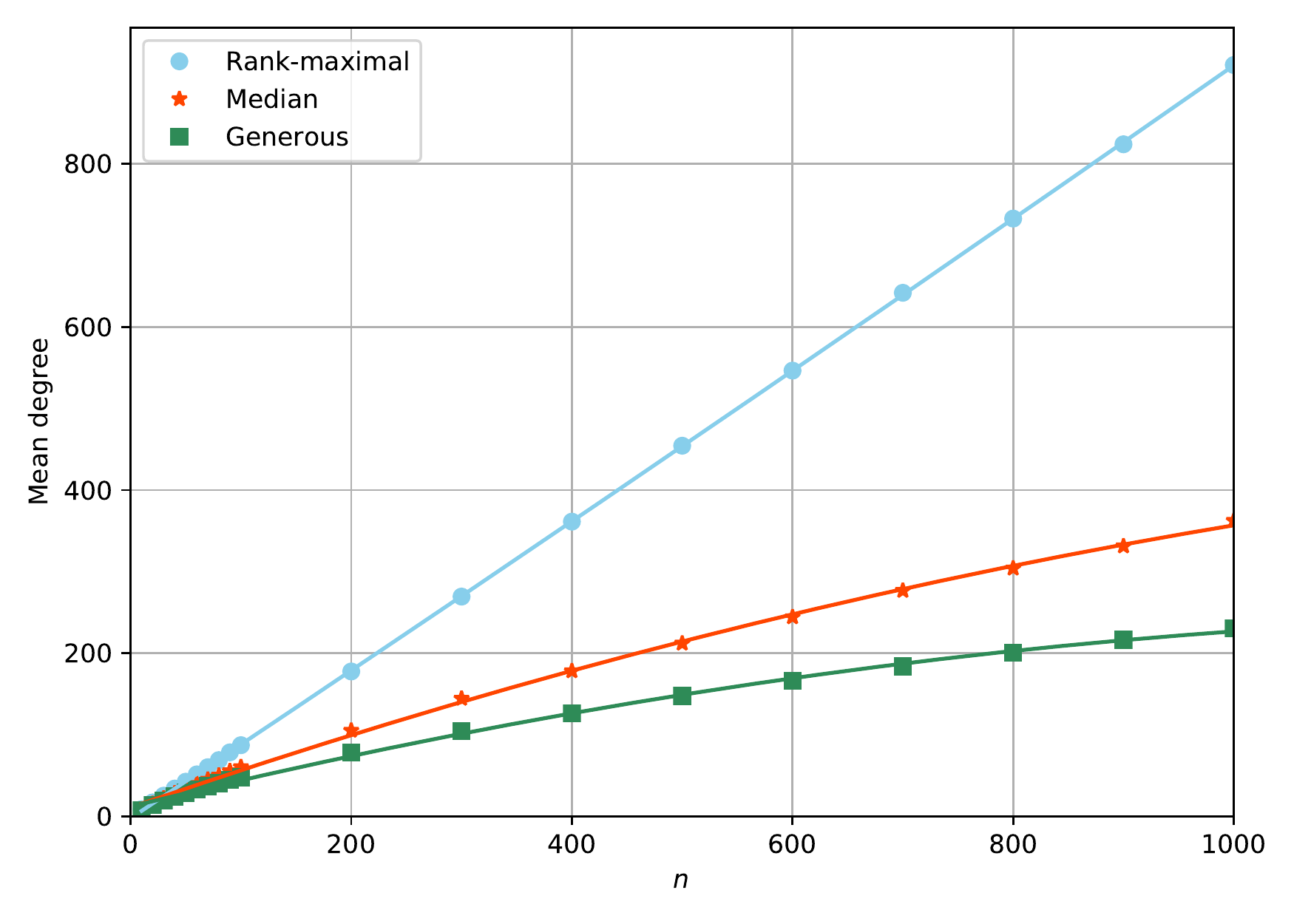}
  		\caption[Plot of the mean degree for rank-maximal, median and generous stable matchings.]{Plot of the mean degree for rank-maximal, median and generous stable matchings with increasing $n$, where $n$ is the number of men or women and the degree of a matching is the rank of a worst ranking man or woman. A second order polynomial model has been assumed for all best-fit lines.}
  		\label{sm_rm_plot_degree_vs_n}
\end{figure}

\begin{figure}
\centering
  		\includegraphics[scale=0.65]{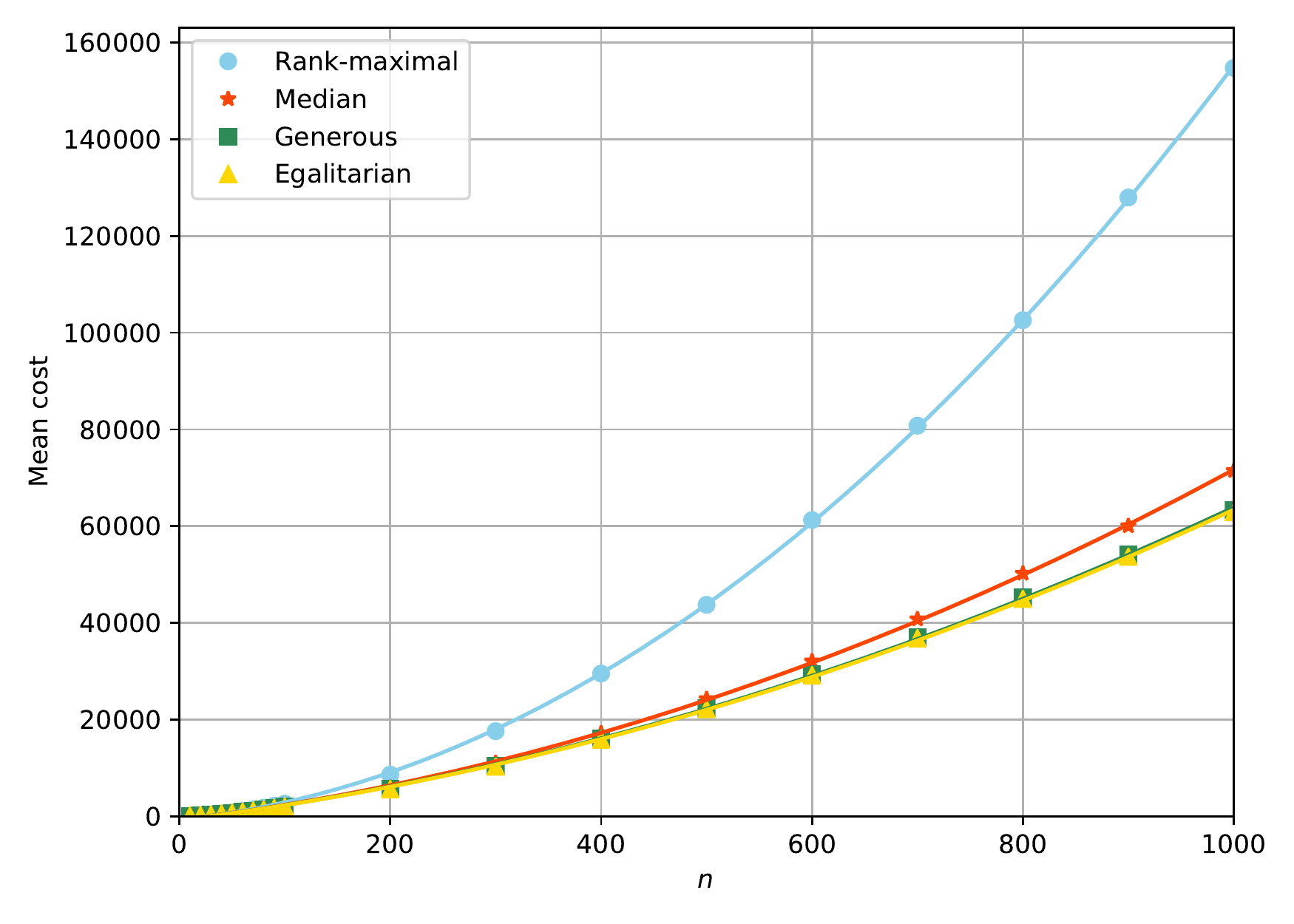}
  		\caption[Plot of the mean cost for rank-maximal, median, generous and egalitarian stable matchings.]{Plot of the mean cost for rank-maximal, median, generous and egalitarian stable matchings with increasing $n$, where $n$ is the number of men or women and the cost of a matching is the sum of ranks of all men and women. A second order polynomial model has been assumed for all best-fit lines.}
  		\label{sm_rm_plot_egal_vs_n}
\end{figure}

\begin{figure}
\centering
  		\includegraphics[scale=0.65]{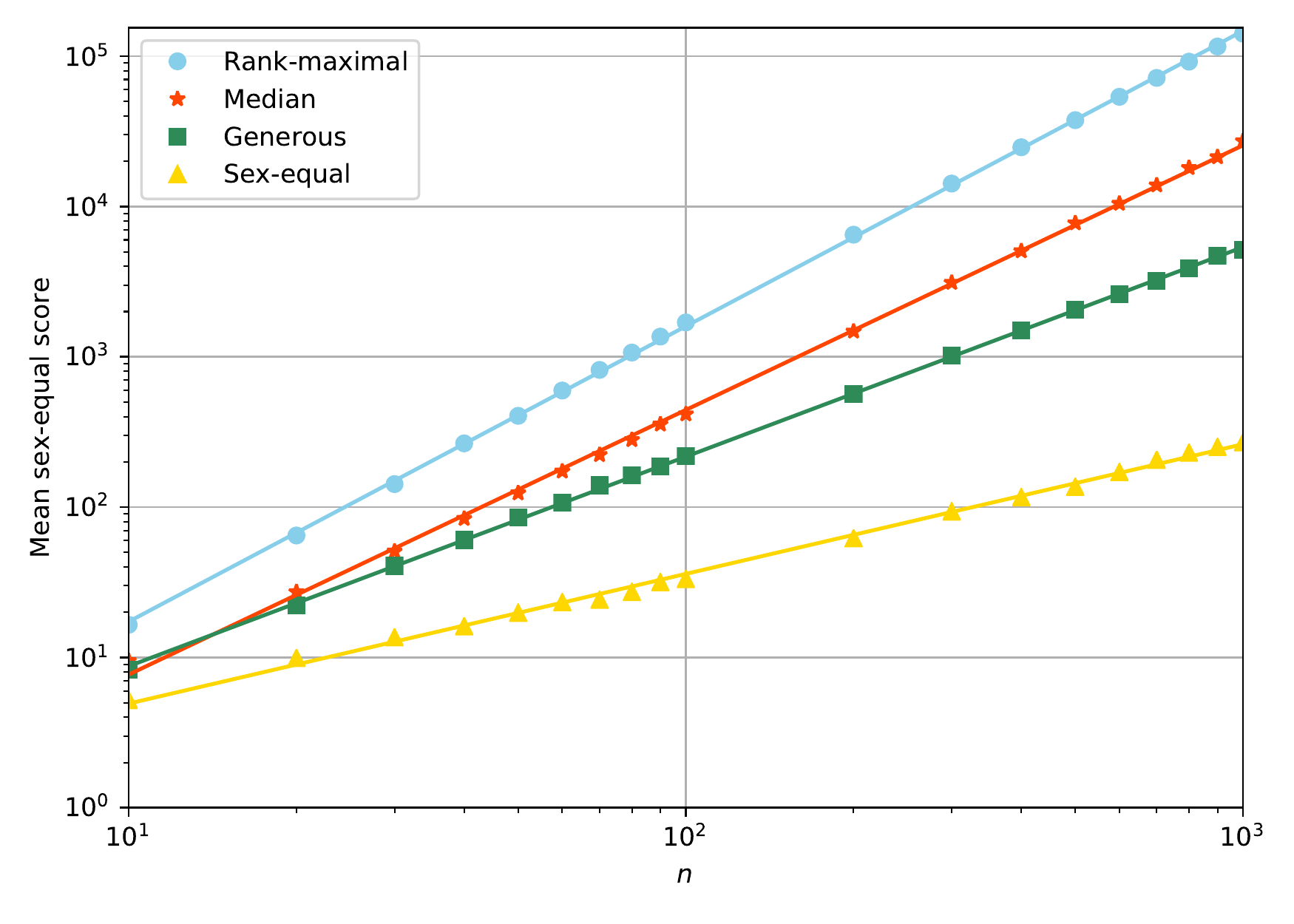}
  		\caption[A $\log$-$\log$ plot of the mean sex-equal score for rank-maximal, median, generous and sex-equal stable matchings.]{A $\log$-$\log$ plot of the mean sex-equal score for rank-maximal, median, generous and sex-equal stable matchings with increasing $n$, where $n$ is the number of men or women and the sex-equal score of a matching is the absolute difference in cost between the set of men and set of women. A first order polynomial model has been assumed for all best-fit lines.}
  		\label{sm_rm_plot_se_vs_n}	
\end{figure}

\begin{figure}
\centering
  		\includegraphics[scale=0.65]{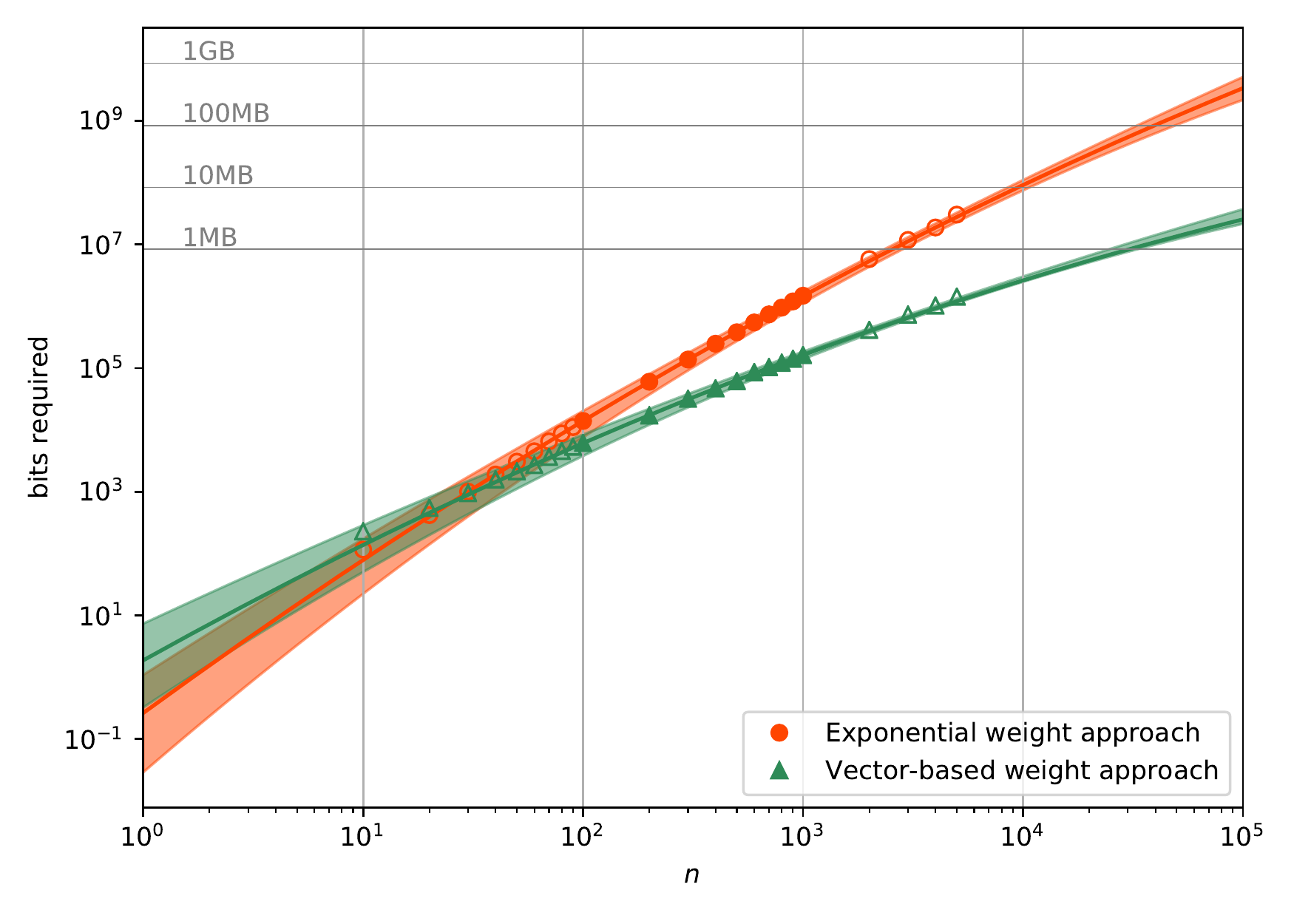}
  		\caption[A $\log$-$\log$ plot of the number of bits required to store a network and vb-network.]{A $\log$-$\log$ plot of the number of bits required to store a network and vb-network for varying the number of men or women $n$ up to $n=100,000$, comparing exponential weight and vector-based weight representations. A second order polynomial model has been assumed for all best-fit lines.}
  		\label{fig_sm_rm_mot_exp}	
\end{figure}

\FloatBarrier
\section{Conclusions and future work}
\label{sec:sm_rm_fw}
In this paper we have described a new method for computing rank-maximal and generous stable matchings for an instance of \acrshort{smi} using polynomially-bounded weight vectors that avoids the use of weights that can be exponential in the number of men. By using this approach we were able to avoid high-weight calculation problems such as overflow, inaccuracies and limitations in memory that may occur with some data types. We were also able to demonstrate an approximate factor of $10$ improvement when using polynomially-bounded weight vectors with vector compression, as opposed to exponential weights, in terms of the space required to store the network (used to find a maximum flow when computing the optimal matchings above) for instances where $n=1000$. We also showed that for a specific instance of size $n=100,000$, the space required to store exponential weights was over $10$GB, whereas the vector-based weights were over $100,000$ times less costly, requiring only $0.64$MB. This improvement is expected to increase further as $n$ grows large. An additional benefit to our new approach is that as all operations are conducted on rotation profiles (which denote the change in number of men or women with an $i$th choice partner), it is arguable that our algorithm is more transparent in terms of its execution. 

Finally, we also showed that the problem of finding rank-maximal and generous stable matchings in \acrshort{sr} is $\NP$-hard. This results still applies even in the restricted cases where we are either finding a stable matching that maximises the number of first choices or finding a minimum regret stable matching $M$ that minimises the number of $d$th choices where $d=d(M)$.

  In Section \ref{sm_rm_intro}, two potential improvements that could be made to the process of finding a rank-maximal stable matching in an instance of \acrshort{smi} were highlighted. First was the adaptation of \citeauthor{Orl13}'s \cite{Orl13} Max Flow algorithm to work in the vector-based setting. This adaptation would result in a time complexity of $O(nm^2)$ to find a rank-maximal stable matching, improving on the method outlined in this paper by a factor of $\log n$, however it is not clear that \citeauthor{Orl13}'s algorithm can be adapted to the vb-flow setting. Additionally, \citet{Fed92} used an entirely different technique based on weighted \acrshort{sat} for finding a rank-maximal stable matching in $O(n^{0.5}m^{1.5})$ time. It remains to be seen if this could be adapted to work in the vector-based setting.


\bibliographystyle{plainnat}
\bibliography{matching}

\newpage
\appendix
\noindent
\LARGE {\bf Appendix}
\normalsize
\section{Experimental results tables}
\label{app_results_tables}
\FloatBarrier

This section contains tables of results for the experiments conducted in Section \ref{sm_rm_sec_exps}.

\begin{table}[h] \centerline{\begin{tabular}{ R{1.5cm} | R{1.5cm} R{1.5cm} R{1.5cm} R{2cm} R{1.5cm} R{1.5cm} }\hline\hline & \multicolumn{3}{c}{Cost} & \multicolumn{3}{c}{Sex-equal score} \\ 
Case & Min & Max & Mean & Min & Max & Mean \\ 
\hline S10 & $40$ & $78$ & $58.0$ & $0$ & $41$ & $5.2$ \\ 
 S20 & $113$ & $215$ & $167.2$ & $0$ & $71$ & $9.9$ \\ 
 S30 & $243$ & $371$ & $311.2$ & $0$ & $137$ & $13.6$ \\ 
 S40 & $396$ & $572$ & $484.3$ & $0$ & $130$ & $16.1$ \\ 
 S50 & $567$ & $821$ & $679.5$ & $0$ & $218$ & $19.8$ \\ 
 S60 & $730$ & $1057$ & $896.2$ & $0$ & $310$ & $23.3$ \\ 
 S70 & $955$ & $1299$ & $1133.2$ & $0$ & $255$ & $24.2$ \\ 
 S80 & $1164$ & $1609$ & $1390.0$ & $0$ & $217$ & $27.2$ \\ 
 S90 & $1447$ & $1909$ & $1660.9$ & $0$ & $178$ & $31.6$ \\ 
 S100 & $1663$ & $2179$ & $1947.0$ & $0$ & $314$ & $33.2$ \\ 
 S200 & $4865$ & $6257$ & $5554.9$ & $0$ & $411$ & $62.0$ \\ 
 S300 & $9444$ & $11301$ & $10250.0$ & $0$ & $528$ & $93.4$ \\ 
 S400 & $14755$ & $16999$ & $15847.1$ & $0$ & $885$ & $116.3$ \\ 
 S500 & $20610$ & $23767$ & $22137.6$ & $0$ & $1158$ & $136.3$ \\ 
 S600 & $27502$ & $31147$ & $29147.8$ & $0$ & $1311$ & $170.8$ \\ 
 S700 & $35117$ & $38975$ & $36772.6$ & $0$ & $1352$ & $206.2$ \\ 
 S800 & $42372$ & $47962$ & $44930.2$ & $0$ & $1809$ & $230.0$ \\ 
 S900 & $50650$ & $56289$ & $53658.2$ & $0$ & $1812$ & $251.3$ \\ 
 S1000 & $59776$ & $65571$ & $62875.9$ & $0$ & $2295$ & $269.4$ \\ 
 \hline\hline \end{tabular}} \caption{Optimal costs and sex-equal scores.} \label{sm_rm_res_generalStats} \end{table}

 \begin{landscape}
 \thispagestyle{empty}
\begin{table}[] \centerline{\begin{tabular}{ R{1cm} | R{0.9cm} R{0.9cm} R{0.9cm} R{1.5cm} R{0.9cm} R{0.9cm} R{1.5cm} R{0.9cm} R{0.9cm} R{1.8cm} R{1.4cm} R{1.4cm} R{1.7cm} R{1.4cm} R{1.4cm} }\hline\hline & \multicolumn{3}{c}{$f$} & \multicolumn{3}{c}{$l_{10}$} & \multicolumn{3}{c}{Degree} & \multicolumn{3}{c}{Cost} & \multicolumn{3}{c}{Sex-equal score} \\ 
Case & Min & Max & Mean & Min & Max & Mean & Min & Max & Mean & Min & Max & Mean & Min & Max & Mean \\ 
\hline S10 & $0$ & $13$ & $6.9$ & $0.0$ & $3.0$ & $0.4$ & $4$ & $10$ & $8.6$ & $40$ & $87$ & $60.4$ & $0$ & $63$ & $16.5$ \\ 
 S20 & $2$ & $19$ & $10.6$ & $0.0$ & $4.0$ & $0.4$ & $9$ & $20$ & $17.0$ & $124$ & $275$ & $182.3$ & $0$ & $205$ & $64.8$ \\ 
 S30 & $6$ & $22$ & $13.5$ & $0.0$ & $4.0$ & $0.5$ & $12$ & $30$ & $25.3$ & $247$ & $496$ & $349.3$ & $0$ & $374$ & $142.4$ \\ 
 S40 & $6$ & $26$ & $16.0$ & $0.0$ & $4.0$ & $0.6$ & $18$ & $40$ & $34.0$ & $407$ & $810$ & $564.2$ & $1$ & $650$ & $265.6$ \\ 
 S50 & $7$ & $32$ & $18.2$ & $0.0$ & $6.0$ & $0.6$ & $22$ & $50$ & $42.4$ & $604$ & $1150$ & $809.7$ & $1$ & $902$ & $405.0$ \\ 
 S60 & $11$ & $35$ & $20.4$ & $0.0$ & $4.0$ & $0.6$ & $25$ & $60$ & $51.4$ & $742$ & $1617$ & $1095.5$ & $13$ & $1332$ & $596.8$ \\ 
 S70 & $11$ & $36$ & $22.6$ & $0.0$ & $7.0$ & $0.6$ & $29$ & $70$ & $60.2$ & $1040$ & $2054$ & $1421.3$ & $2$ & $1730$ & $818.4$ \\ 
 S80 & $11$ & $42$ & $24.6$ & $0.0$ & $4.0$ & $0.7$ & $34$ & $80$ & $68.9$ & $1328$ & $2586$ & $1780.0$ & $3$ & $2106$ & $1067.6$ \\ 
 S90 & $14$ & $43$ & $26.6$ & $0.0$ & $5.0$ & $0.7$ & $41$ & $90$ & $78.4$ & $1501$ & $3156$ & $2186.0$ & $7$ & $2644$ & $1363.4$ \\ 
 S100 & $13$ & $45$ & $28.7$ & $0.0$ & $5.0$ & $0.7$ & $40$ & $100$ & $87.2$ & $1807$ & $3609$ & $2617.4$ & $32$ & $2983$ & $1693.5$ \\ 
 S200 & $24$ & $73$ & $45.8$ & $0.0$ & $7.0$ & $0.9$ & $98$ & $200$ & $177.6$ & $5617$ & $12532$ & $8616.7$ & $1277$ & $11096$ & $6494.2$ \\ 
 S300 & $33$ & $96$ & $61.6$ & $0.0$ & $6.0$ & $1.0$ & $110$ & $300$ & $269.4$ & $10333$ & $24028$ & $17608.0$ & $1677$ & $21562$ & $14213.9$ \\ 
 S400 & $46$ & $110$ & $76.5$ & $0.0$ & $8.0$ & $1.1$ & $230$ & $400$ & $361.3$ & $19149$ & $39409$ & $29521.9$ & $10331$ & $35863$ & $24778.8$ \\ 
 S500 & $55$ & $132$ & $90.4$ & $0.0$ & $7.0$ & $1.1$ & $271$ & $500$ & $454.4$ & $26729$ & $60640$ & $43725.3$ & $15185$ & $56490$ & $37529.6$ \\ 
 S600 & $67$ & $155$ & $105.3$ & $0.0$ & $9.0$ & $1.2$ & $365$ & $600$ & $546.5$ & $41102$ & $81496$ & $61248.3$ & $28874$ & $76436$ & $53676.6$ \\ 
 S700 & $65$ & $162$ & $118.5$ & $0.0$ & $9.0$ & $1.3$ & $396$ & $700$ & $641.9$ & $52098$ & $108000$ & $80778.1$ & $37230$ & $101646$ & $71714.5$ \\ 
 S800 & $77$ & $178$ & $131.4$ & $0.0$ & $8.0$ & $1.3$ & $402$ & $800$ & $732.9$ & $60915$ & $134559$ & $102579.6$ & $40623$ & $126963$ & $91993.6$ \\ 
 S900 & $94$ & $198$ & $144.5$ & $0.0$ & $8.0$ & $1.3$ & $491$ & $900$ & $824.0$ & $86785$ & $183870$ & $127944.3$ & $69419$ & $175936$ & $115909.3$ \\ 
 S1000 & $104$ & $208$ & $158.4$ & $0.0$ & $8.0$ & $1.4$ & $652$ & $1000$ & $921.2$ & $104836$ & $205341$ & $154730.8$ & $83732$ & $195439$ & $141113.6$ \\ 
 \hline\hline \end{tabular}} \caption{Results for rank-maximal stable matchings over various measures.} \label{sm_rm_res_RM} \end{table} 
 \end{landscape}

  \begin{landscape}
 \thispagestyle{empty}
\begin{table}[] \centerline{\begin{tabular}{ R{1cm} | R{0.9cm} R{0.9cm} R{0.9cm} R{1.5cm} R{0.9cm} R{0.9cm} R{1.5cm} R{0.9cm} R{0.9cm} R{1.8cm} R{1.4cm} R{1.4cm} R{1.7cm} R{1.4cm} R{1.4cm} }\hline\hline & \multicolumn{3}{c}{$f$} & \multicolumn{3}{c}{$l_{50}$} & \multicolumn{3}{c}{Degree} & \multicolumn{3}{c}{Cost} & \multicolumn{3}{c}{Sex-equal score} \\ 
Case & Min & Max & Mean & Min & Max & Mean & Min & Max & Mean & Min & Max & Mean & Min & Max & Mean \\ 
\hline S10 & $0$ & $12$ & $6.0$ & $0.0$ & $6.0$ & $2.4$ & $4$ & $10$ & $7.6$ & $40$ & $81$ & $58.8$ & $0$ & $41$ & $8.3$ \\ 
 S20 & $2$ & $17$ & $8.8$ & $0.0$ & $7.0$ & $2.4$ & $8$ & $20$ & $13.8$ & $113$ & $225$ & $170.0$ & $0$ & $93$ & $22.2$ \\ 
 S30 & $3$ & $20$ & $10.9$ & $0.0$ & $7.0$ & $2.2$ & $12$ & $30$ & $19.3$ & $243$ & $396$ & $317.2$ & $0$ & $161$ & $40.7$ \\ 
 S40 & $3$ & $21$ & $12.4$ & $0.0$ & $8.0$ & $1.8$ & $14$ & $40$ & $24.2$ & $397$ & $626$ & $492.8$ & $0$ & $294$ & $60.5$ \\ 
 S50 & $5$ & $28$ & $14.0$ & $0.0$ & $7.0$ & $1.4$ & $18$ & $49$ & $28.7$ & $567$ & $875$ & $691.0$ & $0$ & $398$ & $85.2$ \\ 
 S60 & $6$ & $28$ & $15.2$ & $0.0$ & $7.0$ & $1.2$ & $21$ & $56$ & $33.0$ & $730$ & $1105$ & $910.5$ & $0$ & $470$ & $106.9$ \\ 
 S70 & $6$ & $29$ & $16.8$ & $0.0$ & $6.0$ & $0.9$ & $24$ & $68$ & $37.0$ & $955$ & $1333$ & $1151.6$ & $0$ & $602$ & $140.0$ \\ 
 S80 & $7$ & $28$ & $17.8$ & $0.0$ & $5.0$ & $0.7$ & $24$ & $76$ & $40.6$ & $1164$ & $1683$ & $1411.4$ & $0$ & $670$ & $163.2$ \\ 
 S90 & $9$ & $32$ & $18.9$ & $0.0$ & $7.0$ & $0.5$ & $26$ & $75$ & $44.4$ & $1455$ & $1981$ & $1683.9$ & $0$ & $825$ & $186.4$ \\ 
 S100 & $8$ & $34$ & $20.0$ & $0.0$ & $5.0$ & $0.4$ & $30$ & $74$ & $47.8$ & $1704$ & $2276$ & $1974.6$ & $1$ & $951$ & $219.4$ \\ 
 S200 & $13$ & $55$ & $28.2$ & $0.0$ & $2.0$ & $0.0$ & $51$ & $119$ & $78.5$ & $4865$ & $6375$ & $5622.4$ & $1$ & $2428$ & $566.6$ \\ 
 S300 & $19$ & $55$ & $34.7$ & $0.0$ & $1.0$ & $0.0$ & $70$ & $165$ & $104.3$ & $9460$ & $12079$ & $10366.3$ & $2$ & $6095$ & $1015.7$ \\ 
 S400 & $22$ & $62$ & $40.0$ & $0.0$ & $1.0$ & $0.0$ & $86$ & $203$ & $126.3$ & $14802$ & $17981$ & $16005.4$ & $0$ & $7130$ & $1503.9$ \\ 
 S500 & $24$ & $76$ & $44.8$ & $0.0$ & $0.0$ & $0.0$ & $107$ & $223$ & $147.7$ & $20625$ & $24592$ & $22358.6$ & $2$ & $9214$ & $2056.5$ \\ 
 S600 & $30$ & $71$ & $49.4$ & $0.0$ & $0.0$ & $0.0$ & $124$ & $292$ & $165.9$ & $27514$ & $32564$ & $29406.9$ & $2$ & $11850$ & $2603.3$ \\ 
 S700 & $33$ & $80$ & $52.7$ & $0.0$ & $0.0$ & $0.0$ & $135$ & $264$ & $183.7$ & $35117$ & $39710$ & $37086.4$ & $2$ & $14023$ & $3201.3$ \\ 
 S800 & $35$ & $83$ & $57.3$ & $0.0$ & $0.0$ & $0.0$ & $145$ & $291$ & $200.3$ & $42579$ & $49508$ & $45308.3$ & $0$ & $19508$ & $3883.3$ \\ 
 S900 & $30$ & $81$ & $59.5$ & $0.0$ & $0.0$ & $0.0$ & $160$ & $305$ & $216.4$ & $50957$ & $59447$ & $54104.8$ & $2$ & $22087$ & $4693.8$ \\ 
 S1000 & $40$ & $89$ & $63.5$ & $0.0$ & $0.0$ & $0.0$ & $176$ & $350$ & $230.6$ & $60181$ & $69426$ & $63364.8$ & $2$ & $26076$ & $5163.1$ \\ 
 \hline\hline \end{tabular}} \caption{Results for generous stable matchings over various measures.} \label{sm_rm_res_GEN} \end{table} 
 \end{landscape}

   \begin{landscape}
 \thispagestyle{empty}
\begin{table}[] \centerline{\begin{tabular}{ R{1cm} | R{0.9cm} R{0.9cm} R{0.9cm} R{1.5cm} R{0.9cm} R{0.9cm} R{1.5cm} R{0.9cm} R{0.9cm} R{1.8cm} R{1.4cm} R{1.4cm} R{1.7cm} R{1.4cm} R{1.4cm} }\hline\hline & \multicolumn{3}{c}{$f$} & \multicolumn{3}{c}{$l_{20}$} & \multicolumn{3}{c}{Degree} & \multicolumn{3}{c}{Cost} & \multicolumn{3}{c}{Sex-equal score} \\ 
Case & Min & Max & Mean & Min & Max & Mean & Min & Max & Mean & Min & Max & Mean & Min & Max & Mean \\ 
\hline S10 & $0$ & $12$ & $6.1$ & $0.0$ & $3.0$ & $0.5$ & $4$ & $10$ & $8.2$ & $40$ & $79$ & $59.9$ & $0$ & $41$ & $9.5$ \\ 
 S20 & $2$ & $18$ & $8.9$ & $0.0$ & $4.0$ & $0.4$ & $8$ & $20$ & $15.3$ & $113$ & $224$ & $173.7$ & $0$ & $121$ & $27.3$ \\ 
 S30 & $4$ & $20$ & $11.0$ & $0.0$ & $5.0$ & $0.3$ & $12$ & $30$ & $22.0$ & $243$ & $416$ & $323.9$ & $0$ & $228$ & $50.8$ \\ 
 S40 & $3$ & $22$ & $12.6$ & $0.0$ & $3.0$ & $0.3$ & $17$ & $40$ & $28.2$ & $402$ & $693$ & $504.4$ & $0$ & $403$ & $83.8$ \\ 
 S50 & $5$ & $25$ & $14.2$ & $0.0$ & $5.0$ & $0.2$ & $19$ & $50$ & $33.9$ & $570$ & $924$ & $709.6$ & $0$ & $537$ & $123.9$ \\ 
 S60 & $6$ & $30$ & $15.6$ & $0.0$ & $3.0$ & $0.2$ & $23$ & $60$ & $39.8$ & $756$ & $1232$ & $938.0$ & $0$ & $736$ & $173.1$ \\ 
 S70 & $8$ & $30$ & $17.0$ & $0.0$ & $2.0$ & $0.2$ & $25$ & $70$ & $45.0$ & $970$ & $1487$ & $1186.7$ & $0$ & $973$ & $222.8$ \\ 
 S80 & $7$ & $30$ & $18.2$ & $0.0$ & $2.0$ & $0.1$ & $28$ & $80$ & $50.0$ & $1164$ & $1858$ & $1457.6$ & $0$ & $1171$ & $280.2$ \\ 
 S90 & $9$ & $32$ & $19.5$ & $0.0$ & $4.0$ & $0.1$ & $31$ & $90$ & $55.4$ & $1447$ & $2484$ & $1744.8$ & $1$ & $1850$ & $355.7$ \\ 
 S100 & $7$ & $34$ & $20.5$ & $0.0$ & $2.0$ & $0.1$ & $34$ & $100$ & $60.5$ & $1663$ & $2604$ & $2045.8$ & $1$ & $1553$ & $415.6$ \\ 
 S200 & $14$ & $52$ & $29.5$ & $0.0$ & $2.0$ & $0.1$ & $57$ & $199$ & $105.2$ & $5006$ & $8226$ & $5917.3$ & $1$ & $6108$ & $1477.0$ \\ 
 S300 & $21$ & $69$ & $36.8$ & $0.0$ & $3.0$ & $0.0$ & $79$ & $294$ & $144.4$ & $9541$ & $17224$ & $11046.9$ & $4$ & $13374$ & $3115.2$ \\ 
 S400 & $21$ & $72$ & $42.8$ & $0.0$ & $3.0$ & $0.0$ & $99$ & $393$ & $178.1$ & $14934$ & $26115$ & $17149.7$ & $17$ & $20951$ & $5052.7$ \\ 
 S500 & $25$ & $99$ & $48.3$ & $0.0$ & $9.0$ & $0.0$ & $110$ & $496$ & $212.0$ & $20725$ & $48487$ & $24257.5$ & $15$ & $43025$ & $7752.5$ \\ 
 S600 & $31$ & $102$ & $53.5$ & $0.0$ & $4.0$ & $0.0$ & $127$ & $595$ & $244.3$ & $28028$ & $58399$ & $32052.6$ & $14$ & $50581$ & $10459.2$ \\ 
 S700 & $32$ & $113$ & $58.0$ & $0.0$ & $2.0$ & $0.0$ & $141$ & $665$ & $276.7$ & $35634$ & $71039$ & $40774.9$ & $34$ & $60999$ & $13863.8$ \\ 
 S800 & $37$ & $129$ & $63.0$ & $0.0$ & $3.0$ & $0.0$ & $162$ & $797$ & $304.0$ & $42713$ & $105477$ & $50215.3$ & $10$ & $95243$ & $18093.7$ \\ 
 S900 & $36$ & $136$ & $66.0$ & $0.0$ & $2.0$ & $0.0$ & $174$ & $833$ & $331.2$ & $51568$ & $117599$ & $60037.6$ & $46$ & $104663$ & $21334.7$ \\ 
 S1000 & $44$ & $163$ & $71.5$ & $0.0$ & $3.0$ & $0.0$ & $191$ & $975$ & $362.5$ & $60270$ & $155476$ & $71456.3$ & $11$ & $142570$ & $27270.2$ \\ 
 \hline\hline \end{tabular}} \caption{Results for median stable matchings over various measures.} \label{sm_rm_res_GM} \end{table} 
 \end{landscape}

  \begin{landscape}
  \thispagestyle{empty}
\begin{table}[] \centerline{\begin{tabular}{ R{1.2cm} | R{1.5cm} R{2.5cm} R{2cm} R{2cm} R{2cm} R{2.5cm} R{2cm} R{2cm} R{2cm} }\hline\hline && \multicolumn{4}{c}{Exponential weight} & \multicolumn{4}{c}{Vector-based weight} \\ 
Case & $N_I$ & Mean & Median & $5$th & $95$th & Mean & Median & $5$th & $95$th \\ 
\hline S10 & $821$ & $124.9$ & $116.0$ & $43.0$ & $249.0$ & $243.1$ & $228.0$ & $126.0$ & $424.0$ \\ 
 S20 & $970$ & $447.4$ & $420.0$ & $99.0$ & $870.6$ & $573.2$ & $548.0$ & $204.0$ & $1028.0$ \\ 
 S30 & $992$ & $1039.3$ & $1007.0$ & $303.3$ & $1853.3$ & $964.1$ & $952.0$ & $380.0$ & $1589.8$ \\ 
 S40 & $999$ & $1945.7$ & $1905.0$ & $641.0$ & $3294.2$ & $1582.7$ & $1580.0$ & $634.8$ & $2514.4$ \\ 
 S50 & $1000$ & $3107.8$ & $3071.5$ & $1241.8$ & $4959.9$ & $2145.9$ & $2138.0$ & $1035.4$ & $3248.3$ \\ 
 S60 & $1000$ & $4575.3$ & $4534.0$ & $2021.5$ & $7405.2$ & $2733.3$ & $2708.0$ & $1464.0$ & $4168.1$ \\ 
 S70 & $1000$ & $6578.8$ & $6552.5$ & $3251.9$ & $9954.4$ & $3693.3$ & $3662.0$ & $1978.6$ & $5504.5$ \\ 
 S80 & $1000$ & $8757.7$ & $8792.0$ & $4606.9$ & $13191.3$ & $4588.7$ & $4560.0$ & $2639.4$ & $6512.8$ \\ 
 S90 & $1000$ & $11142.1$ & $11133.0$ & $5852.9$ & $16843.2$ & $5332.8$ & $5336.0$ & $3085.6$ & $7521.6$ \\ 
 S100 & $1000$ & $14017.8$ & $14017.0$ & $7619.8$ & $20304.6$ & $6178.5$ & $6160.0$ & $3824.0$ & $8480.8$ \\ 
 S200 & $1000$ & $59602.8$ & $60321.5$ & $36976.8$ & $80033.2$ & $17044.9$ & $17140.0$ & $12061.4$ & $21662.1$ \\ 
 S300 & $1000$ & $136097.0$ & $137972.5$ & $90976.8$ & $177193.6$ & $31607.3$ & $31892.0$ & $23001.8$ & $39012.4$ \\ 
 S400 & $1000$ & $244864.3$ & $248596.0$ & $173835.2$ & $307019.8$ & $46661.3$ & $46880.0$ & $37165.4$ & $54896.6$ \\ 
 S500 & $1000$ & $378996.3$ & $381381.5$ & $274100.7$ & $472220.2$ & $61405.3$ & $61654.0$ & $49173.8$ & $72612.8$ \\ 
 S600 & $1000$ & $546143.5$ & $549124.5$ & $397475.9$ & $670184.7$ & $84767.9$ & $85786.0$ & $67468.6$ & $99364.5$ \\ 
 S700 & $1000$ & $737304.2$ & $742278.5$ & $565928.8$ & $892021.2$ & $103445.1$ & $103565.0$ & $86707.3$ & $119857.8$ \\ 
 S800 & $999$ & $945159.5$ & $954683.0$ & $704833.1$ & $1149060.4$ & $121385.7$ & $121794.0$ & $101562.0$ & $139396.8$ \\ 
 S900 & $1000$ & $1194184.0$ & $1202831.0$ & $935470.5$ & $1432438.2$ & $141469.8$ & $141310.0$ & $121259.1$ & $160431.9$ \\ 
 S1000 & $999$ & $1472310.9$ & $1488288.0$ & $1126992.0$ & $1749723.5$ & $161565.1$ & $162486.0$ & $136646.2$ & $181580.0$ \\ 
 S2000 & $5$ & $5634355.8$ & $5817101.0$ & $4822710.8$ & $6399964.2$ & $417891.2$ & $414424.0$ & $399376.0$ & $443734.4$ \\ 
 S3000 & $5$ & $11823955.4$ & $11868313.0$ & $11037121.2$ & $12496744.2$ & $738962.0$ & $735218.0$ & $696282.8$ & $776654.0$ \\ 
 S4000 & $5$ & $19613652.4$ & $18888936.0$ & $18401424.6$ & $21468356.2$ & $1029854.0$ & $1029180.0$ & $980168.8$ & $1071605.2$ \\ 
 S5000 & $3$ & $30767743.3$ & $30472468.0$ & $30221465.2$ & $31520714.2$ & $1434044.0$ & $1430948.0$ & $1424421.2$ & $1445834.0$ \\ 
 \hline\hline \end{tabular}} \caption[Comparison of the minimum number of bits required to store edge capacities of a network and vb-network.]{Comparison of the minimum number of bits required to store edge capacities of a network (exponential weight edge capacities) and vb-network (vector-based weight edge capacities). In this table, $5$th and $95$th refer to the $5$th and $95$th percentiles respectively, and $N_I$ denotes the number of instances that did not timeout and had at least one rotation, and were thus used in space requirement calculations.} \label{sm_rm_space_table} \end{table} 
 \end{landscape}

\end{document}